\documentclass[a4paper,12pt]{article}
\usepackage[utf8]{inputenc}

\usepackage{amsmath}
\usepackage{amsfonts}
\usepackage{amsthm}
    \newtheorem{lemma}{Lemma}

\usepackage{csquotes}

\usepackage[style=authoryear,maxcitenames=2,maxbibnames=99,uniquelist=false,backend=biber]{biblatex}
\usepackage{titlesec}

\usepackage{graphicx}
\usepackage[export]{adjustbox}
\usepackage{hyperref}
\titlelabel{\thetitle.\quad}
\usepackage{multirow}
\usepackage[figurename=Fig.,labelsep=endash,labelfont=bf]{caption}
\usepackage{longtable}
\usepackage{adjustbox}
\usepackage{caption}

\usepackage{appendix}
\providecommand{\keywords}[1]
{
	\small      
	\textbf{\textit{Keywords:}} #1
}
\providecommand{\JELcodes}[1]
{
	\small      
	\textbf{\textit{JEL codes:}} #1
}
\usepackage{authblk}
\newtheorem{proposition}{Proposition}
\newtheorem*{remark}{Remark}
\title{\LARGE\bf Nonlinear fiscal multiplier controlling for policy and economic regimes}
\author{Szymon P. Chudziak$^{1,2}$, PhD}
\affil{\footnotesize{$^1$Polish Economic Institute, \\ szymon.chudziak@pie.net.pl \\
		$^2$Department of Economics I, Collegium of Economic Analysis, SGH Warsaw School of Economics \\ schudz@sgh.waw.pl}}
\date{\footnotesize \today}

\begin{document}
	\maketitle
	\begin{abstract}
		A new form of the fiscal multiplier suited to data expressed as gross growth rates is proposed, together with a structural vector autoregression that isolates discretionary fiscal policy from regime and rule-based components. Because cumulating growth-rate responses over a horizon involves a product rather than a sum, the proposed multiplier is multiplicative: both numerator and denominator are polynomials in the size of the fiscal impulse, so the multiplier varies nonlinearly with the scale of the expenditure programme -- a dependence that conventional additive multipliers cannot capture. The multiplier requires no discounting, no ad hoc conversion factor, and no auxiliary estimate of an average expenditure share of output, and it is computed directly from nominal, non-seasonally-adjusted data, matching the terms in which government budgets are actually expressed. A version of the narrative approach to identification is introduced. It allows to decompose the influence of fiscal policy into a regime component, a standard policy rule embedded in the autoregressive structure, and discretionary shocks. Removing the regime component from the residuals yields smaller identified shocks, and correspondingly smaller multipliers, than found in SVARs not controlling for regime changes -- closing much of the gap between SVAR- and DSGE-based estimates. It is found that multipliers vary substantially across countries and across types of government expenditure, that regime volatility itself differs markedly by country, and that several economies display cyclicality in spending and multiplier paths that only becomes visible once responses are cumulated multiplicatively. The resulting multipliers are nominal, derived from unadjusted data, proven to be sensitive to shock size, and purged of regime effects, offering a tool with direct relevance for budget planning and fiscal policy evaluation.
	\end{abstract}
	
	\noindent{\footnotesize \JELcodes{E62, H5}}
	
	\noindent{\footnotesize \keywords{fiscal multiplier, fiscal policy, regimes, narrative identification}}
	
	\section{Introduction}\label{s1}

Empirical literature on fiscal policy has produced a remarkably wide range of estimated fiscal multipliers, obtained by a correspondingly wide range of methods. Part of this dispersion is a matter of measurement conventions, part of it is a matter of identification, and part of it is actual cross-country heterogeneity. Underlying all three is a question that is rarely posed explicitly: how exactly is fiscal policy accounted for in a structural vector autoregression? Is a single equation for government expenditure sufficient to represent it?

Different countries have distinct structures of their economies, and therefore different dynamic responses to innovations. These reactions reflect structural changes and characteristics of a given economy that are uncontrolled, or not fully controlled, in the estimated model. Policy changes are likely to affect the estimates, given that they are non-random and are not captured by the autoregressive part of the VAR. It is unclear, however, how exactly they affect the results of an estimation. Fiscal policy of any country has cyclical and noncyclical components, and the question is whether this distinction can be made visible in the estimated multipliers.

A second question concerns the form of the multiplier itself. Additive multipliers are, by construction, linear in the numerator and in the denominator, so that they do not depend on the size of the shock. Researchers often use a discounted, real version of the multiplier. Governments, however, need nominal versions for the evaluation of policy effects and of their impact on the budget, and it is not obvious which discount rates should be applied. Conversely to the so-called policy-relevant multipliers of \textcite{mountford_what_2009}, calculated as ratios of present discounted value of impulse responses, governments need nominal reaction functions and nominal multipliers, because they have to operate with a nominal budget under unknown future inflation. Moreover, it is not clear what interest rates, or what measure of inflation, should be used in such a case: the GDP deflator, or an index tailored to the sectoral characteristics of government expenditure. If calculated multipliers are to be useful for policy purposes, they would ideally be expressed in nominal terms.

A similar issue is the use of detrended and seasonally adjusted data for estimation. Such data is deprived of a great deal of information on the actual dynamics of a time series. In the context of fiscal policy, this means inaccurate predictions and inference. It would therefore be beneficial to obtain credible results using nonfiltered data. Levels cannot be used, because they are nonstationary, but gross growth rates can, and they have the additional property that they can be used directly to reproduce levels.

Another issue related to the large range of estimates of multipliers in economic studies is the potential dependence of the size of a multiplier on the type of government spending. Consumption of the government may have different effects on output in the long run than aggregate government expenditure, and defence spending may be less efficient in stimulating output growth.

This paper contributes to the literature in the following ways. First, a new, multiplicative, form of the multiplier is proposed, consistent with the form of the data used. The proposed multiplier makes it possible to discover a nonlinear dependence between the size of the multiplier and the size of the shock to government expenditure, and it is the appropriate form when the model is estimated on gross growth rates. Second, controlling for policy and economic regimes in an SVAR allows to decompose the part of fiscal policy that is not captured by the autoregression into a nondiscretionary part, represented by appropriate intervention variables, and a discretionary part, that is, the structural shocks. The identified shocks are therefore smaller than in the usual SVARs, which do not control for regime changes. This lowers the estimates of multipliers and closes the gap between the values obtained from SVAR and from DSGE models: the fiscal multipliers reported here are in most cases smaller than one, and rarely cross one. Third, different types of government expenditure are shown to have unalike multipliers. Fourth, distinct economies are characterised by distinct values of fiscal multipliers, which is shown for the six largest economies of the European Union. Fifth, the approach is robust when used with non-adjusted data in the form of gross growth rates, so that it may form a basis for empirical tools with direct real-world policy relevance, as governments operate with budgets expressed in nominal terms. Sixth, it is argued that the entire estimated model, not only impulse responses and the derived multipliers, characterises economic policy and it is best to evaluate it alongside the dynamic multipliers.

This paper is organised as follows. Section \ref{s2} contains the discussion of the existing ways of measuring fiscal multipliers and the difficulties associated with them. The model, the intervention variables that control for policy and economic regimes, and the proposed multiplicative multiplier are presented in section \ref{s3}. Results are reported in section \ref{s4}, while their discussion is located in section \ref{s5}, followed by conclusions in section \ref{s6}. Appendix \ref{appndx} contains some additional results.

\section{Measuring fiscal multipliers}\label{s2}

Several distinct objects are called fiscal multipliers in the literature, and they are calculated in different ways. One of them is the quasi-multiplier in the sense of \textcite{blanchard_empirical_2002}. Another is the additive multiplier, described as the true multiplier by \textcite{ramey_ten_2019}; it has a linear numerator and a linear denominator -- sums of the elements of appropriate impulse response functions. This formulation is therefore independent of the size of the shock.

Researchers often use a discounted, real version of the multiplier, as in \textcite{mountford_what_2009}, and as reported in the discussion of the literature by \textcite{ramey_ten_2019}, whereas governments need a nominal one, for the reasons already discussed in the introduction. \textcite{mountford_what_2009} moved the literature forward by introducing policy-relevant multipliers, calculated as the present discounted value of the output response over time divided by the present discounted value of the government spending response over time to the shock. In most applications, different interest rates used for this present discounted value, including a zero discount rate, give nearly identical multipliers, because the timing of the government spending and output responses is very similar; these multipliers are often known as present value or cumulative multipliers. A difference of even a few percentage points is, however, relevant in the context of a government budget and of fiscal policy. It can amount to many millions of a country's currency, which makes it relevant for government budget planning and policy evaluation.

A related practice criticised by \textcite{ramey_ten_2019} is the conversion of elasticities into multipliers with an ad hoc conversion factor, the average of $Y/G$ over the sample. In \textcite{owyang_are_2013} biases that could arise from this practice were identified; in their historical sample, $Y/G$ varied significantly. \textcite{sims_state-dependent_2018,sims_output_2018} also found that this practice tends to bias multipliers differentially, making them seem much higher during recessions.

As \textcite{sims_output_2018} observe, in the empirical literature it is common not to estimate multipliers directly, but to first estimate the elasticity of output with respect to tax revenue and then to transform it by multiplying by the inverse of the average tax revenue share of output, as in \textcite{blanchard_empirical_2002} and \textcite{mountford_what_2009}. A criticism that \textcite{ramey_government_2018} make of the existing empirical literature on state-dependent multipliers concerns precisely this conversion of elasticities into multipliers. Most empirical work uses logs of variables and estimates the elasticity of output with respect to government spending; this elasticity is then converted into a multiplier by post-multiplying by the inverse of the average government spending share of output. \textcite{ramey_government_2018} argue that this approach is likely to make the output multiplier artificially high in recessions, because the government spending share of output is countercyclical. The analysis of \textcite{sims_output_2018} shows the quantitative relevance of this criticism. They argue that converting elasticities into multipliers using a fixed government spending share would result in the output multiplier varying between 0.8 and 1.1 across states, which they regard as incorrect in their context, and they argue that the multiplier is almost constant when monetary policy is characterised by a Taylor rule.

Ramey's conclusion is that a specification in logarithms requires the ad hoc conversion factor. It is far from clear, however, why this should lead to replacing the logarithms of government spending, GDP, and taxes with the ratios of each of those variables to the \textcite{ramey_government_2018} polynomial trend estimate of potential GDP \parencite{ramey_ten_2019}. Such ratios are not structural variables, and it is not evident what the shocks mean in a model formulated in these terms.

Even cumulative multipliers do not fully reflect the consequences for the government budget. If an increase in government spending raises GDP, then a rise in tax revenues is to be expected. Thus, even without an exogenous increase in tax rates, the government budget deficit should be expected to rise by less than the total amount of government spending. This is one of the reasons why the model proposed in this paper controls for tax and fiscal policy regimes directly.

A further strand of the literature concerns fiscal foresight. Most of the literature tries to address anticipation whenever feasible, either by constructing measures of news, from narratives or from bond spreads, or by including professional forecasts of government spending in order to mitigate the problem. \textcite{ramey_identifying_2011} provides the clearest demonstration of why this matters for government spending. She shows that standard Cholesky-identified VARs and the \textcite{ramey_costly_1998} narrative war-dates approach give opposed answers about the response of consumption and real wages to government spending shocks, and that the reconciling factor is timing rather than the underlying economic mechanism: both the war-dates variable and professional forecasters' predictions Granger-cause the VAR-identified spending shocks, which shows that the VAR shocks are simply picking up the news several quarters too late. Motivated by this finding, \textcite{ramey_identifying_2011} constructed a continuous narrative measure of the expected present discounted value of changes in defense spending running from 1939 to 2008, together with a second, forecast-error-based measure for the post-Korean-War period. Once government spending shocks are dated at the point the news arrives rather than at the point spending actually changes, consumption, the real wage, and investment fall on impact, as the neoclassical model predicts, and the implied government spending multiplier lies between 0.6 and 1.2 depending on whether the two world wars are included in the sample. Artificially delaying the timing of the narrative dates is shown to reproduce the Keynesian-looking VAR results, which confirms that timing, not identification scheme per se, drives the earlier disagreement in the literature. This paper is therefore a central piece of evidence for the argument, made throughout the present review, that failing to control for the timing of news about future fiscal actions -- rather than the choice between SVAR and narrative methods as such -- is what generates much of the disagreement over the sign and size of fiscal effects.

It is worth asking, however, what anticipation is, and what the possibility of estimating a correlation between some variables and future shocks amounts to. Fiscal spending is usually directed, and therefore reaches only a fraction of the economy, so it is not clear how firms or households in general could prepare for it. If they accumulate funds in order to start spending more, this is the effect of the shock anyway. The cases in which agents do start spending before a programme has begun are those described by \textcite{house_phased-tax_2006} and \textcite{mertens_dynamic_2013}, who showed that while unanticipated tax cuts have expansionary effects on output, phased-in tax cuts depress output during the phase-in period, because firms and consumers delay their activity until tax rates are lower. This is exactly what can be controlled by means of the fiscal regime approach proposed in this paper.

The narrative approach to identifying policy shocks by purging a policy variable of its systematically forecastable component is not confined to fiscal policy. \textcite{romer_new_2004} apply essentially the same logic to monetary policy: they combine the narrative record of FOMC deliberations with the Federal Reserve's internal Greenbook forecasts to construct a measure of the intended federal funds rate that is free of both the endogenous within-meeting response of the funds rate to current conditions and the anticipatory response of policy to the Federal Reserve's own forecasts of future output and inflation. The resulting shock series produces output and price responses to monetary policy that are considerably larger and, for prices, faster than those obtained from the actual funds rate, and the conventional "price puzzle" found with the raw funds rate disappears even without controlling for commodity prices. This is direct, cross-domain evidence for the more general point made in this review: policy variables that are not first purged of their systematic, forecastable component -- whether because of automatic feedback rules or because policymakers act on private information about the future -- yield biased and attenuated estimates of policy effects, and narrative methods that use the policymaker's own information set to construct the purified shock are one of the few ways of addressing this problem convincingly.

\textcite{romer_macroeconomic_2010} apply the same narrative logic to tax policy in a way that is close in spirit to, but methodologically independent of, the approach of Ramey and Shapiro. Rather than looking for a small number of large, discrete natural experiments, they use the full postwar legislative record -- presidential speeches, the Economic Report of the President, and Congressional committee reports -- to classify every significant postwar legislated U.S. tax change according to its stated motivation. Tax changes taken to offset a change in spending or to counteract other current or prospective developments are classified as endogenous, while tax changes taken to reduce an inherited budget deficit or to promote long-run growth are classified as exogenous, on the grounds that these latter motivations are unlikely to be systematically correlated with other determinants of output growth. The resulting series of 54 exogenous quarterly tax changes is shown not to be Granger-caused by output, government spending, oil prices, or (with the exception of a few observations around the 1981 tax cuts) measures of monetary policy shocks, which supports the claim that the classification succeeds in isolating changes that are plausibly uncorrelated with the error term in an output regression. 

Their main result is that an exogenous tax increase of 1 percent of GDP lowers output by close to 3 percent within about two years, an effect that is far larger and more persistent than the one obtained when the same regressions are run using broader tax measures such as legislated tax changes as a whole or the change in cyclically adjusted revenues, both of which include tax actions that are correlated with other determinants of output. The output effects are shown to operate mainly through investment rather than consumption, to be tied more closely to the date taxes actually change than to the date of legislative passage, and to have weakened somewhat after 1980, a pattern the authors partly attribute to a more aggressive monetary policy response to fiscal shocks in the more recent period. This finding -- extremely large and persistent multipliers for tax changes identified narratively -- is precisely the empirical counterpart to the tension noted at the end of this review: very high narrative-based tax multipliers of this kind are difficult to reconcile with the modest government spending multipliers reported elsewhere in the literature unless the two are modelled jointly, along with the associated tax and fiscal policy regime, rather than estimated as if they were independent objects.

Any analysis that seeks to measure a causal effect must confront identification issues. An example of the problem that arises here is that if governments increase spending in response to a recession, then the simple correlation between government spending and GDP will confound the positive causal effect of government spending on GDP with the negative causal effect of GDP on government spending. In the past, the standard macroeconomic approach used to extract the exogenous rise in government spending was a structural vector autoregression. In most applications, this approach is based on the assumption that the exogenous part of government spending is the part not forecast by lagged values of spending, GDP, and taxes. Alternatively, in order to identify exogenous movements in taxes, \textcite{blanchard_empirical_2002} used external estimates of the elasticity of tax revenue to income, which allowed identification of the component of taxes that was not induced by movements in GDP. Several papers have highlighted potential problems with these widely used methods. First, failing to account for fiscal foresight could lead to biased estimates. Second, the tax multiplier estimates were very sensitive to the value of the external tax elasticity estimate used, as shown for example by \textcite{mertens_fiscal_2014} and \textcite{caldara_analytics_2017}. These concerns led to the development of other identification methods, using natural experiments and narrative methods, and, as a result, the standard SVAR identification approach is no longer the first resort in the literature on fiscal multipliers. The approach proposed here belongs to the narrative family, while retaining the SVAR framework.

A further alternative to the Cholesky-ordering SVAR is identification via restrictions, which imposes only the qualitative comovements that theory predicts on impact, rather than a specific recursive ordering or a narrative date. \textcite{pappa_effects_2009} applied this approach to the question of how fiscal shocks are transmitted to employment and the real wage, restricting government consumption, investment, and employment shocks to raise output and the budget deficit on impact -- restrictions that hold in both a real business cycle and a New Keynesian version of her model -- and then tracing out the resulting dynamics for the labour market using both aggregate U.S. data and a panel of U.S. states. She finds that shocks to government consumption and investment raise both real wages and employment contemporaneously, in the aggregate and in the large majority of states, a pattern that is difficult to reconcile with the standard neoclassical prediction of a negative wealth effect on the real wage, though it is broadly consistent with sticky-price New Keynesian transmission. The results for government employment shocks are more mixed: while aggregate employment and the real wage both rise, in roughly one third of U.S. states an increase in government employment actually crowds out private employment. This heterogeneity across the components of spending, and across identification schemes, reinforces the point made throughout this review that estimated multipliers -- and the sign of the associated real-wage response -- can differ enormously depending on both what kind of spending shock is examined and how it is identified, apart from the state-dependence and foresight issues discussed above.

Despite its popularity in the fiscal and monetary VAR literatures alike -- \textcite{baumeister_drawing_2020} counted close to a hundred studies that identify structural shocks solely, or primarily, from sign restrictions, with \textcite{pappa_effects_2009} itself appearing on that list -- this paper does not adopt sign-restriction identification, for reasons set out in a sequence of papers by \textcite{baumeister_sign_2015}, \textcite{baumeister_inference_2018}, and \textcite{baumeister_drawing_2020}. Their central point is that an assumption about signs is, by construction, not enough to identify a structural model: it only ever delivers a set of admissible structural parameters, all of which fit the reduced-form data and the sign restrictions equally well, so that the object actually recovered is what the set-identification literature calls the identified set rather than a single structural VAR. \textcite{baumeister_sign_2015} show analytically that the standard algorithm for generating draws from this set -- drawing an orthonormal rotation matrix from the uniform Haar distribution and combining it with the Cholesky factor of the reduced-form covariance matrix -- is not, as commonly believed, an uninformative or agnostic procedure. Because impulse responses and elasticities are nonlinear functions of the rotation angle, a uniform prior on the angle induces a highly non-uniform, often Cauchy-shaped, prior on the impulse responses and structural elasticities themselves; in an application to short-run labour demand and supply elasticities they show this implicit prior can favour extreme values over economically plausible ones, or vice versa, purely as an artefact of the number of variables in the system. Crucially, they demonstrate that this influence of the unacknowledged prior does not vanish even asymptotically: with an infinite sample, the posterior for a sign-identified structural parameter is confined to the identified set, and within that set is simply proportional to the prior, so that no amount of data can ever substitute for economically defensible identifying information once a model is only set- rather than point-identified.

\textcite{baumeister_inference_2018} sharpened this critique further by showing, using a three-equation monetary VAR, that a researcher who imposes no sign restrictions whatsoever -- keeping every single candidate rotation -- can still generate what looks like a sensible median impulse response with a tight 68\% credible band purely from the randomness of the algorithm's own random-number generator, with no information from the data or from economic theory playing any role at all; adding conventional sign restrictions merely truncates this same manufactured distribution rather than genuinely sharpening it. They show that reporting the full identified set, rather than its median or a credible interval carved arbitrarily out of it, is often strikingly uninformative -- the identified set for the effect of a monetary contraction on several variables of interest is shown to include both zero and values of either sign at short horizons -- which reveals how little the sign restrictions alone are actually pinning down. Their proposed remedy, followed up in \textcite{baumeister_drawing_2020} in the specific context of a study of U.S. monetary policy and emerging-market capital flows that combines sign and zero restrictions, is not to abandon Bayesian methods but to make the prior distribution over the structural parameters explicit, economically motivated, and defensible, rather than allowing an arbitrary and mathematically incidental feature of the estimation algorithm (the Haar measure on rotation matrices) to masquerade as an assumption-free identification scheme.

These results are directly relevant to the identification choices made in this paper. Since fiscal shocks of the kind studied here are, at best, only partially identified by plausible sign restrictions on the comovement of output, spending, deficits, and taxes the \textcite{baumeister_sign_2015,baumeister_inference_2018,baumeister_drawing_2020} results imply that any point estimates or credible sets reported from such a model would necessarily be driven substantially by an unacknowledged and mathematically arbitrary prior embedded in the standard rotation-based algorithm, rather than by the data or by the sign restrictions themselves. Rather than adopt this approach and then have to defend an implicit prior that was never chosen deliberately, the fiscal and tax regimes used for identification in this paper are constructed instead, using the narrative and regime-based information discussed above, which corresponds to the kind of explicit, economically motivated, and openly defended identifying information that \textcite{baumeister_sign_2015} and \textcite{baumeister_inference_2018} argued is the only sound basis for structural inference once a model is not fully identified by hard zero or timing restrictions.

A closely related body of work asks whether the size of the spending multiplier itself depends on the state of the business cycle. \textcite{auerbach_measuring_2012} estimated a smooth-transition regime-switching VAR that allows both the contemporaneous covariance matrix and the lag polynomial to differ between recessions and expansions, and report output multipliers that are far larger in recessions (of the order of 1 to 1.5, and as high as 2.5 at long horizons for some specifications) than in expansions (close to zero or even negative), a difference that becomes still more pronounced once the government spending series is purged of the component predicted by real-time Survey of Professional Forecasters and Federal Reserve Greenbook forecasts. \textcite{ramey_can_2011} surveyed this and the broader multiplier literature and concludes, after reviewing both the theoretical range implied by neoclassical and Keynesian models and the empirical estimates available at the time, that the plausible range for the multiplier on a temporary, deficit-financed increase in government purchases is between 0.8 and 1.5, while noting that the evidence for materially larger multipliers during recessions is itself sensitive to the assumption, built into regime-switching VARs such as that of \textcite{auerbach_measuring_2012}, that the economy remains in the same regime for the entire horizon over which the multiplier is computed.

\textcite{ramey_government_2018} took up this critique directly: using a purpose-built quarterly U.S. dataset extending back to 1889 and Jordà local-projection methods that do not require this constant-regime assumption, they find no evidence that government spending multipliers are higher in periods of economic slack -- estimated multipliers lie mostly between 0.3 and 0.8 in both high- and low-unemployment states -- and only mixed evidence of multipliers above unity near the zero lower bound, mainly once the rationing period of World War II is excluded from the sample. They further showed that when the \textcite{auerbach_measuring_2012} smooth-transition VAR is re-estimated allowing the regime indicator to evolve endogenously with the size and persistence of the shock, rather than being held fixed at its initial value, the estimated recession multiplier falls back close to unity, which accounts for much of the difference between the two sets of results. Taken together, these three papers illustrate that apparent state-dependence in the fiscal multiplier is at least partly an artefact of how impulse responses are constructed from a nonlinear model, a methodological warning that parallels the timing-related sensitivity documented by \textcite{ramey_identifying_2011} for the simple linear VAR case.

A different and, for the purposes of this paper, more directly relevant notion of regime-dependence concerns not the business cycle but the prevailing monetary-fiscal policy mix itself. \textcite{mumtaz_fiscal_2020} estimated a battery of SVARs for the United States -- a recursive scheme, the Blanchard-Perotti elasticity-based scheme, a Mountford-Uhlig sign-restricted VAR, and a proxy SVAR that uses the \textcite{mertens_empirical_2012} tax instrument and the \textcite{ramey_identifying_2011} defense news series as external instruments for tax and spending shocks respectively -- and show, across every one of these identification schemes, that the response of real stock prices to government spending and tax shocks changed sign around 1980. This finding survives a long list of robustness checks. To explain the shift, they estimate a New Keynesian model augmented with a fiscal sector over the same two subsamples and classify the estimated policy rules using the \textcite{leeper_equilibria_1991} active/passive taxonomy: prior to 1980 fiscal policy is estimated to be active and monetary policy passive, so that a spending shock raises inflation expectations, lowers the real interest rate, and thereby raises consumption, investment, and equity prices, generating a multiplier above unity; after 1980 the regime flips to active monetary and passive fiscal policy, so that the same spending shock now raises the real rate and depresses both private demand and equity prices, generating a much smaller multiplier. 

This is direct structural evidence that the sign and size of the same government spending shock's effects depend on the prevailing fiscal-monetary regime rather than being a stable structural constant, which reinforces the case made throughout this review -- and the specific choice made in the model proposed here -- for identifying fiscal shocks jointly with the tax and monetary-fiscal policy regime in place, rather than estimating a single multiplier and treating regime shifts as a robustness check to be run after the fact. It is also worth noting that one of the four identification schemes used by \textcite{mumtaz_fiscal_2020}, the Mountford-Uhlig sign-restriction VAR, is itself subject to the \textcite{baumeister_sign_2015,baumeister_inference_2018} critique discussed above; the fact that their conclusions are corroborated by the narrative-instrument and elasticity-based schemes as well is therefore important, since it is these latter, more fully identified schemes -- rather than the sign-restricted one -- that carry the real evidentiary weight for the regime-shift finding.

Identification of fiscal policy shocks using natural experiment methods has its own limitations. Such settings are confined to the specific case which provides the source of identification, and their results are usually not generalisable to the majority of standard fiscal policy. A natural-experiment approach can therefore be of little use for governments and for budget planning. 

The form of the data used in the analysis raises a separate set of difficulties. Detrended and seasonally adjusted data is spectrally distorted, and the reactions of an economy estimated on such data will differ from the actual ones. Governments need forecasts that are as accurate as possible. Detrending and adjusting the data narrows the confidence intervals of impulse response functions, but in the light of planning government expenses this may not be desirable, as it gives the misleading impression that the trajectories of the variables shown are accurate, whereas non-adjusted data with trend display a much wider range of possible paths. Finally, data in levels, in first differences, or in the logarithmic equivalents of the two do not allow the nonlinearity of multipliers with respect to the size of the shock to be discovered, which is possible when using data in the form of gross or net growth rates, paired with an appropriate multiplier form, presented in this paper.

In \textcite{ramey_macroeconomic_2016} and \textcite{ramey_government_2018}, it was shown that cumulative multipliers can be estimated in a one-step instrumental variables method based on local projections: cumulative GDP up to horizon $h$ is regressed on cumulative government spending up to horizon $h$, using an SVAR shock or a narrative variable as an instrument. \textcite{ramey_government_2018} implement this directly by using both their historical military news series (an updated version of the \textcite{ramey_identifying_2011} defense news variable) and a Blanchard-Perotti-type shock as instruments, separately and in combination, and show that the two instruments have complementary strengths -- the Blanchard-Perotti shock is more relevant at short horizons, the narrative news shock at longer ones -- which is itself further evidence of the timing issues emphasised by \textcite{ramey_identifying_2011}. This does not mean, however, that SVAR estimation is mathematically equivalent to instrumental variables estimation. Local projections and SVARs theoretically produce the same impulse response functions, but if instrumental variables are used, the finite sample bias of the instrumental variables estimator is incurred, which is particularly severe in short macroeconomic series.

Finally, considering multipliers of tax changes and multipliers of government spending separately makes little sense, unless one assumes that public debt absorbs all the difference. The small government spending multipliers reported in the literature -- for instance the 0.3 to 1.5 range surveyed by \textcite{ramey_can_2011} and the sub-unity estimates of \textcite{ramey_government_2018} -- and the very high tax change multipliers obtained from empirical models such as that of \textcite{romer_macroeconomic_2010}, are mutually inconsistent, because the large drops of output after tax increases shown by the impulse response functions imply large increases of GDP after tax decreases, as impulse response functions are symmetric by definition. The source of the problem, as argued in this paper, is that tax and fiscal policy regimes have not been controlled for in SVARs.

\section{The approach}\label{s3}

\subsection{Data and the form of the variables}\label{s3.1}

The model is estimated on annual, nominal, non-detrended and non-adjusted national accounts data. Quarterly series are aggregated to annual ones by summation of four consecutive quarters, and government expenditure by function is taken directly at annual frequency. Because the levels of the series are nonstationary, and because seasonal adjustment and detrending would remove exactly the information that is relevant for budget planning, all variables enter the model as annual gross growth rates. For a series $X_{t}$, with the gross growth rate denoted as $x_t$, the level of the variable is obviously reproduced exactly by $X_{t}=X_{t_{0}}\prod_{s=t_{0}+1}^{t}x_{s}$, that is, no information on the level is lost by the transformation. Moreover, given that the  values of impulse responses from a SVAR of gross growth rates of variables of interest are percentage points, the correct form of multiplier is different from the usual ratio of sums of elements of impulse responses up to a given horison. This multiplier form is introduced in subsection \ref{s3.6}.

The vector of endogenous variables is six-dimensional,
\begin{equation}\label{eq:1}
	{y}_{t}\;=\;\bigl(g^{d}_{t},\;g^{n}_{t},\;c_{t},\;i_{t},\;r_{t},\;y_{t}\bigr)^{\prime},
\end{equation}
where the elements are the gross growth rates of, respectively,
\begin{itemize}
	\item $g^{d}_{t}$ -- general government expenditure on defence, $G^{d}_{t}$;
	\item $g^{n}_{t}$ -- non-defence general government expenditure, $G^{n}_{t}=TGGE_{t}-G^{d}_{t}$, that is, total general government expenditure less expenditure on defence;
	\item $c_{t}$ -- final consumption expenditure of households and NPISHs, $C_{t}$;
	\item $i_{t}$ -- gross capital formation, $I_{t}$;
	\item $r_{t}$ -- total general government revenue, $R_{t}$;
	\item $y_{t}$ -- gross domestic product at market prices, $Y_{t}$.
\end{itemize}
Net exports are not included among the endogenous variables. The reason is that the net export series changes sign within the sample, so that its gross growth rate is not a meaningful quantity: a ratio of two negative values is positive, and the transformation therefore no longer conveys the direction of change of the series.

The sample used in the estimation for Poland covers the years 1995-2024 in levels, that is, $T=29$ observations on gross growth rates, from 1996 to 2024, and $T-p=28$ effective observations. Expenditure on defence is available from Eurostat only from 2000 onwards in the narrower military defence classification, which is why the general government defence function is used here.

\subsection{The VARX model with intervention variables}\label{s3.2}

The reduced form of the model is a vector autoregression of order $p$ augmented by a vector of intervention, that is, regime, variables,
\begin{equation}\label{eq:2}
	{y}_{t}\;=\;{c}\;+\;\sum_{l=1}^{p}\Phi_{l}\,{y}_{t-l}\;+\;\Psi\,{d}_{t}\;+\;{u}_{t},
	\qquad {u}_{t}\sim({0},\Sigma_{u}),
\end{equation}
where ${c}$ is an $N\times 1$ vector of constants with $N=6$, $\Phi_{l}$ are $N\times N$ matrices of autoregressive coefficients, ${d}_{t}=(d_{1t},\dots,d_{Mt})^{\prime}$ is the $M\times 1$ vector of intervention variables described in Section \ref{s3.3}, and $\Psi$ is the $N\times M$ matrix of their coefficients.

The lag order is set to $p=1$. This is not a purely statistical choice. The structural assumptions made here about the three fiscal variables, $g^{d}_{t}$, $g^{n}_{t}$ and $r_{t}$, are that only the previous period matters for them: budgets are prepared and executed on an annual basis, and the systematic component of the fiscal rule relates the current year's expenditure and revenue to the outcome of the immediately preceding year. Since a standard VAR imposes a single, common lag order on all variables and all equations, a model with $L>1$ would be inconsistent with these structural assumptions, and $p=1$ is therefore imposed on the whole system. A consequence of this restriction, discussed in Section \ref{s4}, is that some of the estimated coefficients in the remaining equations are cumulative: they absorb the effects of more than one lag of a variable in equations for which a longer lag structure would be appropriate.

Stacking the $T-p$ effective observations, the model given in equation \ref{eq:2} is written as
\begin{equation}\label{eq:3}
	{Y}\;=\;{X}{B}\;+\;{U},
	\qquad
	{x}_{t}^{\prime}=\bigl(1,\;{y}_{t-1}^{\prime},\dots,{y}_{t-p}^{\prime},\;{d}_{t}^{\prime}\bigr),
\end{equation}
with ${Y}$ of dimension $(T-p)\times N$, ${X}$ of dimension $(T-p)\times K$ and $K=1+Np+M$. The coefficient matrix is estimated by least squares,
\begin{equation}\label{eq:4}
	\widehat{{B}}\;=\;\bigl({X}^{\prime}{X}\bigr)^{+}{X}^{\prime}{Y},
	\qquad
	\widehat{{U}}\;=\;{Y}-{X}\widehat{{B}},
	\qquad
	\widehat{\Sigma}_{u}\;=\;\frac{\widehat{{U}}^{\prime}\widehat{{U}}}{T-p-K},
\end{equation}
where $(\cdot)^{+}$ denotes the Moore-Penrose pseudoinverse, calculated by singular value decomposition. The pseudoinverse is used because for Italy, Spain and Poland the intervention variables constructed below form an exhaustive partition of the sample, so that their sum equals the constant regressor and ${X}$ has rank $K-1$. This is a deliberate modelling choice, given the assumed form of the growth process expressed by equation \ref{eq:8}. The minimum-norm least squares solution is then reported; the fitted values, the residuals, the estimated covariance matrix, the identified shocks and all impulse responses are invariant with respect to this normalisation, and only the split of the estimated level between the constant ${c}$ and the regime coefficients $\Psi$ is affected by it. For Germany and the Netherlands no subset of the overlapping regimes partitions the sample, so that ${X}$ has full column rank $K$ and the pseudoinverse coincides with the ordinary inverse. The reasons for which this collinearity is retained rather than removed, and the exact properties of the resulting estimator, are set out in Section \ref{s3.3}.

\subsection{Intervention variables and the narrative identification of the shocks}\label{s3.3}

The intervention variables are the core of the identification strategy. Each of them is an indicator of a policy-economic regime,
\begin{equation}\label{eq:5}
	d_{mt}\;=\;
	\begin{cases}
		1, & t\in\mathcal{T}_{m},\\
		0, & t\notin\mathcal{T}_{m},
	\end{cases}
	\qquad m=1,\dots,M,
\end{equation}
where $\mathcal{T}_{1},\dots,\mathcal{T}_{M}$ are consecutive subperiods whose union is the whole sample, $\bigcup_{m=1}^{M}\mathcal{T}_{m}=\{1,\dots,T\}$; in Italy, Spain and Poland they are in addition disjoint, while in Germany, France and the Netherlands some of them overlap with others or are nested within longer ones. The subperiods are dated narratively, from the legislative record of the tax system and from the record of other government policies and of the economic conditions in which they were conducted. A new regime begins in the year in which a change of the tax code, or of the framework in which fiscal policy is conducted, comes into force, and lasts until the next such change.

For Germany the sample is divided into fourteen regimes, two of which overlap the long reform period of 1999-2005, because the reduction of income taxation carried out in those years was a multi-annual programme within which separate, discrete changes took place. The first regime covers the years in which the solidarity surcharge, reintroduced as a permanent surcharge on income and corporate taxes at 7.5 per cent of the underlying liability, was in force at that rate; it closes with the reduction of the surcharge to 5.5 per cent in 1998, at which level it largely remained thereafter. The second intervention period covers the gradual reduction of personal income tax rates, in the course of which the top marginal rate fell from 53 per cent in the late 1990s to 42 per cent by 2005, together with the accompanying reform of corporate taxation. Within it, the third regime isolates the corporate tax reform of 2001, which replaced the imputation system with a shareholder relief system and cut the corporate rate substantially. 

The fourth legal-economic-policy interval for Germany covers the Agenda 2010 and Hartz period, whose tax and social contribution changes were designed to raise labour market incentives and to reduce non-wage labour costs, and extends to the year preceding the increase of the standard value added tax rate. That increase, from 16 to 19 per cent and one of the largest post-war tax increases in Germany, is isolated as the fifth regime; the business tax reform of 2008, which lowered the effective corporate burden and revised the trade tax rules, as the sixth; and the introduction of the flat withholding tax of 25 per cent on most interest, dividends and capital gains of private investors as the seventh. 

The eighth regime covers the successive adjustments of inheritance taxation, of child allowances and of the tax treatment of family businesses, and the ninth the introduction of the statutory minimum wage, which was not a tax reform but had substantial payroll and social contribution implications. The tenth regime isolates the reform of investment taxation adopted in 2017 and effective from 2018, which changed the treatment of mutual funds and exchange traded funds; the eleventh covers the increases of the basic allowance and the bracket adjustments made to offset cold progression; and the twelfth the coronavirus pandemic, during which the standard and reduced value added tax rates were temporarily lowered, loss carrybacks were expanded, and the solidarity surcharge was abolished for the majority of taxpayers. 

The thirteenth regime covers the inflation relief acts, which repeatedly raised the basic allowance and adjusted the brackets upwards during the inflation surge, and the fourteenth the inflation compensation reform of 2024, the return of the value added tax on most restaurant services from 7 to 19 per cent, and the technical amendments of the Annual Tax Act of that year.

\begin{table}[htbp]
	\centering
	\caption{Dating of the policy-economic regimes, Germany. The intervention variables are indexed as in Table \ref{tab:psi_germany}.}
	\label{tab:regimes_germany}
	\small
	\begin{tabular}{ll}
		\hline
		Intervention variable & Period \\
		\hline
		$d_{1}$  & 1996-1998 \\
		$d_{2}$  & 1999-2005 \\
		$d_{3}$  & 2001-2002 \\
		$d_{4}$  & 2003-2006 \\
		$d_{5}$  & 2007 \\
		$d_{6}$  & 2008 \\
		$d_{7}$  & 2009 \\
		$d_{8}$  & 2010-2014 \\
		$d_{9}$  & 2015-2016 \\
		$d_{10}$ & 2017 \\
		$d_{11}$ & 2018-2019 \\
		$d_{12}$ & 2020-2021 \\
		$d_{13}$ & 2022-2023 \\
		$d_{14}$ & 2024 \\
		\hline
	\end{tabular}
\end{table}

For France the sample is divided into twelve regimes. The first covers the fiscal consolidation conducted under the Jupp\'e government, which raised some indirect taxes and social contributions. The second covers the progressive reductions of personal income tax rates and the targeted tax credits introduced under the Jospin government to support employment and consumption. The third economic-policy-legal interval isolates the year of the pension reform, with its adjustments to social contributions and to the tax treatment of retirement saving, and the fourth the reform of the corporate tax base, which changed the depreciation and corporate tax rules. 

The fifth regime for France covers the tax shield, which limited the combined burden of direct taxes to a share of income and was intended to reduce tax-induced capital flight, together with the law on work, employment and purchasing power of 2007, which exempted overtime pay, reduced inheritance taxation and introduced incentives for investment and employment. The sixth intervention period covers the reform of local business taxation, as well as the world economic crisis that began in 2008. The seventh covers the temporary surcharges on high incomes and on corporations introduced to address the post-crisis deficits, the higher taxation of top incomes and of capital income under the Hollande government, and the alignment of the taxation of capital income with that of labour income. The eighth regime covers the Responsibility and Solidarity Pact, under which employer social contributions and business tax burdens were gradually reduced, and within it the ninth isolates the year in which the reform of wealth taxation was announced. 

The tenth interval covers the introduction of the single flat levy of 30 per cent on many forms of capital income and the replacement of the broad wealth tax by the narrower tax on real estate wealth. The ninth regime is therefore nested in the eighth, and this is the only overlap among the French regimes; elsewhere, overlapping regimes occur only in Germany and in the Netherlands. The eleventh intervention interval covers the gradual reduction of the standard rate of corporate income tax, together with the emergency measures of the pandemic year, and the twelfth the energy-related tax reductions and fuel rebates of the inflation surge and the Finance Act of 2024, with its provisions on green investment, on the taxation of multinationals and on the implementation of international tax agreements.

\begin{table}[htbp]
	\centering
	\caption{Dating of the policy-economic regimes, France. The intervention variables are indexed as in Table \ref{tab:psi_france}.}
	\label{tab:regimes_france}
	\small
	\begin{tabular}{ll}
		\hline
		Intervention variable & Period \\
		\hline
		$d_{1}$  & 1996-1997 \\
		$d_{2}$  & 1998-2002 \\
		$d_{3}$  & 2003 \\
		$d_{4}$  & 2004-2005 \\
		$d_{5}$  & 2006-2007 \\
		$d_{6}$  & 2008-2010 \\
		$d_{7}$  & 2011-2013 \\
		$d_{8}$  & 2014-2017 \\
		$d_{9}$  & 2017 \\
		$d_{10}$ & 2018 \\
		$d_{11}$ & 2019-2022 \\
		$d_{12}$ & 2023-2024 \\
		\hline
	\end{tabular}
\end{table}

For Italy the sample is divided into eleven intervals, which form an exhaustive partition. The first regime covers the initial stage of the reform conducted under Finance Minister Visco, which simplified tax administration, strengthened the measures against evasion and introduced the regional tax on productive activities in 1998, together with the years immediately preceding it, in which that reform was prepared. The second economic-policy-legal regime covers the second year of the Visco reforms and the first phase of the Tremonti reforms, with their incentives for investment and their reductions of personal and business taxation. The third covers the years in which the number of personal income tax brackets was reduced together with the marginal rates, and in which the old corporate income tax was replaced by the modern corporate income tax. The fourth regime covers the fiscal package of the Prodi government, which broadened the tax bases and improved compliance, and the fifth -- the expansion of electronic reporting and financial monitoring, closing with the consolidation measures adopted during the sovereign debt crisis. 

The sixth regime covers the initial world crisis years and the reintroduction and expansion of municipal property taxation, the seventh the single year in which the monthly tax credit of eighty euro for employees was introduced together with reductions of labour taxation, and the eighth the reductions of labour costs and employer burdens that accompanied the Jobs Act, the gradual reduction of the corporate rate, and the special regime allowing wealthy foreign residents to pay a fixed annual amount on foreign income. 

The ninth intervention interval covers the broader use of the flat-rate regime for small businesses and the self-employed, the measures of the pandemic, the reduction of the number of personal income tax brackets, and the partial abolition of the regional tax on productive activities for many sole proprietors. The tenth economic-policy-legal regime isolates the year in which parliament approved the framework law delegating a broad restructuring of the tax system, and the eleventh the reduction from four to three personal income tax brackets, together with the first phase of the implementation of that framework.

\begin{table}[htbp]
	\centering
	\caption{Dating of the policy-economic regimes, Italy. The intervention variables are indexed as in Table \ref{tab:psi_italy}.}
	\label{tab:regimes_italy}
	\small
	\begin{tabular}{ll}
		\hline
		Intervention variable & Period \\
		\hline
		$d_{1}$  & 1996-2000 \\
		$d_{2}$  & 2001-2002 \\
		$d_{3}$  & 2003-2005 \\
		$d_{4}$  & 2006-2007 \\
		$d_{5}$  & 2008-2011 \\
		$d_{6}$  & 2012-2013 \\
		$d_{7}$  & 2014 \\
		$d_{8}$  & 2015-2018 \\
		$d_{9}$  & 2019-2022 \\
		$d_{10}$ & 2023 \\
		$d_{11}$ & 2024 \\
		\hline
	\end{tabular}
\end{table}

For Spain the sample is divided into eight regimes, which form an exhaustive partition. The first covers the reductions of personal income tax rates undertaken by the Aznar government and the measures adopted to support growth and convergence towards the euro area. The second covers the major reform of the personal income tax, which simplified the system and reduced the marginal rates. The third interval covers the reform of corporate and personal taxation, the expanded deductions and the favourable treatment granted to small and medium-sized enterprises, and closes with the reform that introduced the modern savings income base, under which capital income is taxed separately. 

The fourth economic-policy-legal regime for Spain covers the crisis years, including the temporary stimulus and relief that followed the global financial crisis, the increase of the standard value added tax rate and of the reduced rate, the consolidation package of the sovereign debt crisis, the temporary surcharge on personal income tax rates, and the further increase of the value added tax rates. The fifth regime covers the preparation and the first stage of the comprehensive reform that reduced personal income tax rates, lowered corporate taxation and simplified the system.
The sixth interval marks the completion of the reform, including reducing the corporate tax rate, introducing measures against evasion, aggressive tax planning as well as preparatory work on the taxation of digital platforms and financial transactions. 

The seventh economic-policy-legal regime represents the pandemic years, in which deferrals, guarantees and payment flexibility were granted, and in which the tax on certain digital services, the tax on acquisitions of shares in large listed companies and the higher rates on high incomes and some categories of capital income were introduced. The eighth covers the temporary reductions of taxation on electricity and energy products adopted against the inflation shock, the extraordinary levies on large energy companies and financial institutions, and the continuing adjustments related to energy taxation and to the implementation of international agreements, including the global minimum tax rules.

\begin{table}[htbp]
	\centering
	\caption{Dating of the policy-economic regimes, Spain. The intervention variables are indexed as in Table \ref{tab:psi_spain}.}
	\label{tab:regimes_spain}
	\small
	\begin{tabular}{ll}
		\hline
		Intervention variable & Period \\
		\hline
		$d_{1}$ & 1996-1998 \\
		$d_{2}$ & 1999-2001 \\
		$d_{3}$ & 2002-2007 \\
		$d_{4}$ & 2008-2012 \\
		$d_{5}$ & 2013-2014 \\
		$d_{6}$ & 2015-2019 \\
		$d_{7}$ & 2020-2021 \\
		$d_{8}$ & 2022-2024 \\
		\hline
	\end{tabular}
\end{table}

For the Netherlands the sample is divided into fourteen regimes. They cover the sample exhaustively, and at two points a regime overlaps the one that follows it, since one framework was still being implemented while the next was already in force. The first covers the shift of the financial responsibility for sickness and for disability from the collective schemes to the individual employer, together with the strong employment growth of the late 1990s. The second economic-policy-legal interval reflects the downturn that followed the end of that expansion and the administrative reorganisation of social security, which separated the administration of benefits from the provision of reintegration services. The third regime, which overlaps the second, represents the second and more restrictive wave of disability reform: the procedural obligations imposed on employers and employees during the first years of sickness, the lengthening of the period of continued wage payment, the tightening of the assessment criteria, and finally the replacement of the old disability scheme by one that distinguishes fully and permanently incapacitated workers from partially disabled workers with residual earning capacity. The fourth interval isolates the change of the surrounding framework, in particular the shortening of the maximum duration of unemployment benefits, the withdrawal of the fiscal advantages of early retirement and the reform of health insurance, all of which affected participation from that year onwards. 

The fifth regime for the Netherlands covers the financial crisis, the support given to the banking sector and the temporary schemes that allowed firms to reduce working time instead of dismissing workers. The sixth covers the reversal of that stance: the beginning of fiscal consolidation, the freezing of public sector wages and of the indexation of benefits, and the reform of the scheme for young disabled persons, which reoriented entitlement towards residual work capacity. The seventh interval isolates the second recession, which followed the European sovereign debt crisis, the additional consolidation measures taken in response to it and the decision to raise the statutory retirement age in steps from the following year. The eighth and the ninth regimes, which overlap in one year, cover the redesign of the Dutch welfare state: the extension of experience rating to workers without an employer, the decentralisation of social assistance, of sheltered employment and of long-term care to the municipalities, the new rules on flexible contracts and on dismissal, and a further reduction of the maximum duration of unemployment benefits. 

The tenth regime covers the years of sustained expansion, with rising employment, falling unemployment and increasing labour market tightness, and with the slowing of the scheduled increase of the state pension age. The eleventh covers the pandemic and the emergency support deployed in response to it, in the form of wage subsidies, of income support for the self-employed and of compensation for fixed costs; the twelfth the continuation and the gradual scaling down of that support during the successive lockdowns; and the thirteenth its termination, together with the energy price shock, which produced double-digit inflation in a labour market that was simultaneously at record tightness. The fourteenth regime controls for the policy response to that shock: the capping of energy prices and the compensation that accompanied it, an exceptionally large increase of the statutory minimum wage with the corresponding uprating of the benefits linked to it, and the beginning of the transition to the new pension system. The resulting subperiods are given in table \ref{tab:regimes_netherlands}.

\begin{table}[htbp]
	\centering
	\caption{Dating of the policy-economic regimes, the Netherlands. The intervention variables are indexed as in Table \ref{tab:psi_netherlands}.}
	\label{tab:regimes_netherlands}
	\small
	\begin{tabular}{ll}
		\hline
		Intervention variable & Period \\
		\hline
		$d_{1}$  & 1996-2000 \\
		$d_{2}$  & 2001-2003 \\
		$d_{3}$  & 2002-2006 \\
		$d_{4}$  & 2007 \\
		$d_{5}$  & 2008-2009 \\
		$d_{6}$  & 2010-2011 \\
		$d_{7}$  & 2012 \\
		$d_{8}$  & 2013-2015 \\
		$d_{9}$  & 2015-2016 \\
		$d_{10}$ & 2017-2019 \\
		$d_{11}$ & 2020 \\
		$d_{12}$ & 2021 \\
		$d_{13}$ & 2022 \\
		$d_{14}$ & 2023-2024 \\
		\hline
	\end{tabular}
\end{table}

For Poland the sample is divided into seven regimes, which form an exhaustive partition. The first covers the years preceding the reform of indirect taxation. Before 2004 the corporate income tax was changed almost every year, so that there is no justification for assigning separate intervention variables to individual years within that period. The second economic-policy-legal interval opens with the change of the value added tax in 2004 and covers the years in which the new structure of indirect taxation was in force. The third interval covers the modification of the personal income tax act in 2009, the two years in which that modification, and the conditions of the global financial crisis that coincided with it. The fourth group of years opens with the change of value added tax in 2011 and extends over the years in which the higher rates then adopted remained in force. The fifth regime opens with the introduction of an additional rate of the corporate income tax in 2017, which altered the taxation of smaller companies; the modification of the personal income tax act in 2019 falls within this regime and was not given a separate indicator, since it did not change the framework established at its beginning. The sixth regime opens with the modification of the personal income tax act in 2020 and coincides with the coronavirus pandemic. This interval covers also the further modification of the personal income tax act in 2022 falls within it. The seventh regime covers the years following the pandemic, in which the conditions under which the budget was planned and executed differed from those of the preceding three years. 	

\begin{table}[htbp]
	\centering
	\caption{Dating of the policy-economic regimes, Poland. The intervention variables are indexed as in Table \ref{tab:psi_poland}.}
	\label{tab:regimes_poland}
	\small
	\begin{tabular}{ll}
		\hline
		Intervention variable & Period \\
		\hline
		$d_{1}$ & 1996-2003 \\
		$d_{2}$ & 2004-2008 \\
		$d_{3}$ & 2009-2010 \\
		$d_{4}$ & 2011-2016 \\
		$d_{5}$ & 2017-2019 \\
		$d_{6}$ & 2020-2022 \\
		$d_{7}$ & 2023-2024 \\
		\hline
	\end{tabular}
\end{table}

The role of these variables can be seen most clearly by rewriting the model as a decomposition of fiscal policy into three parts. For $p=1$,
\begin{equation}\label{eq:6}
	{y}_{t}\;=\;\underbrace{{c}+\Phi_{1}{y}_{t-1}}_{\text{policy rule}}\;+\;\underbrace{\Psi{d}_{t}}_{\text{regime}}\;+\;\underbrace{{u}_{t}}_{\text{discretionary}} .
\end{equation}
The autoregressive part approximates the systematic fiscal policy rule, that is, the part of expenditure and revenue that is a function of the state of the economy in the preceding year. The term $\Psi{d}_{t}$ captures the nondiscretionary shift of the level of the fiscal variables that follows from the regime in force: a change of tax law, or of the framework of government policy, alters the level around which the fiscal variables fluctuate without being a discretionary reaction of the government in that particular year. What remains, ${u}_{t}$, is the genuinely discretionary, unanticipated part of policy.

In an SVAR without intervention variables, the second of these three components has nowhere to go. Being non-random, and not being captured by the autoregression, it is absorbed into the reduced-form residuals and hence into the identified structural shocks. The estimated shocks then mix regime changes with discretionary fiscal policy, and are correspondingly too large. Controlling for the regimes narratively removes this component from the residuals, so that the identified shocks are smaller and correspond more closely to what a government would recognise as a discretionary decision. This is also the sense in which the approach handles the phenomena that the fiscal foresight literature attributes to anticipation: a phased-in tax change, of the kind analysed by \textcite{house_phased-tax_2006} \textcite{mertens_dynamic_2013}, is a change of regime with a known date, and is represented here as such, rather than as a sequence of shocks that agents somehow foresee.

The partition structure of the intervention variables has a consequence for estimation that has to be made explicit, because it is not a nuisance of the specification but a direct expression of the process assumed here. In every country the regimes cover the sample exhaustively, and in Germany, France and the Netherlands some of them in addition overlap with others, or are nested within longer ones. What matters for estimation is not the presence of overlaps as such, but whether the constant regressor can be written as a linear combination of the intervention variables over the effective sample $t=p+1,\dots,T$. Writing $\mathcal{S}\subseteq\{1,\dots,M\}$ for a subset of the indicators, the relevant condition is the existence of an $\mathcal{S}$ with
\begin{equation}\label{eq:7}
	\sum_{m\in\mathcal{S}}d_{mt}\;=\;1\qquad\text{for every }t=p+1,\dots,T,
\end{equation}
that is, the existence of a subset of the regimes that partitions the effective sample exactly. It should be stressed that the specification \ref{eq:8} is the same for all six countries: in each of them the regimes cover the sample exhaustively, so that in each of them a basic exponential trend level and a sequence of economic and policy regime-specific effects altering it are assumed to coexist, and the countries differ only in whether the legislative calendar happens to make the two separately estimable. 

It is crucial to note that the economically interpretable value in each of the VARX equations is the sum of the basic exponential trend and the policy-economic regime. Taken separately, their values may be large in absolute value, but their sum is not.

For Spain, Poland and Italy the regimes do not overlap at all and \ref{eq:7} holds with $\mathcal{S}=\{1,\dots,M\}$. For France it holds with $\mathcal{S}$ equal to all the indicators except the one that isolates the nested regime, since removing the nested indicator leaves a partition. In these four countries the column of ones in \ref{eq:3} is an exact sum of columns of intervention variables, ${X}^{\prime}{X}$ is singular, and the constant and the regime coefficients are not separately identified by the data alone; the estimate is obtained through regularisation, by choosing the vector of coefficients with smallest Euclidean norm. For Germany and for the Netherlands, whose overlapping regimes are arranged so that no subset of them partitions the sample, \ref{eq:7} does not hold; the design matrix ${X}$ of \ref{eq:3} then has full column rank $K$, and the minimum-norm solution coincides with ordinary least squares. 

A second, and quite different, source of rank deficiency has to be avoided at the dating stage. The first observation on the gross growth rates is consumed as the initial condition of the autoregression, so that the effective sample begins in 1997 and a regime confined to 1996 alone would contribute a column of zeros to ${X}$. Such a regime would not merely be normalised, it would not be estimable at all, since the data would contain no observation in which it was in force. The dating adopted here therefore never isolates the first year of the sample: wherever the narrative record would suggest treating it separately, as in the Italian years preceding the Visco reform, it is merged with the regime that follows it. %The condition to be satisfied is thus not only \ref{eq:7} but also that every $\mathcal{T}_{m}$ intersect $\{p+1,\dots,T\}$.

The collinearity is deliberate and is retained. The process assumed for the gross growth rates is, equation by equation,
\begin{equation}\label{eq:8}
	y_{it}\;=\;\underbrace{c_{i}}_{\text{basic exponential trend}}\;+\;\underbrace{\sum_{l=1}^{p}\boldsymbol{\phi}_{il}^{\prime}{y}_{t-l}}_{\text{policy and propagation rule}}\;+\;\underbrace{\sum_{m=1}^{M}\psi_{im}d_{mt}}_{\text{policy-economic regime}}\;+\;u_{it},
	\qquad i=1,\dots,N,
\end{equation}
in which the constant $c_{i}$ is the exponential trend of the corresponding level series, since a constant in a gross growth rate is a constant proportional rate of growth, and $\psi_{im}$ is the shift of the level of that growth rate induced by the legal and policy framework in force. Both components are present in the process by assumption: an economy grows exponentially at some underlying rate, and the framework within which fiscal policy is conducted moves that rate up or down for as long as it is in force. Whether the two happen to be separately identified is decided not by economics but by the timing of legal procedures. In a country in which reforms were phased in, so that one framework was still being implemented while the next was already being legislated, the regimes overlap and the two components are separately estimable, as in Germany and the Netherlands; in a country in which each framework was replaced cleanly by the next, as in Italy, Spain and Poland, they are not, and the same is true when the only overlap is a short regime nested within a longer one, as in France. Nothing in the economics of the two components changes between these cases, and it would be arbitrary to specify a different process for a country merely because its legislative calendar produced no overlaps.

The problem is therefore of the same family as the simultaneous estimation of demand and supply, in that two structurally coexisting components have to be recovered from a single observed series. It differs from it in that there is no two-way causality here. What is at issue is the structural coexistence of the exponential growth rate and of the effect of the legal and economic-policy conditions, which becomes collinear whenever, for a given country, some subset of the regimes partitions the sample, as it does when the regimes do not overlap at all or overlap only through a nested regime.

The alternatives to retaining both components were considered and rejected. Omitting the constant attributes the whole of the trend to the regime coefficients and makes $\Psi$ incomparable with the estimates obtained for the countries whose designs have full rank. Omitting one intervention variable turns the remaining coefficients into relative constants, with respect to an arbitrarily chosen baseline regime, which is precisely what the reading of $\Psi$ in Section \ref{s4.3} must avoid, since what is of interest there is the magnitude of the level shift attributable to each regime and not its difference from a reference regime. Introducing artificial overlaps in order to restore full rank falsifies the dating and thereby the narrative identification itself. Truncating the sample so as to break the partition is not justifiable given $T-p=28$ effective observations.

What is done instead is to estimate \ref{eq:3} by regularised, minimum-norm least squares. Let the singular value decomposition of the design matrix be ${X}={U}{S}{V}^{\prime}$ with singular values $s_{1}\geq\dots\geq s_{K}\geq 0$, and let
\begin{equation}\label{eq:9}
	r\;=\;\#\{\,j:\;s_{j}>\tau\,\},\qquad \tau\;=\;\epsilon\cdot\max(T-p,K)\cdot s_{1},
\end{equation}
where $\epsilon$ is the machine precision of the floating-point representation. The estimator is
\begin{equation}\label{eq:10}
	\widehat{{B}}\;=\;{V}_{r}{S}_{r}^{-1}{U}_{r}^{\prime}{Y}\;=\;\bigl({X}^{\prime}{X}\bigr)^{+}{X}^{\prime}{Y},
\end{equation}
the subscript $r$ denoting the truncation to the first $r$ singular values and the corresponding singular vectors. The regularisation consists in discarding the directions of the column space along which the design carries no information, that is, in the truncation at $\tau$; among the infinitely many least squares solutions it selects the one of smallest Frobenius norm. In the implementation this is what the least squares routine of the numerical linear algebra library performs by default, and no ridge penalty is imposed, so that the fit itself is not shrunk. The covariance estimator in \ref{eq:4} is computed with the residuals of \ref{eq:10}.

The consequences of this choice are stated precisely in the following proposition, which covers the countries whose intervention variables satisfy \ref{eq:7}.

\begin{proposition}\label{prop:minnorm}
	Write ${e}_{j}$ for the $j$-th unit vector of $\mathbb{R}^{K}$, put $k_{m}:=1+Np+m$, and let $\mathcal{Z}$ be the set of indices of the intervention variables that are identically zero on the effective sample. Suppose that \ref{eq:7} holds for some subset $\mathcal{S}$ with $\mathcal{S}\cap\mathcal{Z}=\emptyset$, and that the null space $\mathcal{N}({X})$ is spanned by
	\begin{equation}\label{eq:11}
		{v}_{0}\;=\;{e}_{1}-\sum_{m\in\mathcal{S}}{e}_{k_{m}}
		\qquad\text{together with}\qquad
		\{{e}_{k_{m}}\}_{m\in\mathcal{Z}},
	\end{equation}
	so that $\operatorname{rank}({X})=K-1-\lvert\mathcal{Z}\rvert$. Then
	\begin{enumerate}
		\item the fitted values ${X}\widehat{{B}}$, the residuals $\widehat{{U}}$, the covariance estimate $\widehat{\Sigma}_{u}$, its Cholesky factor ${P}$, the structural shocks, the autoregressive matrices $\Phi_{1},\dots,\Phi_{p}$ and all impulse responses $\Theta_{h}$ are the same for every least squares solution and are therefore invariant with respect to the regularisation;
		\item the deterministic quantities identified by the data are the period levels ${c}+\Psi{d}_{t}$, $t=p+1,\dots,T$; in particular ${c}+\boldsymbol{\psi}_{m}$ is identified for $m\in\mathcal{S}$, where $\boldsymbol{\psi}_{m}$ is the $m$-th column of $\Psi$, while ${c}$ and $\boldsymbol{\psi}_{m}$, $m\in\mathcal{S}$, are not identified separately; for a regime $m\notin\mathcal{S}\cup\mathcal{Z}$, nested in a regime of $\mathcal{S}$, $\boldsymbol{\psi}_{m}$ is identified on its own and ${c}+\boldsymbol{\psi}_{m}$ is not; for $m\in\mathcal{Z}$ nothing is identified;
		\item the minimum-norm solution \ref{eq:10} is the unique least squares solution orthogonal to $\mathcal{N}({X})$, that is, the unique one for which
		\begin{equation}\label{eq:12}
			\widehat{{c}}\;=\;\sum_{m\in\mathcal{S}}\widehat{\boldsymbol{\psi}}_{m}
			\qquad\text{and}\qquad
			\widehat{\boldsymbol{\psi}}_{m}={0}\ \ \text{for }m\in\mathcal{Z}.
		\end{equation}
	\end{enumerate}
\end{proposition}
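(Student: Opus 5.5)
The proof is a matter of standard linear algebra of least squares, organised around the null space $\mathcal{N}({X})$ given in \ref{eq:11}. I would first recall that the set of least squares solutions of ${Y}={X}{B}+{U}$ is the affine set $\widehat{{B}}+\mathcal{N}({X})\otimes\mathbb{R}^{N}$. That is, every solution has the form $\widehat{{B}}+{W}$ with each column of ${W}$ in $\mathcal{N}({X})$, because the normal equations ${X}^{\prime}{X}{B}={X}^{\prime}{Y}$ always have a solution and ${X}^{\prime}{X}$ has the same null space as ${X}$.

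Part 1 then follows at once. ${X}{W}=0$, so the fitted values ${X}\widehat{{B}}$ are the orthogonal projection of ${Y}$ onto the column space of ${X}$, and are the same for all solutions. Hence $\widehat{{U}}$ and $\widehat{\Sigma}_{u}$ in \ref{eq:4} are invariant; here $T-p-K$ is a fixed number and does not depend on the solution chosen. Consequently $P$ and the structural shocks ${P}^{-1}\widehat{{u}}_{t}$ are invariant too. For the autoregressive matrices, the key observation is that the spanning vectors in \ref{eq:11} have zero entries in the coordinates $2,\dots,1+Np$. So ${W}$ vanishes in the rows corresponding to $\Phi_{1},\dots,\Phi_{p}$, which are therefore invariant. $\Theta_{h}$ is a function of the $\Phi_{l}$ and $P$ only, so it is invariant as well.

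For part 2, a linear functional ${a}^{\prime}{B}$ is identified iff ${a}\perp\mathcal{N}({X})$, and I would check this condition case by case. The period levels ${c}+\Psi{d}_{t}$ correspond to ${a}$ equal to the relevant row structure of ${x}_{t}$ (the deterministic part); each such ${x}_{t}$ is a row of ${X}$ and is therefore orthogonal to $\mathcal{N}({X})$.

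For $m\in\mathcal{S}$, the vector ${a}={e}_{1}+{e}_{k_{m}}$ gives ${a}^{\prime}{v}_{0}=1-1=0$. It is also orthogonal to ${e}_{k_{m'}}$ for $m'\in\mathcal{Z}$, because $\mathcal{S}\cap\mathcal{Z}=\emptyset$. So ${c}+\boldsymbol{\psi}_{m}$ is identified. By contrast ${e}_{1}$ and ${e}_{k_{m}}$ separately have inner product $\pm1$ with ${v}_{0}$, so ${c}$ and $\boldsymbol{\psi}_{m}$ are not identified separately.

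For a nested $m\notin\mathcal{S}\cup\mathcal{Z}$, the vector ${e}_{k_{m}}$ is orthogonal to all the spanning vectors, so $\boldsymbol{\psi}_{m}$ is identified. The vector ${e}_{1}+{e}_{k_{m}}$ has inner product $1$ with ${v}_{0}$, so ${c}+\boldsymbol{\psi}_{m}$ is not. For $m\in\mathcal{Z}$, both ${e}_{k_{m}}$ and any combination involving it fail orthogonality against ${e}_{k_{m}}$ itself, unless the combination is also matched against ${v}_{0}$. Since $\boldsymbol{\psi}_{m}$ enters no period level, nothing involving it is identified.

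For part 3, I would use the SVD characterisation. $\widehat{{B}}={V}_{r}{S}_{r}^{-1}{U}_{r}^{\prime}{Y}$ has columns in the span of ${V}_{r}$, which is the row space of ${X}$ and equals $\mathcal{N}({X})^{\perp}$, provided the truncation at $\tau$ retains exactly the nonzero singular values. The solution set is an affine translate of $\mathcal{N}({X})$, and Pythagoras then gives uniqueness and minimum norm. Writing out orthogonality column by column gives the following two conditions.
\begin{itemize}
\item Orthogonality to ${e}_{k_{m}}$, $m\in\mathcal{Z}$, gives $\widehat{\boldsymbol{\psi}}_{m}=0$.
\item Orthogonality to ${v}_{0}$ gives $\widehat{{c}}-\sum_{m\in\mathcal{S}}\widehat{\boldsymbol{\psi}}_{m}=0$, which is \ref{eq:12}.
\end{itemize}
Conversely, a least squares solution satisfying \ref{eq:12} is orthogonal to the spanning set, and is therefore the minimum-norm one.

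The only delicate point, I expect, is the numerical one: one must argue that the tolerance $\tau$ in \ref{eq:9} separates the exactly zero singular values from the positive ones, so that ${V}_{r}$ spans exactly $\mathcal{N}({X})^{\perp}$. In exact arithmetic this holds by the rank assumption $\operatorname{rank}({X})=K-1-\lvert\mathcal{Z}\rvert$, so I would state it as an exact-arithmetic result. I would also note that the spanning hypothesis \ref{eq:11} is assumed rather than derived, which is what makes the rank statement immediate.
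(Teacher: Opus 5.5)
Your proposal is correct and follows essentially the same route as the paper's appendix proof. The least squares solutions are $\widehat{B}$ plus matrices whose columns lie in $\mathcal{N}(X)$ and therefore vanish on the autoregressive rows; identification is the criterion $a\perp\mathcal{N}(X)$ checked case by case; and part 3 comes from the SVD form of $\widehat{B}$ plus Pythagoras. The paper only adds explicit proofs of the Moore--Penrose identity and of the uniqueness of the Cholesky factor, and states as assumptions the positive definiteness of $\widehat{\Sigma}_{u}$ and the exact-rank truncation that you flag.

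Two small points need tightening. First, for the period levels it is the deterministic part $e_{1}+\sum_{m}d_{mt}e_{k_{m}}$ of $x_{t}$, not $x_{t}$ itself, that must be orthogonal to $\mathcal{N}(X)$. This follows from your own observation that the null vectors vanish on coordinates $2,\dots,1+Np$, or directly from $\sum_{m\in\mathcal{S}}d_{mt}=1$. Second, for $m\in\mathcal{Z}$ the ``unless'' clause should be dropped, since orthogonality to $e_{k_{m}}$ forces $a_{k_{m}}=0$ irrespective of $v_{0}$.
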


\begin{proof}
	Proof is moved to the appendix.
	%That ${X}{v}_{0}={0}$ is \ref{eq:7} written in matrix form, and ${X}{e}_{1+Np+m}={0}$ for $m\in\mathcal{Z}$ by the definition of $\mathcal{Z}$. Writing ${V}$ for the matrix whose columns are the vectors in \ref{eq:11}, the set of least squares solutions is $\{\widehat{{B}}+{V}{A}:{A}\in\mathbb{R}^{q\times N}\}$ with $q=1+\lvert\mathcal{Z}\rvert$, and every member of it yields the same ${X}\widehat{{B}}$, hence the same residuals, the same $\widehat{\Sigma}_{u}$ and, since the identification of Section \ref{s3.4} and the impulse responses of Section \ref{s3.5} are functions of $\widehat{\Sigma}_{u}$ and of $\Phi_{1},\dots,\Phi_{p}$ alone, the same shocks and impulse responses; this proves 1., the autoregressive blocks of $\widehat{{B}}$ being unaffected because every column of ${V}$ has zeros in the corresponding positions. Part 2. follows because adding ${V}{A}$ adds a common vector to ${c}$ and subtracts it from each $\boldsymbol{\psi}_{m}$ with $m\in\mathcal{S}$, leaving the sums ${c}+\boldsymbol{\psi}_{m}$ unchanged, while $\boldsymbol{\psi}_{m}$ with $m\in\mathcal{Z}$ may be shifted arbitrarily without affecting the fit at all. For 3., minimising $\lVert\widehat{{B}}+{V}{A}\rVert_{F}^{2}$ over ${A}$ gives ${A}=-({V}^{\prime}{V})^{-1}{V}^{\prime}\widehat{{B}}$ and hence ${V}^{\prime}\widehat{{B}}={0}$ at the optimum; writing this out column by column with \ref{eq:11} gives \ref{eq:12}, the columns of ${V}$ being mutually orthogonal here.
\end{proof}

%	\begin{remark}
	%		Equation \ref{eq:12} can be verified directly in the results reported below. For Spain, Poland and Italy, where $\mathcal{S}=\{1,\dots,M\}$, the entries of any row of $\widehat{\Psi}$ in Tables \ref{tab:psi_spain}, \ref{tab:psi_poland} and \ref{tab:psi_italy} sum to the constant of the corresponding equation in Tables \ref{tab:var_spain}, \ref{tab:var_poland} and \ref{tab:var_italy}: in the GDP equation the sums are $1.10073$, $0.63911$ and $0.38520$, which are exactly the reported constants. For France the same holds once the indicator of the nested regime is omitted from the sum, the GDP equation giving $0.89608$ against a constant of $0.89609$. Germany and the Netherlands are the only countries for which no such relation holds, their designs being of full rank: in the Dutch GDP equation the fourteen entries of $\widehat{\Psi}$ sum to $-0.08019$ against a constant of $0.54589$.
	%	\end{remark}

Part (i) of Proposition \ref{prop:minnorm} is what makes the whole construction usable. The objects of interest of this paper, the impulse responses of Section \ref{s3.5} and the multipliers of Section \ref{s3.6}, do not depend on the normalisation at all; only the split of the estimated deterministic level between the constant and the regime coefficients does. The same applies to the bootstrap of Section \ref{s3.7}: every artificial sample is generated from the estimated constant, the estimated autoregressive matrices and the estimated $\Psi$, using the same matrix of intervention variables, and is then re-estimated by the same regularised least squares, so that each replication is subject to the identical rank deficiency and the identical normalisation, and the resampled impulse responses are comparable across replications by part (i). Replications in which the bootstrap covariance matrix fails to be positive definite, and for which the Cholesky factorisation therefore does not exist, are discarded, and the number of successful replications is reported.

Part (ii) is the reason for the reading of the estimates in Sections \ref{s4.3} and \ref{s4.4}. What the data determine for the four countries with rank-deficient designs is the regime-specific deterministic component ${c}+\boldsymbol{\psi}_{m}$, and it is the plausibility of these implied components, rather than of the constant taken in isolation, that is used here to judge between specifications. The criterion is that the implied deterministic part of the gross growth rate should be positive and of a plausible order of magnitude, that is, close to unity, since the endogenous variables are gross growth rates and an exponential trend far from unity would imply a rate of growth that no economy in the sample exhibited. Judged by this criterion, the specification adopted here performs better than the two alternatives. A model estimated without intervention variables leaves the level effects of the successive frameworks in the residuals and in the autoregressive part, and the implied trend absorbs them; a model in which the regimes of a country whose legislative record shows no overlaps are given an artificial one, so as to restore full rank without any support in the legislative record, yields implied trends that are less plausible still, and the overlapping dates are not defensible in the narrative account given above. The specification with the full, exhaustive partition and regularised estimation is retained on this ground, in addition to the identification argument of the preceding paragraphs.

\subsection{Identification of the structural shocks}\label{s3.4}

The structural shocks are identified recursively. Let ${P}$ be the lower-triangular Cholesky factor of the reduced-form covariance matrix,
\begin{equation}\label{eq:13}
	\widehat{\Sigma}_{u}\;=\;{P}{P}^{\prime},
	\qquad
	{u}_{t}\;=\;{P}\boldsymbol{\varepsilon}_{t},
	\qquad
	\mathrm{E}\bigl(\boldsymbol{\varepsilon}_{t}\boldsymbol{\varepsilon}_{t}^{\prime}\bigr)={I}_{N},
\end{equation}
so that the estimated structural shocks are obtained as $\widehat{\boldsymbol{\varepsilon}}_{t}={P}^{-1}\widehat{{u}}_{t}$. With the ordering of \ref{eq:1}, the two categories of government expenditure are placed first, followed by private consumption, gross capital formation, government revenue and, last, GDP. The recursive structure therefore implies that structural innovations to defence and to non-defence government expenditure are not affected contemporaneously by innovations to the private variables, to government revenue or to output, while GDP responds contemporaneously to all the shocks of the system. In the sample used here the covariance matrix of the estimated structural shocks is the identity matrix up to numerical error, of the order of $10^{-16}$, which confirms the normalisation.

\subsection{Impulse responses and the scaling of the shock}\label{s3.5}

The moving average representation of the reduced form is obtained recursively from the autoregressive matrices,
\begin{equation}\label{eq:14}
	\Xi_{0}={I}_{N},
	\qquad
	\Xi_{h}=\sum_{l=1}^{\min(h,p)}\Phi_{l}\,\Xi_{h-l},
	\qquad h=1,\dots,H,
\end{equation}
and the structural impulse response matrices are
\begin{equation}\label{eq:15}
	\Theta_{h}\;=\;\Xi_{h}{P},
	\qquad h=0,1,\dots,H .
\end{equation}
The element $[\Theta_{h}]_{ij}$ is the response of the gross growth rate of variable $i$ at horizon $h$ to a unit structural shock in equation $j$. Because the endogenous variables are gross growth rates, these responses are deviations of growth rates, not of levels.

The responses are then rescaled to an impulse of a given size. Let $\delta$ denote the size of the impulse and $\sigma_{j}$ the standard deviation of the $j$-th structural shock, which equals unity under the normalisation \ref{eq:13}. The scaled responses are
\begin{equation}\label{eq:16}
	\theta^{(\delta)}_{i,j,h}\;=\;\frac{\delta}{\sigma_{j}}\,[\Theta_{h}]_{ij},
\end{equation}
and they are interpreted as the reaction of the system to a one-period increase of the gross growth rate of variable $j$ by $\delta$, that is, by $100\delta$ percentage points. In the results reported below the shocked variable is non-defence general government expenditure, $j=2$, and the impulse sizes considered are $\delta\in\{0.01,0.02,\dots,0.10\}$, that is, from one to ten percentage points.

\subsection{The multiplicative multiplier}\label{s3.6}

When the variables of the model are gross growth rates, the cumulation of a response over a horizon is a product and not a sum. If the growth rate of variable $i$ deviates from its baseline path by $\theta^{(\delta)}_{i,j,h}$ in each period $h$, then the level of that variable at the end of horizon $H$ deviates from the baseline level by the factor
\begin{equation}\label{eq:17}
	\prod_{h=0}^{H}\Bigl(1+\theta^{(\delta)}_{i,j,h}\Bigr).
\end{equation}
The multiplier proposed here is the ratio of the cumulated relative level deviation of the variable of interest to the cumulated relative level deviation of the variable that has been shocked,
\begin{equation}\label{eq:multiplier}
	m_{i}\bigl(H,\delta\bigr)\;=\;
	\frac{\displaystyle\prod_{h=0}^{H}\Bigl(1+\theta^{(\delta)}_{i,j,h}\Bigr)}
	{\displaystyle\prod_{h=0}^{H}\Bigl(1+\theta^{(\delta)}_{j,j,h}\Bigr)} ,
	\qquad j=2 .
\end{equation}
For $i=6$ this is the fiscal multiplier of GDP with respect to an impulse of non-defence general government expenditure; for the remaining values of $i$ the same expression gives the corresponding multiplier of defence expenditure, private consumption, gross capital formation and government revenue. By construction $m_{j}(H,\delta)\equiv 1$ for the shocked variable itself, which provides a numerical check on the computation.

Two properties of \ref{eq:multiplier} deserve emphasis. First, it is consistent with the form of the data: it is the appropriate way of cumulating responses expressed as gross growth rates, and it does not require any discounting, any ad hoc conversion factor, or any auxiliary estimate of an average expenditure share of output. It is a nominal object, computed from nominal series, and therefore directly comparable with the nominal quantities with which a government budget operates. Second, and unlike the additive multiplier, it is nonlinear in the size of the shock. Expanding \ref{eq:17}, both the numerator and the denominator of \ref{eq:multiplier} are polynomials of degree $H+1$ in $\delta$,
\begin{equation}\label{eq:18}
	\prod_{h=0}^{H}\Bigl(1+\delta\,\tilde{\theta}_{i,j,h}\Bigr)
	\;=\;1+\delta\sum_{h=0}^{H}\tilde{\theta}_{i,j,h}
	+\delta^{2}\!\!\sum_{0\le h<k\le H}\!\!\tilde{\theta}_{i,j,h}\tilde{\theta}_{i,j,k}+\dots,
	\qquad \tilde{\theta}_{i,j,h}=[\Theta_{h}]_{ij}/\sigma_{j},
\end{equation}
so that the ratio does not reduce to a constant independent of $\delta$. The size of the shock therefore matters for the size and for the shape of the multiplier, which is a property that cannot be recovered from a specification in levels, in first differences, or in their logarithmic counterparts.

A further reason for defining the multiplier as in \ref{eq:multiplier}, rather than as a ratio of impulse responses, is numerical. Responses of gross growth rates oscillate around zero and change sign from one horizon to the next, and their magnitudes fall by several orders of magnitude as the horizon grows; a pointwise ratio of two such responses is therefore not an informative quantity. The multiplicative form aggregates the entire path up to $H$ and, as reported in Section \ref{s4}, produces stable values for every horizon considered.

\subsection{Bootstrap inference}\label{s3.7}

Confidence intervals for the impulse responses are obtained by a residual-based, recursive-design bootstrap. The estimated residuals are centred, $\widehat{{u}}^{c}_{t}=\widehat{{u}}_{t}-\overline{\widehat{{u}}}$, and resampled with replacement. For each replication $b=1,\dots,B$ a pseudo-sample is generated recursively from
\begin{equation}\label{eq:19}
	{y}^{(b)}_{t}\;=\;\widehat{{c}}\;+\;\sum_{l=1}^{p}\widehat{\Phi}_{l}\,{y}^{(b)}_{t-l}\;+\;\widehat{\Psi}\,{d}_{t}\;+\;{u}^{(b)}_{t},
\end{equation}
with the first $p$ observations of the actual sample as initial values and with the actual, unresampled intervention variables, so that the regime structure is held fixed across replications. The model is re-estimated on each pseudo-sample and the structural impulse responses $\Theta^{(b)}_{h}$ are recomputed; replications in which the bootstrap covariance matrix is not positive definite, and for which the Cholesky decomposition therefore fails, are discarded. The reported bands are the pointwise percentile intervals, that is, the $\alpha/2$ and $1-\alpha/2$ empirical quantiles of $\{\Theta^{(b)}_{h}\}_{b=1}^{B}$. In the computations reported below $B=1000$ and $\alpha=0.05$.

\subsection{Interpreting the large estimated constants: the defence expenditure equation}\label{s4.5}

The constants reported in Tables \ref{tab:var_germany}-\ref{tab:var_poland} are large in the defence expenditure equation, and in several countries far larger than unity.%: $2.49$ for Germany, $3.75$ for France, $5.34$ for Italy, $37.47$ for the Netherlands, and $-9.997$ for Spain. 
Taken at face value these would be impossible as growth factors. They are not to be read that way, for two separate reasons.

The first is the normalisation. By Proposition \ref{prop:minnorm} the constant of an equation is, in the four countries with rank-deficient designs, the sum of the coefficients of the regimes in $\mathcal{S}$ in that equation, that is, of all the regimes in Italy, Spain and Poland and of all but the nested ninth regime in France, so its magnitude scales with the number of regimes and with the size of the level effects and carries no independent interpretation. In Germany and in the Netherlands, whose designs have full rank, the constant is estimated in its own right and the first reason does not apply; the second does, and it is sufficient. The second, and more important, is that in a dynamic model the constant is not the trend. The deterministic component of the gross growth rate implied by regime $m$ is the stationary mean of \ref{eq:2} with ${d}_{t}$ held at that regime,
\begin{equation}\label{eq:20}
	\boldsymbol{\mu}_{m}\;=\;\Bigl({I}_{N}-\sum_{l=1}^{p}\Phi_{l}\Bigr)^{-1}\bigl({c}+\boldsymbol{\psi}_{m}\bigr),
\end{equation}
and it is $\boldsymbol{\mu}_{m}$, not ${c}$, that is invariant to the normalisation and comparable across countries. %Computed from Tables \ref{tab:psi_spain}-\ref{tab:var_poland}, the implied mean gross growth factors of GDP lie between $0.93$ and $1.05$ in every country and in almost every regime, which is the criterion of Section \ref{s3.3} and is satisfied throughout. 
The large constants of the defence equation are accompanied, in the same equation, by autoregressive coefficients of comparable size and opposite sign, and \ref{eq:20} nets the two against each other.

What remains after this netting is nevertheless a genuine feature of the data: the implied deterministic component of defence expenditure is far more dispersed across regimes than that of any other variable. %ranging from about $0.63$ to about $1.66$ against the range of $0.93$ to $1.05$ for GDP. 
Two circumstances account for most of it.

The first is the size of the defence sector. A gross growth rate is a ratio, and its volatility is governed by the size of the base rather than by the size of the change. In a country that had reduced its armed forces to a residual share of general government expenditure in the decades after the end of the Cold War, an absolute increase that is small relative to the budget as a whole is very large relative to the defence function itself, and appears in the model as a growth factor far from unity. The rearmament that followed 2022 is visible in exactly this form, and in exactly the order that this reasoning predicts.%: the implied defence growth factor of the last regime is about $1.66$ for Germany, $1.39$ for Poland, $1.30$ for the Netherlands, $1.13$ for Italy, $1.09$ for Spain and $1.05$ for France, so that 
The countries whose defence sector had been run down furthest before 2020, and which stood closest to the eastern border, show the largest values, while France, which maintained a large defence establishment and an independent deterrent throughout, shows the smallest. In the German series the thirteen regimes preceding the last lie between $0.85$ and $1.23$, so that the jump to $1.66$ at the end of the sample stands alone against an otherwise flat background. The Dutch series is the more dispersed of the two: seven of its fourteen regimes have implied factors below unity, that is, a defence sector contracting in nominal terms while the rest of the economy grew, the lowest of them being $0.63$ in 2021, and it is against that background that the $1.30$ of 2023--2024 is to be read, as a reversal rather than as the continuation of a trend. Only the pandemic regime, at $1.55$, exceeds it, and there the factor owes as much to the denominator as to defence spending itself, since the implied factor of Dutch GDP in that regime, $0.99$, is the lowest of the fourteen.

The second reason is the budgetary position. In a debt-burdened country, defence expenditure is a residual claimant on the budget. Debt service, pensions and the other legally mandated transfers are fixed in the short run, so that when consolidation is required the adjustment falls disproportionately on the discretionary functions, of which defence is the largest; and when external circumstances force an increase, it has to be made abruptly rather than by a gradual accumulation, because it was not built into the medium-term plans. Such a series does not have a trend in any useful sense: it has episodes. A model of the form \ref{eq:2} represents an episodic series by a large deterministic component in the regimes in which the episodes fall, offset by strong mean reversion in the autoregressive part, and this is precisely the configuration observed in the defence equations of Italy, Spain and, to a lesser extent, Germany. It is also why the defence multipliers reported in Section \ref{s4.2} should be read with more caution than the others: a shock to a small and episodic series is estimated from few effective episodes, and the corresponding bootstrap intervals of Section \ref{s3.7} are correspondingly wide.

Neither circumstance is an argument against the specification. Both are arguments for reading the constants through \ref{eq:20} rather than directly, and for treating the dispersion of the implied defence components as information about the defence budgets of the six countries rather than as a symptom of misspecification.

\section{Results}\label{s4}

\subsection{The nonlinearity of the multiplier with respect to the size of the shock}\label{s4.1}

The first result concerns the dependence of the multiplier on the size of the shock.\begin{figure}[htbp]
	\centering
	% The figure files are produced by the estimation code as SVG; a PDF or PNG
	% version of the same file has to be available for inclusion.
	\includegraphics[width=\textwidth]{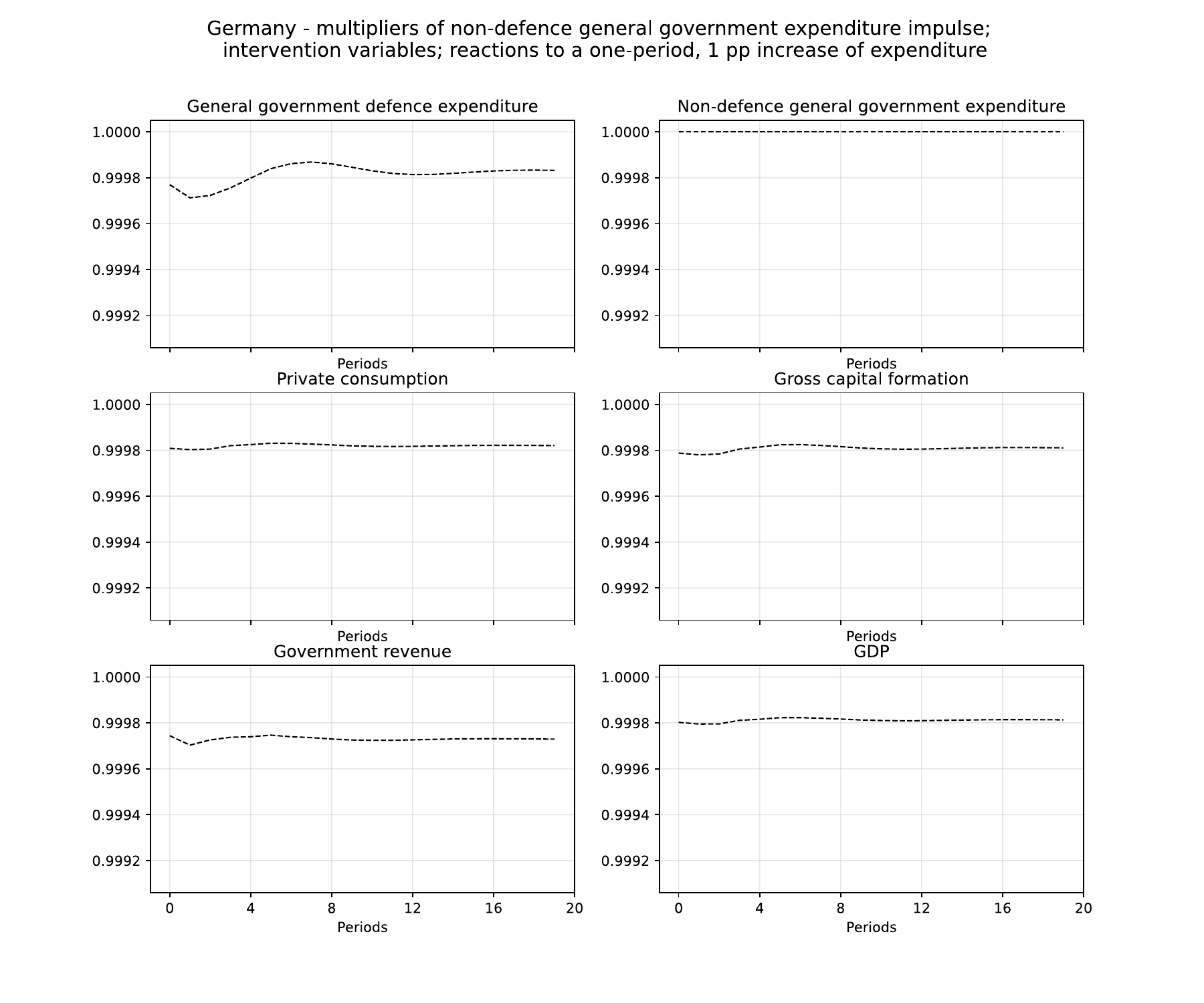}
	\caption{Germany. Multipliers of an impulse of non-defence general government expenditure, with intervention variables; reaction to a one-period increase of expenditure of 1 percentage point. The units of the impulse are percentage points.}
	\label{fig:de001}
\end{figure} Figures \ref{fig:de001} and \ref{fig:de003} present, for Germany, the multipliers \ref{eq:multiplier} of all six variables computed for a one-period increase of non-defence general government expenditure of one percentage point and of three percentage points respectively, as functions of the horizon.

\begin{figure}[htbp]
	\centering
	\includegraphics[width=\textwidth]{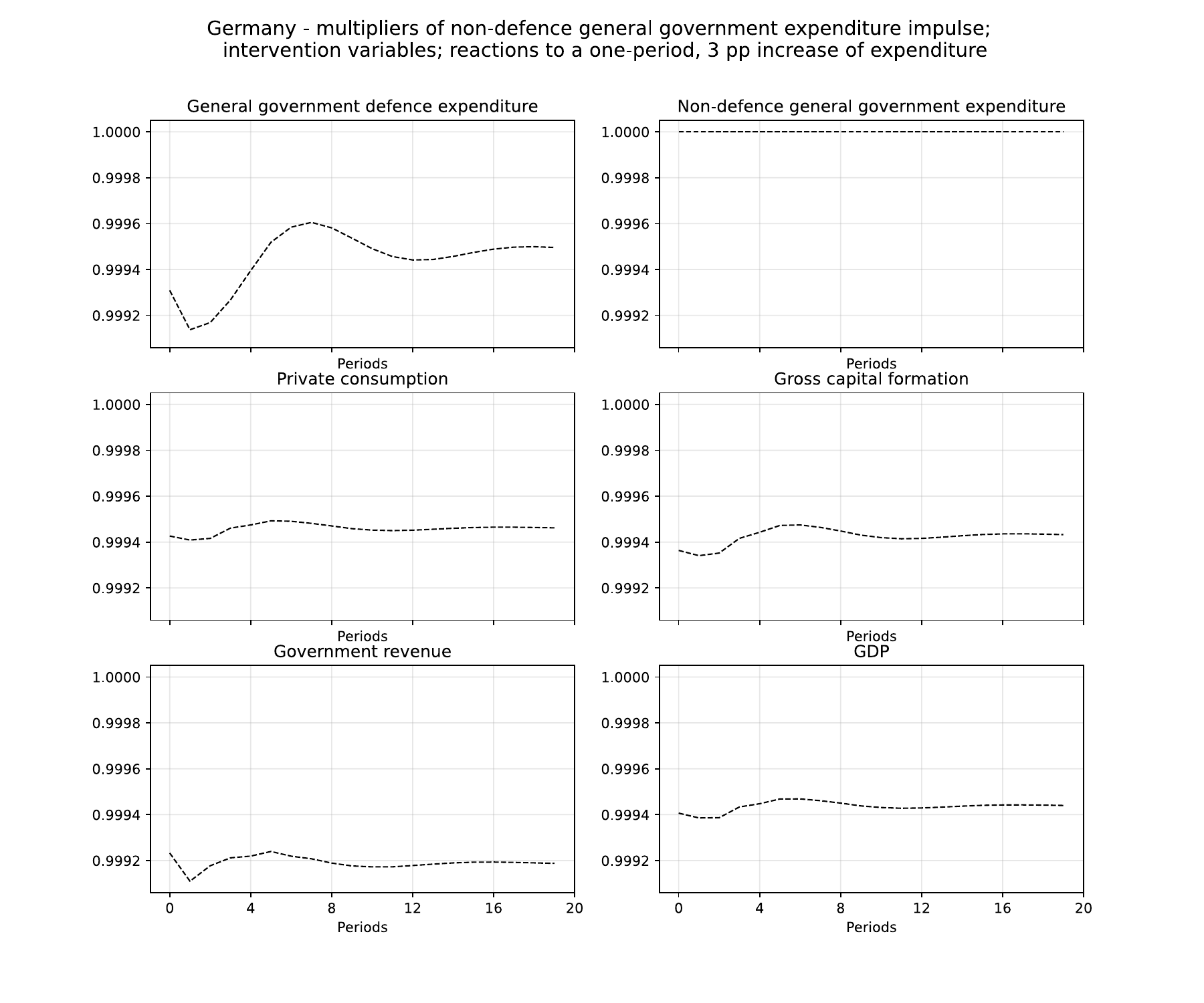}
	\caption{Germany. Multipliers of an impulse of non-defence general government expenditure, with intervention variables; reaction to a one-period increase of expenditure of 3 percentage points. The units of the impulse are percentage points.}
	\label{fig:de003}
\end{figure}

Two features are visible. First, the size of the multiplier differs for all horizons: the value of the multiplier obtained for an impulse of one percentage point is not the value obtained for an impulse of three percentage points, at any horizon. Second, the difference is not proportional to the difference in the size of the impulse. The shapes of the multiplier corresponding to a one-percentage-point shock and of the multiplier corresponding to a three-percentage-point shock differ nonlinearly, which is precisely the property implied by \ref{eq:18} and which an additive multiplier, being linear in both the numerator and the denominator, cannot exhibit.

\subsection{Differences of multiplier sizes across countries}\label{s4.2}

The second result concerns cross-country heterogeneity. Figures \ref{fig:m_de}-\ref{fig:m_pl} report the multipliers of an impulse of non-defence general government expenditure of one percentage point for the six largest economies of the European Union: Germany, France, Italy, Spain, the Netherlands and Poland.

\begin{figure}[htbp]
	\centering
	\includegraphics[width=\textwidth]{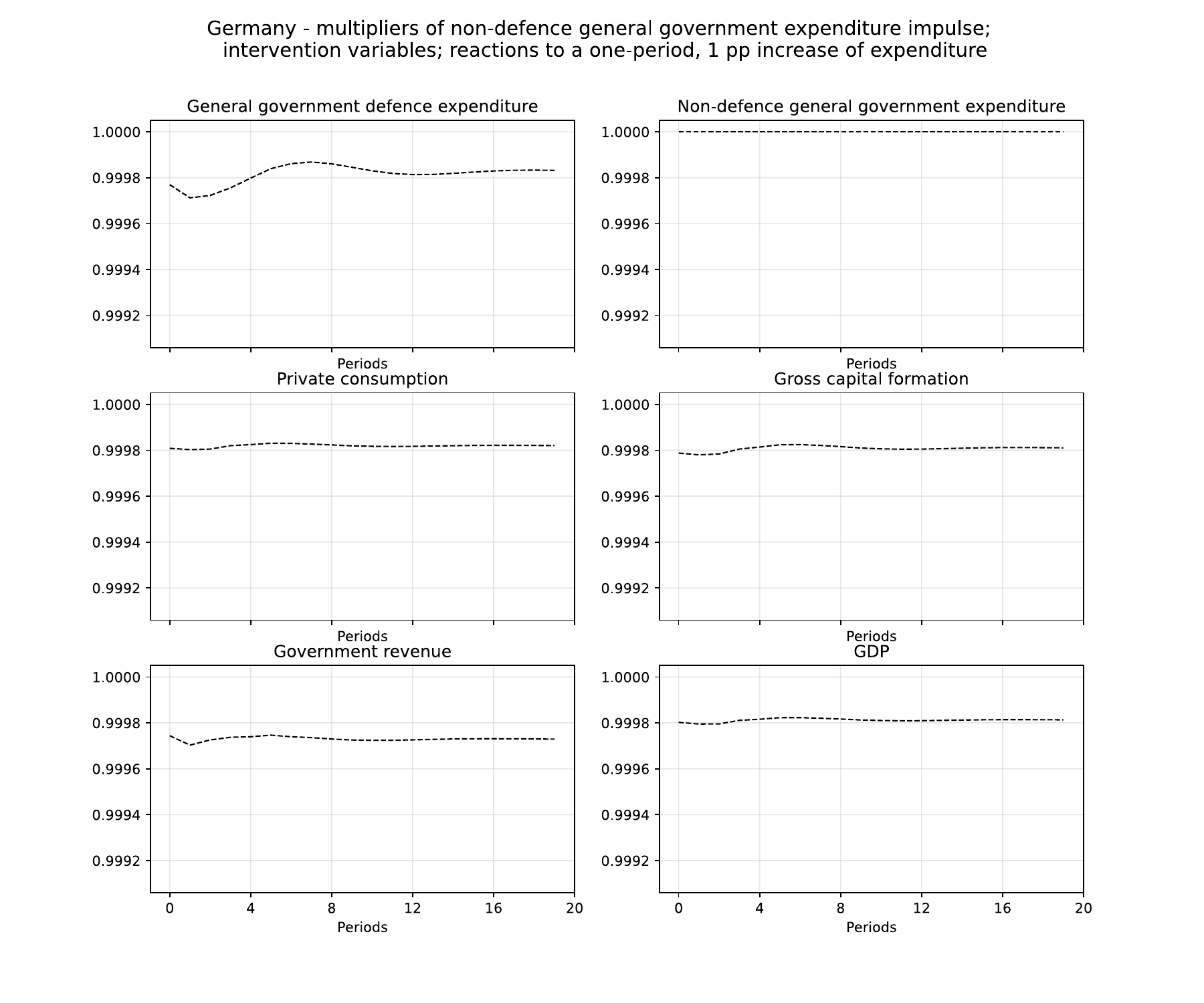}
	\caption{Germany. Multipliers of an impulse of non-defence general government expenditure of 1 percentage point.}
	\label{fig:m_de}
\end{figure}

\begin{figure}[htbp]
	\centering
	\includegraphics[width=\textwidth]{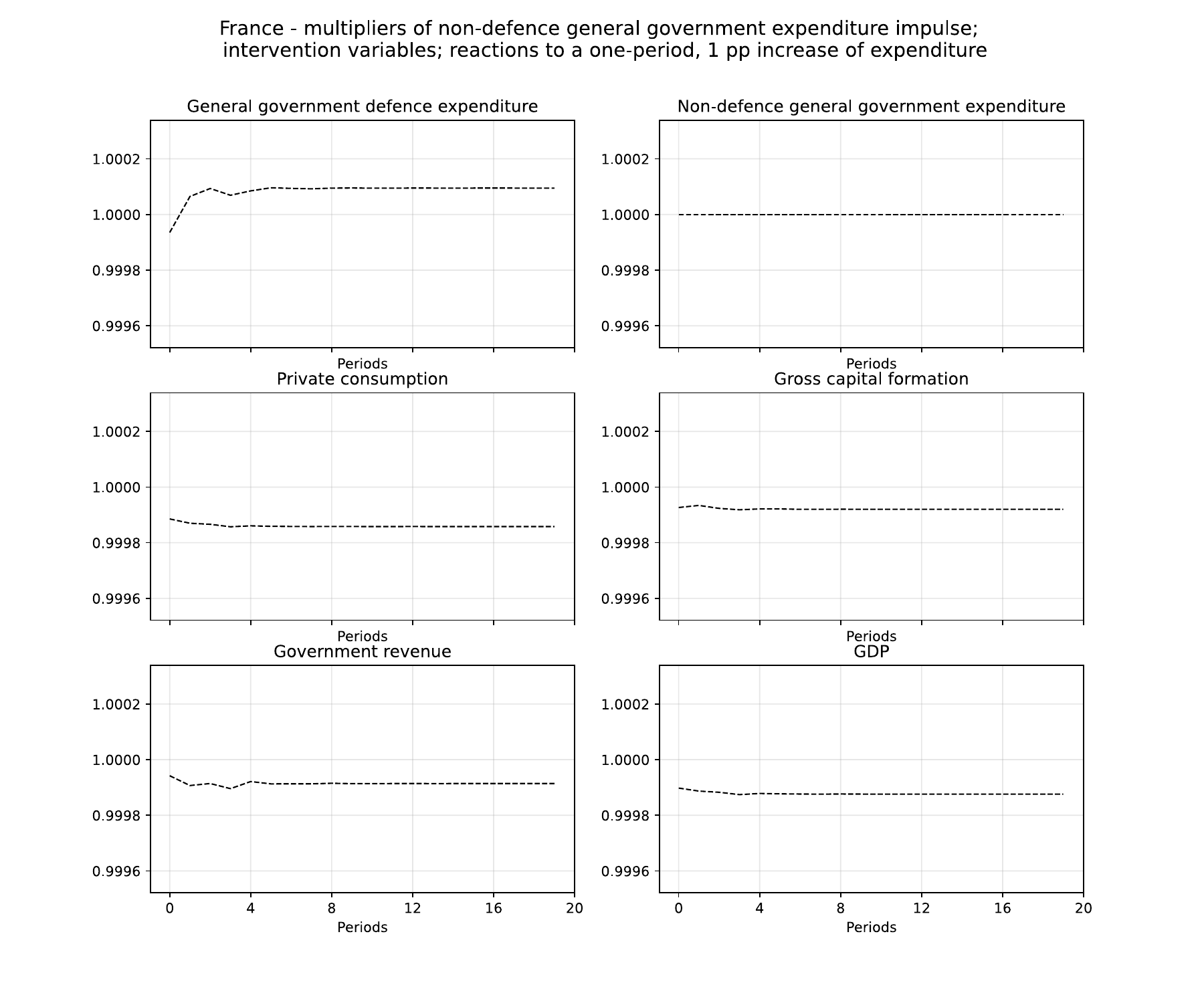}
	\caption{France. Multipliers of an impulse of non-defence general government expenditure of 1 percentage point.}
	\label{fig:m_fr}
\end{figure}

\begin{figure}[htbp]
	\centering
	\includegraphics[width=\textwidth]{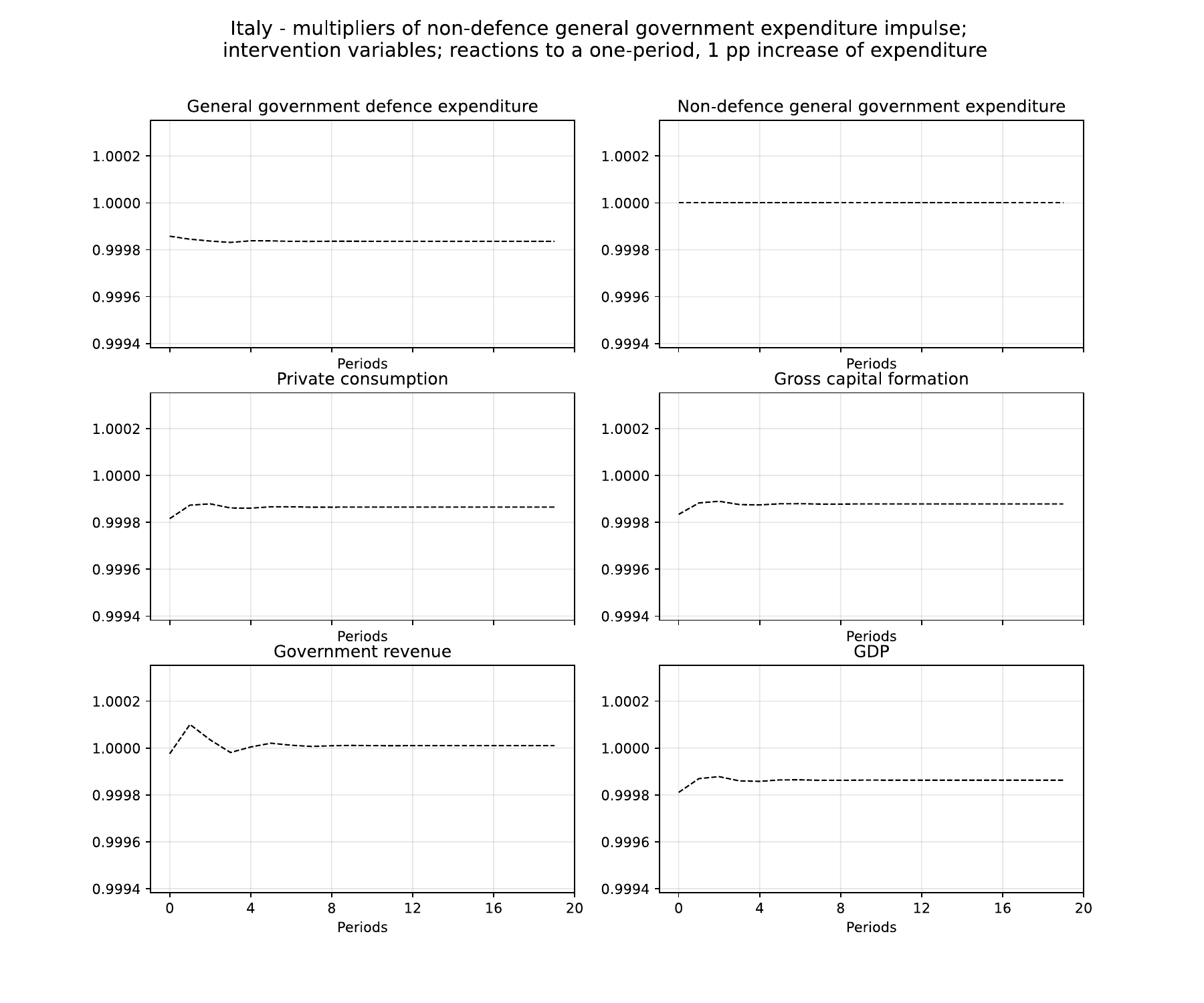}
	\caption{Italy. Multipliers of an impulse of non-defence general government expenditure of 1 percentage point.}
	\label{fig:m_it}
\end{figure}

\begin{figure}[htbp]
	\centering
	\includegraphics[width=\textwidth]{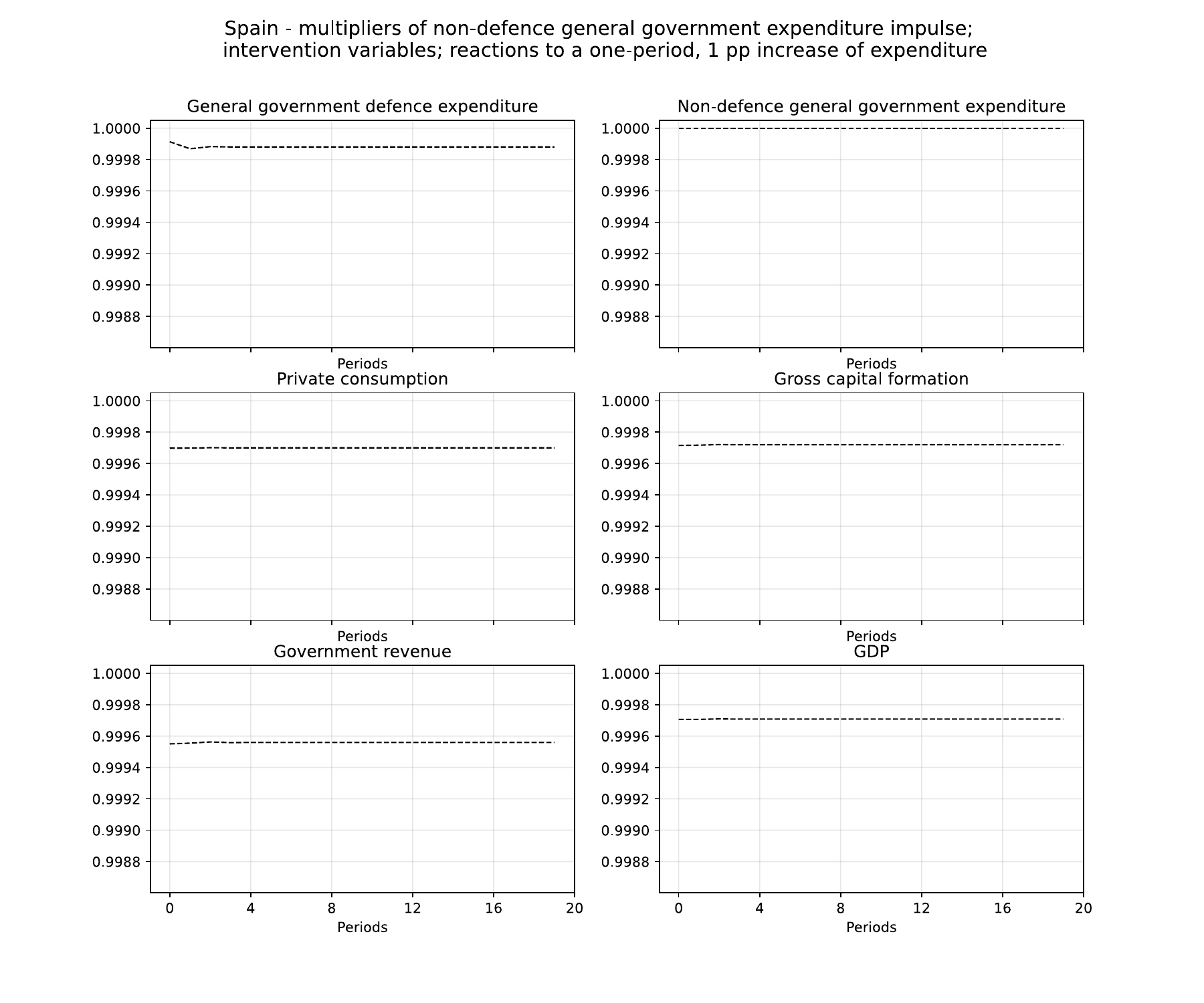}
	\caption{Spain. Multipliers of an impulse of non-defence general government expenditure of 1 percentage point.}
	\label{fig:m_es}
\end{figure}

\begin{figure}[htbp]
	\centering
	\includegraphics[width=\textwidth]{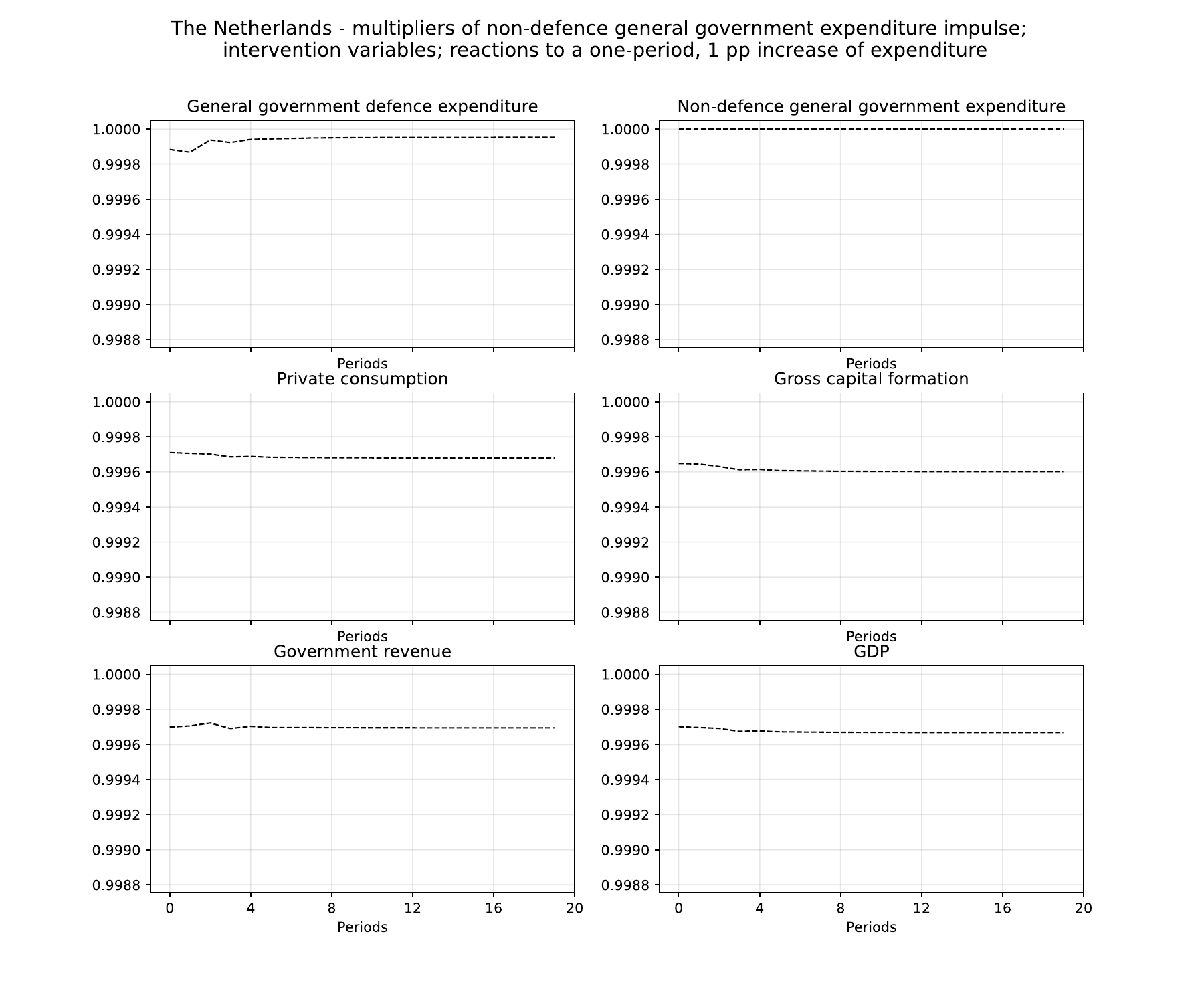}
	\caption{The Netherlands. Multipliers of an impulse of non-defence general government expenditure of 1 percentage point.}
	\label{fig:m_nl}
\end{figure}

\begin{figure}[htbp]
	\centering
	\includegraphics[width=\textwidth]{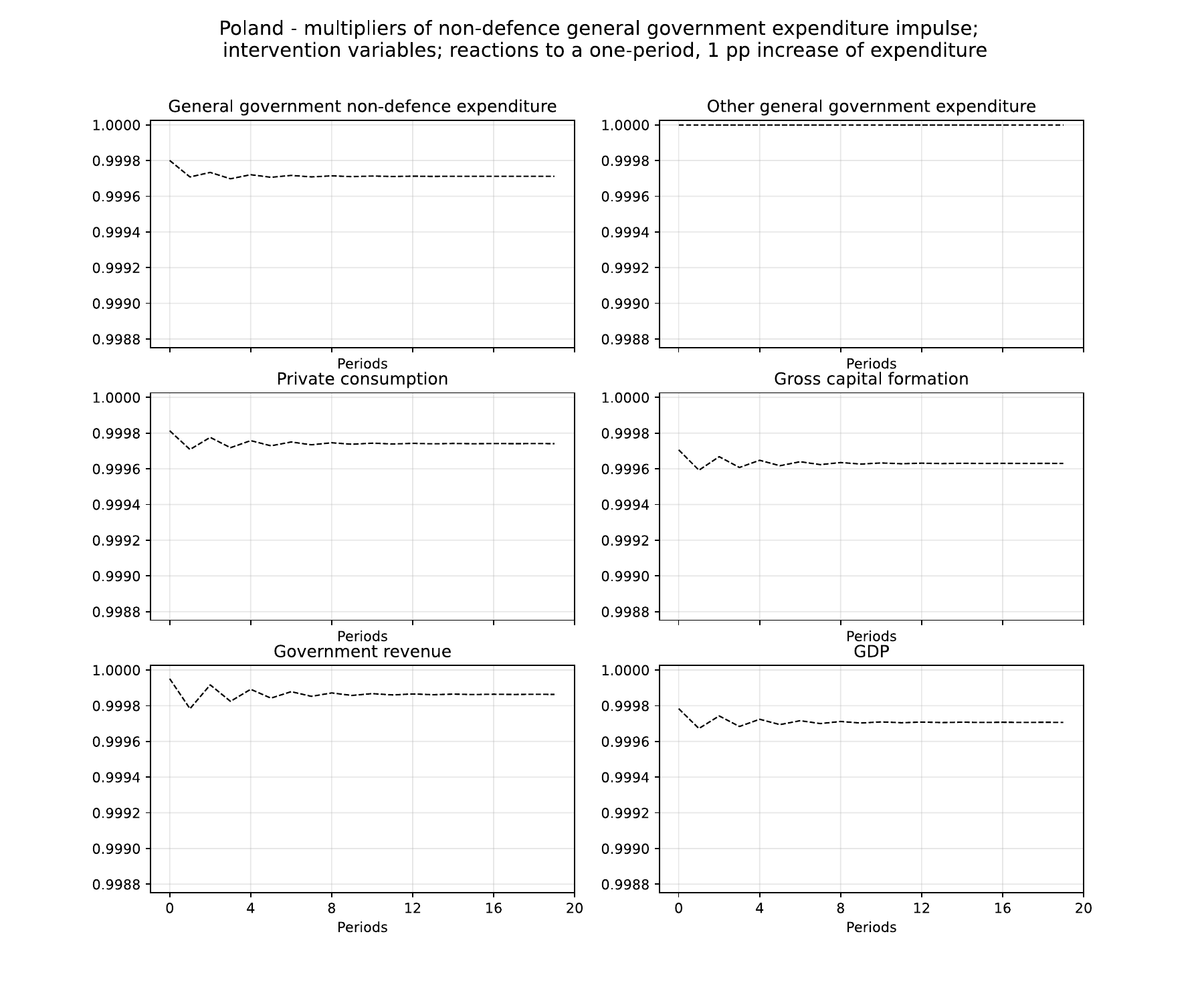}
	\caption{Poland. Multipliers of an impulse of non-defence general government expenditure of 1 percentage point.}
	\label{fig:m_pl}
\end{figure}

As is clearly shown by the Figures \ref{fig:m_de}-\ref{fig:m_pl}, fiscal multipliers differ across countries. The cause is the structural and institutional differences between these economies, which the intervention variables control for only in so far as they concern the policy and economic regimes of each country separately. For all six economies, however, the multipliers stabilise within the twenty-year horizon.

\subsection{Estimates of the policy-economic regime-controlling intervention variables}\label{s4.3}

The estimates of the matrix $\Psi$ of coefficients on the intervention variables are reported, country by country, in Tables \ref{tab:psi_spain}-\ref{tab:psi_germany}. Since the rows of $\Psi$ correspond to the equations of the model, the magnitude of the entries in a given row measures how strongly the policy and economic regime shifts the level of the corresponding variable, and hence how much of the variation of that variable would have been attributed to the autoregressive and to the discretionary part of the model had the regimes not been controlled for.

In the case of Germany, the large values in the first equation are concentrated in the last three years of the sample, which points to a large regime change at the end of the sample. In general, changes in tax and economic conditions altered the German economy only to a small extent.

\begin{table}[htbp]
	\centering
	\caption{Estimated coefficients $\hat{\Psi}$ on the policy-economic regime intervention variables, Germany. Rows correspond to the equations of the model, columns to the consecutive regimes $d_{1},\dots,d_{14}$.}
	\label{tab:psi_germany}
	\begin{adjustbox}{max width=\textwidth}
		\small
		\begin{tabular}{lrrrrrrrrrrrrrr}
			\hline
			Equation & $d_{1}$ & $d_{2}$ & $d_{3}$ & $d_{4}$ & $d_{5}$ & $d_{6}$ & $d_{7}$ & $d_{8}$ & $d_{9}$ & $d_{10}$ & $d_{11}$ & $d_{12}$ & $d_{13}$ & $d_{14}$ \\
			\hline
			Defence expenditure & $-$0.09147 & $-$0.03572 & $-$0.06441 & $-$0.06982 & $-$0.22868 & $-$0.00892 & $-$0.02571 & $-$0.08986 & $-$0.05317 & 0.00463 & 0.00399 & 0.11017 & 0.19221 & 0.09039 \\
			Non-defence expenditure & $-$0.01477 & $-$0.01030 & 0.00479 & $-$0.01842 & $-$0.02069 & 0.01891 & 0.00204 & $-$0.02706 & 0.01370 & $-$0.01520 & $-$0.01646 & 0.00382 & $-$0.03078 & $-$0.00592 \\
			Private consumption & 0.00228 & 0.00053 & 0.00258 & 0.00274 & 0.00505 & 0.00739 & 0.00796 & 0.00645 & 0.00557 & 0.00364 & 0.00340 & $-$0.00691 & $-$0.00635 & $-$0.00199 \\
			Gross capital formation & $-$0.00754 & 0.00010 & 0.00575 & 0.01128 & 0.00552 & 0.00064 & $-$0.00258 & 0.00616 & $-$0.00201 & 0.00386 & $-$0.01049 & $-$0.01460 & $-$0.00342 & 0.01315 \\
			Government revenue & $-$0.04593 & $-$0.01879 & $-$0.05086 & $-$0.02075 & 0.02734 & $-$0.02868 & $-$0.01032 & 0.03723 & 0.01969 & 0.05310 & 0.03597 & $-$0.04759 & $-$0.00076 & $-$0.00937 \\
			GDP & 0.00505 & 0.00393 & 0.00506 & 0.00809 & 0.00914 & 0.01189 & 0.00987 & 0.01121 & 0.00985 & 0.00859 & 0.00699 & $-$0.00426 & $-$0.00555 & $-$0.00529 \\
			\hline
		\end{tabular}
	\end{adjustbox}
\end{table}

For France, the large values are confined to the first equation, which indicates a great volatility of the autoregressive and discretionary parts of public defence expenditure. In general, changes in tax and economic conditions had a strong impact on the French economy, but a small one on the fiscal policy regime, as can be seen from the estimates for the second equation.

\begin{table}[htbp]
	\centering
	\caption{Estimated coefficients $\hat{\Psi}$ on the policy-economic regime intervention variables, France. Rows correspond to the equations of the model, columns to the consecutive regimes $d_{1},\dots,d_{12}$.}
	\label{tab:psi_france}
	\begin{adjustbox}{max width=\textwidth}
		\small
		\begin{tabular}{lrrrrrrrrrrrr}
			\hline
			Equation & $d_{1}$ & $d_{2}$ & $d_{3}$ & $d_{4}$ & $d_{5}$ & $d_{6}$ & $d_{7}$ & $d_{8}$ & $d_{9}$ & $d_{10}$ & $d_{11}$ & $d_{12}$ \\
			\hline
			Defence expenditure & 0.33926 & 0.34866 & 0.20783 & 0.32258 & 0.34242 & 0.34884 & 0.30816 & 0.37064 & $-$0.07291 & 0.34345  & 0.40661 & 0.41563 \\
			Non-defence expenditure & $-$0.01479 & 0.00344 & 0.01298 & 0.02274 & 0.01476 & 0.01164 & 0.00115 & $-$0.00977 & 0.00678& $-$0.01608  & 0.00404 & $-$0.00553 \\
			Private consumption & 0.05826 & 0.05591 & 0.05880 & 0.05875 & 0.05932 & 0.05509 & 0.06371 & 0.05902 & 0.00144 & 0.05763  & 0.05827 & 0.05916 \\
			Gross capital formation & 0.04809 & 0.04010 & 0.06190 & 0.08264 & 0.07695 & 0.07067 & 0.06783 & 0.08775 & $-$0.01798 & 0.05174  & 0.04359 & 0.04231 \\
			Government revenue & 0.07028 & 0.02722 & $-$0.05475 & $-$0.00879 & 0.03119 & $-$0.02733 & 0.03371 & 0.02277 & 0.08807 & $-$0.01683  & 0.03239 & 0.03230 \\
			GDP & 0.08122 & 0.07810 & 0.08099 & 0.08160 & 0.08339 & 0.07993 & 0.08516 & 0.08593 & $-$0.00023& 0.07921  & 0.08007 & 0.08048 \\
			\hline
		\end{tabular}
	\end{adjustbox}
\end{table}

As for Italy, the estimates are again large in absolute value for the first two equations and for the fifth, so that the same conclusion applies to the volatility of the autoregressive and discretionary parts of government expenditure and revenue. The coefficients in the defence expenditure and government revenue equations lie between $0.39$ and $0.62$ and between $0.56$ and $0.68$ respectively in every one of the eleven regimes, while those in the private consumption and GDP equations are of the order of four hundredths; the successive policy and economic frameworks therefore shift the level of the fiscal variables by an order of magnitude more than they shift the level of the private variables. Within each equation, on the other hand, the entries vary comparatively little from regime to regime, so that what the intervention variables remove from the residuals is largely a common level effect of the legal and policy framework rather than a sequence of sharply distinct shifts. In general, changes in tax and economic conditions had a strong impact on the Italian economy, and a very large one on public spending.

\begin{table}[htbp]
	\centering
	\caption{Estimated coefficients $\hat{\Psi}$ on the policy-economic regime intervention variables, Italy. Rows correspond to the equations of the model, columns to the consecutive regimes $d_{1},\dots,d_{11}$.}
	\label{tab:psi_italy}
	\begin{adjustbox}{max width=\textwidth}
		\small
		\begin{tabular}{lrrrrrrrrrrr}
			\hline
			Equation & $d_{1}$ & $d_{2}$ & $d_{3}$ & $d_{4}$ & $d_{5}$ & $d_{6}$ & $d_{7}$ & $d_{8}$ & $d_{9}$ & $d_{10}$ & $d_{11}$ \\
			\hline
			Defence expenditure & 0.46396 & 0.50699 & 0.55881 & 0.45210 & 0.49026 & 0.39083 & 0.44740 & 0.49850 & 0.44555 & 0.47579 & 0.61120 \\
			Non-defence expenditure & $-$0.08198 & $-$0.05536 & $-$0.02631 & $-$0.03728 & $-$0.04613 & $-$0.04907 & $-$0.03083 & $-$0.06575 & $-$0.02047 & $-$0.02859 & $-$0.05127 \\
			Private consumption & 0.04101 & 0.04388 & 0.04191 & 0.04029 & 0.04077 & 0.04379 & 0.04197 & 0.04064 & 0.03956 & 0.04247 & 0.03905 \\
			Gross capital formation & $-$0.07544 & $-$0.05955 & $-$0.04521 & $-$0.06247 & $-$0.04897 & $-$0.04182 & $-$0.05077 & $-$0.04769 & $-$0.05505 & $-$0.03025 & $-$0.03951 \\
			Government revenue & 0.67443 & 0.61194 & 0.61106 & 0.62774 & 0.56128 & 0.60769 & 0.59287 & 0.63486 & 0.61051 & 0.62105 & 0.55600 \\
			GDP & 0.03557 & 0.03631 & 0.03421 & 0.03226 & 0.03225 & 0.03636 & 0.03383 & 0.03773 & 0.03828 & 0.03655 & 0.03186 \\
			\hline
		\end{tabular}
	\end{adjustbox}
\end{table}

The estimates for Spain are large in absolute value for the first two equations and for the fifth. This indicates a great volatility of the autoregressive and discretionary parts of government expenditure, and of government revenue. In general, changes in tax and economic conditions had a strong and negative impact on the Spanish economy, and a large to very large one on public spending.
% Do sprawdzenia / to be verified.

\begin{table}[htbp]
	\centering
	\caption{Estimated coefficients $\hat{\Psi}$ on the policy-economic regime intervention variables, Spain. Rows correspond to the equations of the model, columns to the consecutive regimes $d_{1},\dots,d_{8}$.}
	\label{tab:psi_spain}
	\begin{adjustbox}{max width=\textwidth}
		\small
		\begin{tabular}{lrrrrrrrr}
			\hline
			Equation & $d_{1}$ & $d_{2}$ & $d_{3}$ & $d_{4}$ & $d_{5}$ & $d_{6}$ & $d_{7}$ & $d_{8}$ \\
			\hline
			Defence expenditure & $-$1.20534 & $-$1.20605 & $-$1.24336 & $-$1.31785 & $-$1.33570 & $-$1.25144 & $-$1.23244 & $-$1.20480 \\
			Non-defence expenditure & $-$0.32965 & $-$0.32227 & $-$0.30209 & $-$0.30949 & $-$0.39621 & $-$0.35612 & $-$0.30396 & $-$0.33265 \\
			Private consumption & 0.13222 & 0.13229 & 0.13777 & 0.14241 & 0.14356 & 0.13904 & 0.13357 & 0.13523 \\
			Gross capital formation & 0.09741 & 0.10257 & 0.10991 & 0.10800 & 0.10512 & 0.10381 & 0.10402 & 0.10862 \\
			Government revenue & 1.19521 & 1.19932 & 1.22796 & 1.07024 & 1.12600 & 1.18043 & 1.17649 & 1.23967 \\
			GDP & 0.13441 & 0.13337 & 0.13681 & 0.13874 & 0.14246 & 0.14078 & 0.13745 & 0.13671 \\
			\hline
		\end{tabular}
	\end{adjustbox}
\end{table}

For the Netherlands the estimates are small in every equation, and in this respect the Dutch case resembles the German one rather than the Spanish or the Italian. The largest entries lie in the defence expenditure equation, between $-0.36953$ and $0.01119$, and in the government revenue equation, between $-0.12471$ and $-0.01056$, while the entries of the equations for private consumption and for GDP are of the order of a few thousandths. The signs are systematic: government revenue, gross capital formation and GDP are shifted downwards by every one of the fourteen regimes, and defence expenditure by all of them but one, whereas private consumption is shifted upwards by all of them but two. The largest individual shifts fall in the years of the pandemic and of its aftermath, the regime for 2021 carrying $-0.36953$ in the defence equation and the regime for 2020 carrying $-0.12471$ in the revenue equation and $0.09221$ in the non-defence expenditure equation, the largest entry of that row. The successive policy and economic frameworks therefore moved the level of the Dutch fiscal variables only slightly, and the level of the private variables less still, so that little of the observed variation of the Dutch system is regime-driven, and correspondingly little would have been misattributed to the autoregressive and to the discretionary part had the regimes not been controlled for.

\begin{table}[htbp]
	\centering
	\caption{Estimated coefficients $\hat{\Psi}$ on the policy-economic regime intervention variables, Netherlands. Rows correspond to the equations of the model, columns to the consecutive regimes $d_{1},\dots,d_{14}$.}
	\label{tab:psi_netherlands}
	\begin{adjustbox}{max width=\textwidth}
		\small
		\begin{tabular}{lrrrrrrrrrrrrrr}
			\hline
			Equation & $d_{1}$ & $d_{2}$ & $d_{3}$ & $d_{4}$ & $d_{5}$ & $d_{6}$ & $d_{7}$ & $d_{8}$ & $d_{9}$ & $d_{10}$ & $d_{11}$ & $d_{12}$ & $d_{13}$ & $d_{14}$ \\
			\hline
			Defence expenditure & $-$0.15286 & $-$0.11922 & $-$0.02508 & $-$0.00715 & $-$0.11460 & $-$0.14111 & $-$0.04587 & $-$0.03124 & $-$0.04993 & 0.01119 & $-$0.03018 & $-$0.36953 & $-$0.17633 & $-$0.09172 \\
			Non-defence expenditure & 0.00188 & 0.04419 & 0.02996 & 0.00372 & 0.02024 & $-$0.02657 & $-$0.00316 & 0.00400 & $-$0.00039 & 0.04006 & 0.09221 & $-$0.00637 & 0.01534 & 0.04023 \\
			Private consumption & 0.00285 & 0.00498 & 0.00680 & 0.00546 & 0.00776 & 0.00556 & 0.00727 & 0.00363 & 0.00081 & 0.00327 & 0.00139 & $-$0.00003 & 0.00139 & $-$0.00181 \\
			Gross capital formation & $-$0.00430 & $-$0.02024 & $-$0.02458 & $-$0.03368 & $-$0.04271 & $-$0.00013 & $-$0.00768 & $-$0.02956 & $-$0.03243 & $-$0.02722 & $-$0.04082 & $-$0.04083 & $-$0.06268 & $-$0.05668 \\
			Government revenue & $-$0.06290 & $-$0.06610 & $-$0.01056 & $-$0.03425 & $-$0.07197 & $-$0.07573 & $-$0.06701 & $-$0.04485 & $-$0.03695 & $-$0.04884 & $-$0.12471 & $-$0.07233 & $-$0.01959 & $-$0.06307 \\
			GDP & $-$0.00502 & $-$0.00222 & $-$0.00571 & $-$0.00636 & $-$0.00559 & $-$0.00114 & $-$0.00172 & $-$0.00406 & $-$0.00363 & $-$0.00779 & $-$0.00998 & $-$0.00528 & $-$0.00976 & $-$0.01193 \\
			\hline
		\end{tabular}
	\end{adjustbox}
\end{table}

For Poland, the estimates are large for the first two equations and for the fifth; the entries in the equation for gross capital formation are large as well. In general, changes in tax and economic conditions had a strong impact on the Polish economy, and a large to very large one on public spending.

\begin{table}[htbp]
	\centering
	\caption{Estimated coefficients $\hat{\Psi}$ on the policy-economic regime intervention variables, Poland. Rows correspond to the equations of the model, columns to the consecutive regimes $d_{1},\dots,d_{7}$.}
	\label{tab:psi_poland}
	\begin{adjustbox}{max width=\textwidth}
		\small
		\begin{tabular}{lrrrrrrr}
			\hline
			Equation & $d_{1}$ & $d_{2}$ & $d_{3}$ & $d_{4}$ & $d_{5}$ & $d_{6}$ & $d_{7}$ \\
			\hline
			Defence expenditure & 0.21974 & 0.23061 & 0.01325 & 0.06790 & 0.00409 & 0.08759 & 0.54017 \\
			Non-defence expenditure & 0.12359 & 0.11708 & 0.14794 & 0.05462 & 0.10504 & 0.15702 & 0.20763 \\
			Private consumption & 0.02315 & 0.00976 & 0.01832 & 0.01064 & $-$0.00179 & 0.00687 & 0.00282 \\
			Gross capital formation & 0.19339 & 0.15852 & 0.14013 & 0.13846 & 0.10581 & 0.11871 & 0.11974 \\
			Government revenue & 0.10627 & 0.17551 & 0.07635 & 0.14231 & 0.22226 & 0.18041 & 0.18362 \\
			GDP & 0.11538 & 0.09967 & 0.09208 & 0.09451 & 0.07871 & 0.07978 & 0.07899 \\
			\hline
		\end{tabular}
	\end{adjustbox}
\end{table}

\subsection{Estimates of the autoregressive approximation of the dynamics}\label{s4.4}

The estimated constants and autoregressive coefficients are reported in Tables \ref{tab:var_germany}-\ref{tab:var_poland}. In those tables the columns are the equations of the model and the rows are the regressors, so that an equation is read down a column: the entry in the row labelled $c_{t-1}$ and in the column headed $r$ is the coefficient of lagged private consumption in the equation for government revenue. This is the opposite arrangement to that of Tables \ref{tab:psi_spain}-\ref{tab:psi_germany}, in which the equations are the rows and the regimes the columns; the two orientations are dictated by the number of regimes, which in the German case would make a transposed table unreadable. In the discussion below an equation is therefore named by its dependent variable and not by a row index. Orders of magnitude are counted as powers of ten over the forty-two entries of each table, that is, over the six constants and the thirty-six autoregressive coefficients, and the coefficients on the intervention variables, discussed in Section \ref{s4.3}, are not included in the counts. As a summary measure of how strongly an equation depends on the state of the economy in the preceding year, the sum of the absolute values of its six autoregressive coefficients, $\sum_{j}\lvert\hat{\phi}_{ij}\rvert$, is reported alongside.

For Germany the entries are distributed over the orders of magnitude as follows: three of order $10^{1}$, eleven of order $10^{0}$, sixteen of order $10^{-1}$, ten of order $10^{-2}$, one of order $10^{-3}$ and one of order $10^{-4}$. The sums of absolute autoregressive coefficients are $21.20$ in the defence expenditure equation, $12.35$ in the government revenue equation, $7.09$ in the non-defence expenditure equation, $4.30$ in the equation for gross capital formation, $1.18$ in the GDP equation and $0.70$ in the private consumption equation. The three fiscal equations are thus far more strongly tied to the previous year's state of the economy than the three private-sector equations, and the two extremes differ by a factor of thirty. The largest single entry of the system is $-10.15$, the coefficient of lagged private consumption in the revenue equation, followed closely by $-10.14$ on lagged GDP and $5.27$ on lagged consumption in the defence equation and by $5.60$ on lagged GDP in the non-defence equation. The coefficients above unity in absolute value in the two expenditure equations most likely result from the composite effect of regular social transfers, subsidies, compensation of employees and pensions, which respond to consumption and to the government's new debt-taking. Given that many wages and transfers of this kind are not adjusted every period, these are cumulative estimates of what structurally ought to be equations with more than one lag on some of the variables. In the GDP equation the coefficients on lagged consumption, $-0.58$, and on lagged investment, $-0.25$, are both negative, which reflects a two-year cyclical pattern in private demand rather than a detrimental effect of consumption or investment on future output; an autoregression of order one cannot represent a cyclical pattern, and, as explained in Section \ref{s3.2}, a standard VAR does not allow the lag order to be chosen separately for individual variables.

\begin{table}[htbp]
	\centering
	\caption{Estimated constant and autoregressive coefficients of the VARX(1) model, Germany. Columns correspond to the equations of the six endogenous variables.}
	\label{tab:var_germany}
	\begin{adjustbox}{max width=\textwidth}
		\small
		\begin{tabular}{lrrrrrr}
			\hline
			& $g^{d}$ & $g^{n}$ & $c$ & $i$ & $r$ & $y$ \\
			\hline
			Constant & 2.48680 & $-$3.95233 & 1.08077 & 1.55910 & 13.46873 & 1.48464 \\
			$g^{d}_{t-1}$ & 0.32096 & 0.09477 & 0.00065 & 0.00896 & $-$0.31652 & 0.02140 \\
			$g^{n}_{t-1}$ & $-$0.45899 & 0.12979 & 0.02010 & $-$0.03493 & $-$0.04929 & 0.01035 \\
			$c_{t-1}$ & 5.27162 & 0.19675 & $-$0.21785 & 1.30352 & $-$10.14560 & $-$0.57822 \\
			$i_{t-1}$ & 4.32514 & $-$1.04636 & $-$0.16914 & 0.55829 & $-$0.87574 & $-$0.25371 \\
			$r_{t-1}$ & $-$0.68616 & $-$0.02133 & 0.01441 & $-$0.01040 & $-$0.39178 & 0.03337 \\
			$y_{t-1}$ & $-$10.13913 & 5.60225 & 0.27323 & $-$2.37926 & $-$0.56716 & 0.28098 \\
			\hline
		\end{tabular}
	\end{adjustbox}
\end{table}

For France no entry is of order $10^{1}$; there are eight of order $10^{0}$, twenty of order $10^{-1}$, twelve of order $10^{-2}$, one of order $10^{-3}$ and one of order $10^{-4}$. The sums of absolute autoregressive coefficients are $14.64$ in the defence expenditure equation, $9.47$ in the government revenue equation, $2.31$ in the non-defence expenditure equation, $2.26$ in the equation for gross capital formation, $1.10$ in the GDP equation and $0.40$ in the private consumption equation. The ordering is the same as in Germany, but the spread between the fiscal and the private-sector equations is narrower, and no coefficient of the system reaches ten in absolute value. The largest entries are $-8.55$ on lagged consumption and $2.66$ on lagged GDP in the defence equation, and $-4.03$ on lagged consumption, $2.81$ on lagged GDP and $1.57$ on lagged non-defence expenditure in the revenue equation. The last of these is notable, since it is the only case in the six countries in which the previous year's growth of non-defence expenditure carries a coefficient above unity in the revenue equation, and it is consistent with a budgetary practice in which an expansion of expenditure is followed by measures on the revenue side. In the GDP equation the coefficient on lagged GDP is $0.54$ and that on lagged consumption $-0.48$, which again reflects the cumulative effect of more than one lag compressed into a model of order one.

\begin{table}[htbp]
	\centering
	\caption{Estimated constant and autoregressive coefficients of the VARX(1) model, France. Columns correspond to the equations of the six endogenous variables.}
	\label{tab:var_france}
	\begin{adjustbox}{max width=\textwidth}
		\small
		\begin{tabular}{lrrrrrr}
			\hline
			& $g^{d}$ & $g^{n}$ & $c$ & $i$ & $r$ & $y$ \\
			\hline
			Constant & 3.75408 & 0.02458 & 0.64393 & 0.67356 & 0.14216 & 0.89609 \\
			$g^{d}_{t-1}$ & $-$0.22915 & 0.10671 & $-$0.01071 & $-$0.02063 & $-$0.28953 & $-$0.02327 \\
			$g^{n}_{t-1}$ & $-$0.07728 & 0.20549 & $-$0.02699 & 0.14302 & 1.57199 & 0.02389 \\
			$c_{t-1}$ & $-$8.54824 & 0.80912 & 0.18183 & $-$0.53550 & $-$4.02990 & $-$0.47754 \\
			$i_{t-1}$ & 2.50776 & $-$0.65789 & 0.05080 & $-$0.43940 & 0.76337 & $-$0.03369 \\
			$r_{t-1}$ & 0.62002 & 0.01341 & $-$0.01111 & 0.05322 & 0.00066 & $-$0.00154 \\
			$y_{t-1}$ & 2.65912 & 0.51317 & 0.12101 & 1.06975 & 2.81033 & 0.54343 \\
			\hline
		\end{tabular}
	\end{adjustbox}
\end{table}

For Italy there is one entry of order $10^{1}$, nine of order $10^{0}$, twenty of order $10^{-1}$, seven of order $10^{-2}$ and five of order $10^{-3}$.

The feature that dominates the Italian estimates is the size of the autoregressive part of the three fiscal equations relative to that of the three private-sector equations. Summing the absolute values of the six autoregressive coefficients within each equation gives $5.52$ for defence expenditure, $7.27$ for non-defence expenditure and $17.13$ for government revenue, against $1.06$ for private consumption, $2.74$ for gross capital formation and $1.15$ for GDP. The systematic, non-discretionary part of Italian fiscal policy is thus an order of magnitude more responsive to the state of the economy in the preceding year than the private-sector equations are, and this remains true after the successive policy and economic frameworks have been controlled for by the intervention variables, so that it cannot be attributed to the regime shifts themselves. The individual coefficients show the same thing: the equation for government revenue carries $-11.53$ on lagged GDP, $2.60$ on lagged consumption and $2.14$ on lagged investment, and the equations for the two categories of expenditure carry $-2.17$ and $-1.87$ and $-2.51$ and $3.997$ on the same private-sector variables. The regular, rule-like parts of Italian government expenditure and revenue are therefore extremely volatile. Two readings of this are possible and the model does not discriminate between them: either economic policy in Italy is itself highly unstable, so that what is treated here as a systematic rule is in fact a rule that was repeatedly re-set within the regimes identified in Section \ref{s3.3}, or the Italian economy is volatile enough that a rule of moderate elasticity generates fiscal series of this amplitude. In either case the implication for the multiplier is the same, since a large autoregressive part means that a correspondingly small share of the observed movement of the fiscal variables is available to be identified as a discretionary shock.

As for the interpretation of the individual entries, the coefficients larger than one in absolute value on public spending and on government revenue most likely result from the composite effect of regular social transfers, subsidies, paid compensation of employees and pensions on consumption and on new debt-taking of the government. Given that many wages and such transfers are not adjusted every period, this estimate is a cumulative estimate of what structurally ought to be an equation with more than one lag on some of the variables. The negative coefficient of the gross growth rate of government revenue in the equation for the gross growth rate of GDP is indicative of the negative impact of tax collections and debt-taking. Again, the large coefficient on the gross growth of investment shows the cumulative effects of investment in more than one previous period; selecting a different lag order for individual variables is not possible in a standard VAR, hence this form and these cumulative estimates. The negative impact of past consumption on present GDP can be indicative of spending on imports and of too little saving by households.

\begin{table}[htbp]
	\centering
	\caption{Estimated constant and autoregressive coefficients of the VARX(1) model, Italy. Columns correspond to the equations of the six endogenous variables.}
	\label{tab:var_italy}
	\begin{adjustbox}{max width=\textwidth}
		\small
		\begin{tabular}{lrrrrrr}
			\hline
			& $g^{d}$ & $g^{n}$ & $c$ & $i$ & $r$ & $y$ \\
			\hline
			Constant & 5.34138 & $-$0.49304 & 0.45535 & $-$0.55673 & 6.70942 & 0.38520 \\
			$g^{d}_{t-1}$ & $-$0.31001 & 0.08777 & 0.00165 & $-$0.00719 & $-$0.15376 & $-$0.00684 \\
			$g^{n}_{t-1}$ & 0.39249 & 0.32743 & 0.00825 & 0.13047 & 0.62619 & $-$0.01384 \\
			$c_{t-1}$ & $-$2.17210 & $-$0.17681 & 0.74748 & 1.24881 & 2.60432 & 0.79400 \\
			$i_{t-1}$ & $-$1.86505 & $-$2.51029 & 0.02493 & $-$0.53941 & 2.13766 & 0.05734 \\
			$r_{t-1}$ & $-$0.21354 & $-$0.17300 & 0.00117 & $-$0.02125 & 0.07419 & 0.01408 \\
			$y_{t-1}$ & $-$0.56415 & 3.99663 & $-$0.27495 & 0.79347 & $-$11.52994 & $-$0.26110 \\
			\hline
		\end{tabular}
	\end{adjustbox}
\end{table}

For Spain there are two entries of order $10^{1}$, twelve of order $10^{0}$, fifteen of order $10^{-1}$, seven of order $10^{-2}$, four of order $10^{-3}$ and two of order $10^{-4}$. The sums of absolute autoregressive coefficients are $20.68$ in the government revenue equation, $16.62$ in the defence expenditure equation, $11.12$ in the non-defence expenditure equation, $3.87$ in the equation for gross capital formation, $1.05$ in the private consumption equation and $0.22$ in the GDP equation. The contrast between the fiscal and the private-sector equations is the sharpest of the six countries: the revenue equation is almost a hundred times more strongly tied to the previous year's state of the economy than the GDP equation is. The largest entries are $-14.42$ on lagged GDP and $4.90$ on lagged consumption in the revenue equation, $11.29$ on lagged GDP in the defence equation and $7.19$ on lagged GDP in the non-defence equation. The coefficients above unity in absolute value in the expenditure and revenue equations most likely result, as in Germany, from the composite effect of regular social transfers, subsidies, compensation of employees and pensions on consumption and on new debt-taking, and are cumulative estimates of what structurally ought to be equations with more than one lag. The GDP equation, by contrast, is nearly free of autoregressive content once the regimes are controlled for: its six coefficients sum in absolute value to $0.22$, the largest of them being $-0.17$ on lagged GDP itself. Spanish output, on this estimate, is almost entirely explained by the policy and economic framework and by the contemporaneous shock, and hardly at all by the previous year's growth rates.

\begin{table}[htbp]
	\centering
	\caption{Estimated constant and autoregressive coefficients of the VARX(1) model, Spain. Columns correspond to the equations of the six endogenous variables.}
	\label{tab:var_spain}
	\begin{adjustbox}{max width=\textwidth}
		\small
		\begin{tabular}{lrrrrrr}
			\hline
			& $g^{d}$ & $g^{n}$ & $c$ & $i$ & $r$ & $y$ \\
			\hline
			Constant & $-$9.99698 & $-$2.65243 & 1.09610 & 0.83945 & 9.41533 & 1.10073 \\
			$g^{d}_{t-1}$ & $-$0.20082 & 0.00700 & $-$0.00485 & 0.01640 & $-$0.01560 & $-$0.00159 \\
			$g^{n}_{t-1}$ & 0.77337 & 0.13968 & $-$0.01400 & $-$0.00095 & $-$0.22948 & $-$0.00081 \\
			$c_{t-1}$ & $-$2.09340 & $-$2.64704 & $-$0.41187 & 1.46002 & 4.89584 & $-$0.01945 \\
			$i_{t-1}$ & 2.01146 & $-$0.93666 & $-$0.18857 & 0.49838 & 0.76683 & $-$0.02953 \\
			$r_{t-1}$ & 0.24676 & 0.19259 & $-$0.01056 & $-$0.02324 & $-$0.35406 & 0.00326 \\
			$y_{t-1}$ & 11.29242 & 7.19341 & 0.41741 & $-$1.87365 & $-$14.42145 & $-$0.16877 \\
			\hline
		\end{tabular}
	\end{adjustbox}
\end{table}

For the Netherlands there are four entries of order $10^{1}$, eight of order $10^{0}$, fourteen of order $10^{-1}$, nine of order $10^{-2}$, six of order $10^{-3}$ and one of order $10^{-4}$. The sums of absolute autoregressive coefficients are $38.21$ in the defence expenditure equation, $11.62$ in the government revenue equation, $10.96$ in the non-defence expenditure equation, $3.65$ in the equation for gross capital formation, $1.25$ in the GDP equation and $0.38$ in the private consumption equation. The defence equation is the most volatile of any in the six countries, carrying $-19.49$ on lagged GDP, $-14.81$ on lagged consumption and $-1.82$ on lagged investment, and it is followed here, unlike in the other five countries, by the equation for non-defence expenditure, which carries $-6.35$ on lagged GDP and $4.37$ on lagged consumption, and only then by the revenue equation. This is most likely caused by the last observations of the sample, which show abrupt growth, and it suggests a structural change that remains present in the autoregressive part despite the inclusion of the policy-economic regime variables. The smallness of the Dutch entries of $\hat{\Psi}$ reported in Section \ref{s4.3} makes this reading the more compelling, since the regimes themselves absorb very little of the movement of the system, and what is not absorbed by them has to appear either in the autoregressive part or in the residuals. A shift in the composition of general government expenditure of this kind, and the rapid growth of spending on a sector that had been very small in the Netherlands, entails more than proportional increases in regular expenses in the initial stage of the expansion, as well as the restructuring and enlargement of the armed forces; the Netherlands announced in early 2025 that it plans to double the size of its army. The private consumption equation, at the other extreme, is almost inert, its six coefficients summing in absolute value to $0.38$, the largest of them being $0.20$ on lagged consumption itself.

\begin{table}[htbp]
	\centering
	\caption{Estimated constant and autoregressive coefficients of the VARX(1) model, Netherlands. Columns correspond to the equations of the six endogenous variables.}
	\label{tab:var_netherlands}
	\begin{adjustbox}{max width=\textwidth}
		\small
		\begin{tabular}{lrrrrrr}
			\hline
			& $g^{d}$ & $g^{n}$ & $c$ & $i$ & $r$ & $y$ \\
			\hline
			Constant & 37.47259 & 2.84134 & 0.97709 & $-$0.18509 & 11.85097 & 0.54589 \\
			$g^{d}_{t-1}$ & $-$0.78912 & $-$0.02243 & $-$0.00087 & $-$0.05460 & 0.01514 & $-$0.00602 \\
			$g^{n}_{t-1}$ & 1.20833 & 0.13667 & $-$0.00218 & $-$0.00885 & 0.03957 & $-$0.01427 \\
			$c_{t-1}$ & $-$14.81274 & 4.37074 & 0.19979 & $-$0.47261 & $-$4.80786 & $-$0.26765 \\
			$i_{t-1}$ & $-$1.82037 & 0.07500 & 0.00705 & $-$0.68468 & 0.46507 & $-$0.10217 \\
			$r_{t-1}$ & $-$0.08964 & $-$0.00575 & $-$0.00565 & 0.09293 & $-$0.22037 & 0.02284 \\
			$y_{t-1}$ & $-$19.49467 & $-$6.35127 & $-$0.16404 & 2.33481 & $-$6.07697 & 0.83491 \\
			\hline
		\end{tabular}
	\end{adjustbox}
\end{table}

For Poland there is one entry of order $10^{1}$, ten of order $10^{0}$, fourteen of order $10^{-1}$, fourteen of order $10^{-2}$, two of order $10^{-3}$ and one of order $10^{-4}$. The sums of absolute autoregressive coefficients are $22.17$ in the defence expenditure equation, $11.20$ in the government revenue equation, $5.82$ in the equation for gross capital formation, $2.79$ in the private consumption equation, $1.96$ in the non-defence expenditure equation and $0.67$ in the GDP equation. The defence equation carries $10.95$ on lagged GDP and $-7.40$ on lagged consumption, and, as in the Dutch case, the magnitude is most likely caused by the last observations of the sample. The structural change is of the same kind but of a different origin: the expansion of Polish defence spending during the Russian aggression against Ukraine entails large increases of the public defence sector, which are transferred into the coefficient through the regular expenses of that sector in the initial stage of its enlargement. The part of Polish defence expenditure that goes to domestic producers is large, which is why the effect passes into the rest of the system rather than into imports alone. The GDP equation, with a sum of $0.67$ and a largest entry of $0.46$ on lagged GDP itself, is by contrast the most stable of the six equations.

\begin{table}[htbp]
	\centering
	\caption{Estimated constant and autoregressive coefficients of the VARX(1) model, Poland. Columns correspond to the equations of the six endogenous variables.}
	\label{tab:var_poland}
	\begin{adjustbox}{max width=\textwidth}
		\small
		\begin{tabular}{lrrrrrr}
			\hline
			& $g^{d}$ & $g^{n}$ & $c$ & $i$ & $r$ & $y$ \\
			\hline
			Constant & 1.16334 & 0.91293 & 0.06977 & 0.97476 & 1.08672 & 0.63911 \\
			$g^{d}_{t-1}$ & $-$0.49114 & $-$0.05094 & $-$0.01468 & 0.05201 & $-$0.00036 & $-$0.00879 \\
			$g^{n}_{t-1}$ & 0.02774 & $-$0.40760 & 0.03582 & $-$0.07739 & 0.34578 & 0.01787 \\
			$c_{t-1}$ & $-$7.40205 & 0.54253 & $-$0.63376 & 2.22311 & 4.05529 & $-$0.09025 \\
			$i_{t-1}$ & $-$3.27673 & $-$0.12974 & $-$0.28404 & 0.58852 & 1.10422 & $-$0.08185 \\
			$r_{t-1}$ & 0.02586 & 0.45823 & 0.00489 & $-$0.04896 & $-$0.08815 & $-$0.01086 \\
			$y_{t-1}$ & 10.95014 & $-$0.36709 & 1.81483 & $-$2.83155 & $-$5.60133 & 0.45966 \\
			\hline
		\end{tabular}
	\end{adjustbox}
\end{table}

Three features are common to all six countries and are more informative than any individual entry. The first is that the largest coefficient of the system always lies in the equation for defence expenditure or in the equation for government revenue: $-10.15$ in the German revenue equation, $-8.55$ in the French defence equation, $-11.53$ in the Italian revenue equation, $-14.42$ in the Spanish revenue equation, $-19.49$ in the Dutch defence equation and $10.95$ in the Polish defence equation. The second is that the sums of absolute autoregressive coefficients are, in every country without exception, largest in these same two equations and smallest in the equations for private consumption and for GDP. The systematic part of fiscal policy is therefore an order of magnitude more responsive to the state of the economy in the preceding year than the private sector is, and this holds after the successive legal and policy frameworks have been removed by the intervention variables, so that it cannot be attributed to the regime shifts themselves.

The third feature concerns the direction of that responsiveness. Adding the three coefficients of the revenue equation that relate to private demand, $\hat{\phi}_{r,c}+\hat{\phi}_{r,i}+\hat{\phi}_{r,y}$, gives the response of the growth of government revenue to a common increase of one percentage point in the growth rates of private consumption, of gross capital formation and of GDP together. It is $-11.59$ for Germany, $-10.42$ for the Netherlands, $-8.76$ for Spain, $-6.79$ for Italy, $-0.46$ for France and $-0.44$ for Poland. The sign is negative in all six. Since the variables are gross growth rates, these numbers are to be read as the fall, in percentage points, of the growth of revenue: even the smallest of them, the French and the Polish, amount to about half a percentage point of revenue growth for every percentage point of private demand growth, which is not negligible against the observed dispersion of series whose gross growth rates lie for the most part between $0.8$ and $1.2$. The six countries therefore differ in the strength of the reaction by more than an order of magnitude, but not in its presence or in its direction.

A year of weak private demand is thus followed by a year in which the growth of government revenue is high, and a year of strong private demand by one in which it is low. Two things must be kept apart here. The co-movement of revenue with its own tax base within a single year, which is the automatic stabiliser in the narrow sense, is contemporaneous, and in a model of the form \ref{eq:2} it is carried by $\widehat{\Sigma}_{u}$ and by the recursive identification of Section \ref{s3.4}, not by $\Phi_{1}$. What $\Phi_{1}$ measures is the relation between this year's revenue and last year's private demand, once that contemporaneous co-movement has been taken out, and the estimates say that this lagged relation runs in the opposite direction. After a weak year the revenue side is tightened in the following budget in order to protect the deficit; after a strong year the pressure to do so is absent. That the effect survives the inclusion of the intervention variables is the point, since the level shifts produced by changes of the tax framework have already been removed by $\Psi{d}_{t}$. What is measured is the reaction of the government within a given framework to the previous year's outcome, and that reaction is rapid, counter-directional, and in four of the six countries large.

That this is a property of the revenue equation, and not a general tendency of the system to revert towards its mean, can be checked on the same estimates. The corresponding sum in the GDP equation, $\hat{\phi}_{y,c}+\hat{\phi}_{y,i}+\hat{\phi}_{y,y}$, is $-0.55$ for Germany, $0.03$ for France, $0.59$ for Italy, $-0.22$ for Spain, $0.47$ for the Netherlands and $0.29$ for Poland: small, and of mixed sign. A generic reversion of all the gross growth rates towards their means would show itself in both equations. It shows itself in one.

The size of these coefficients once again raises the question of how they are to be read. Because the variables of the model are gross growth rates, a coefficient is at first sight an elasticity: the number of percentage points by which the growth rate of the dependent variable changes when the growth rate of a regressor changes by one percentage point. It is, however, more than that, because the variables it links are macroeconomic aggregates of very different size. Suppose that in levels a change in $X_{j}$ is accompanied by a change $\Delta X_{i}=k\,\Delta X_{j}$ in $X_{i}$. Written in gross growth rates the same relation becomes
\begin{equation}\label{eq:21}
	x_{i,t}-1\;=\;k\,\frac{X_{j,t-1}}{X_{i,t-1}}\,\bigl(x_{j,t-1}-1\bigr),
\end{equation}
so that the coefficient estimated by the model is not $k$ but $k$ multiplied by the ratio of the levels of the two aggregates. A coefficient of this model therefore carries two things at once: the strength of the relation between the two variables, and their relative size. Expenditure on defence is of the order of one to two per cent of GDP over most of the sample, while private consumption is of the order of one half of it, so that the ratio in \ref{eq:21} is of the order of twenty-five to fifty, and a modest structural relation with $k$ well below unity appears in the model as a coefficient of order ten. The same mechanism, less dramatically, operates between government revenue and GDP.

It follows that requiring the coefficients of such a model to lie below unity would be an error. It would amount to requiring the aggregates that the model relates to be of the same size, which they are not, and it would force the estimator to misattribute to other regressors, or to the residuals, a relation that is genuinely present in the data. The relevant question is not whether a coefficient exceeds one but whether the implied relation between the levels is plausible. For example, by \ref{eq:21}, a coefficient of ten in the defence equation corresponds to an implied $k$ of about $0.2$ to $0.4$, which is not extreme at all.

One qualification must nevertheless be entered about the individual entries, as distinct from the sums just discussed. It is sometimes assumed that private consumption, gross capital formation and GDP must be collinear because they are related by an accounting identity. That identity holds in levels; it does not carry over to gross growth rates, in which the corresponding relation involves expenditure shares that change from year to year, so there is no theoretical reason why the three regressors should be collinear here. Whether they are close to collinear in a particular sample is an empirical question, and one that the estimator of Section \ref{s3.3} answers directly, since the singular values of ${X}$ are returned along with the coefficients, and since the bootstrap of Section \ref{s3.7} yields the sampling distribution of the individual entries of $\Phi_{1}$ alongside that of their sums. %The pairs of large offsetting entries visible in several of the revenue equations, $4.90$ on lagged consumption against $-14.42$ on lagged GDP in the Spanish case, are consistent with the split of a joint effect being less well determined than the joint effect itself, but they do not establish it, and the check should be made rather than assumed. Independently of how it comes out, 
The sums are what answers the question of interest here, namely the response of the fiscal variables to a common movement of private demand; and it is for the same reason that the analysis of Sections \ref{s4.1} and \ref{s4.2} rests on impulse responses, which are functions of the whole of $\Phi_{1}$ and of $\widehat{\Sigma}_{u}$, rather than on individual coefficients, and that the bootstrap intervals of Section \ref{s3.7} are computed for the responses rather than for the entries of $\Phi_{1}$.

\subsection{Reporting the distribution of probability mass around the multiplier}\label{s_mass_reporting}

A confidence interval for a fiscal multiplier is conventionally used only to answer a binary question: does the interval contain 1.0, or does it not? This is an impoverished way to communicate the estimate to a government that must decide, in practice, how large a spending programme to run. What matters for policy is not merely whether the multiplier is statistically distinguishable from unity, but which side of the deterministic path carries more probability mass -- that is, whether outcomes above the point estimate are more or less likely than outcomes below it.
\begin{figure}[htbp]
	\centering
	\includegraphics[width=\textwidth]{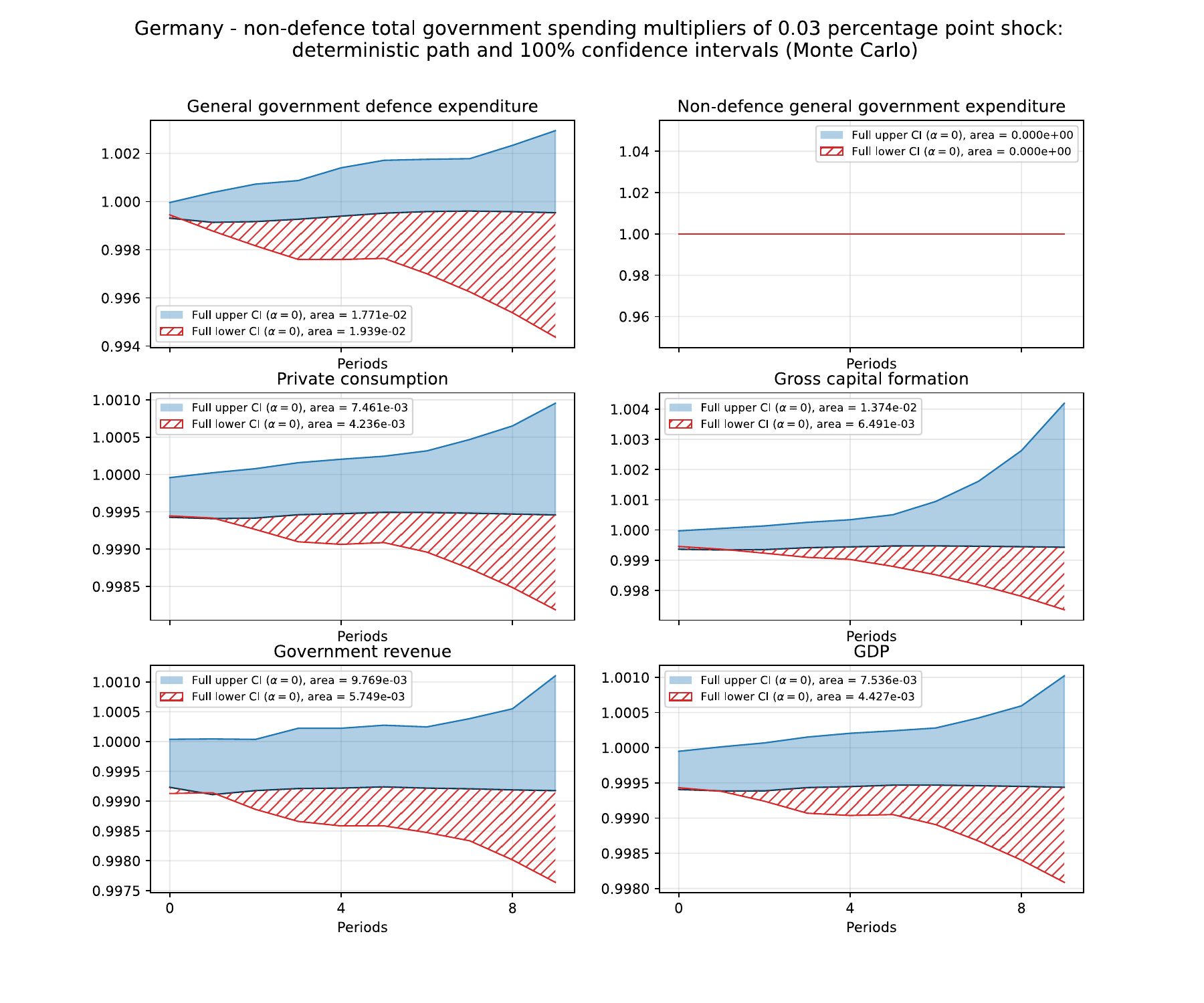}
	\caption{Germany -- non-defence total government spending multipliers of a 0.03 percentage point shock: deterministic path and 100\% confidence intervals (Monte Carlo).}
	\label{fig:mnznk_nondef}
\end{figure} A symmetric-looking interval around the deterministic path can conceal a highly asymmetric distribution of mass, and this asymmetry, not the mere width of the band, is what should inform a decision-maker weighing the upside risk of an expenditure programme against its downside risk. We therefore report, for every variable and horizon, the area enclosed between the deterministic path and the upper bound of the confidence interval separately from the area enclosed between the lower bound and the deterministic path, both computed by trapezoidal rule.

The deterministic path itself remains essential and should not be discarded once a confidence interval has been constructed around it, because it is this path, not the interval taken alone, that displays the actual dynamics generated by the estimated system: its curvature, its turning points, and the timing of its response to a shock. An interval by itself is agnostic about dynamics -- it only bounds a range of admissible outcomes at each horizon -- whereas the deterministic path shows the specific trajectory implied by the point estimates of the autoregressive coefficients and the covariance matrix, which is the object of direct interest to a government trying to anticipate the time profile of a policy's effects, not only its eventual range.

Table~\ref{tab:areas_h5} and Table~\ref{tab:areas_h20} report the two areas -- above the deterministic path (detrmn$\leftrightarrow$upper CI) and below it (lower CI$\leftrightarrow$detrmn) -- computed at horizons $H=5$ and $H=20$ respectively, for a shock of 0.03 percentage points to non-defence government expenditure. Comparing the two tables and Figure~\ref{fig:mnznk_nondef} shows that the relative size of the two masses is not stable over the horizon: at $H=5$ the area above the deterministic path exceeds the area below it for every variable with a nonzero response, whereas at $H=20$ this ordering is reversed for gross capital formation, where the upper mass (0.399) is roughly seven times the lower mass (0.060), yet has grown far more asymmetrically than at the short horizon (1.165 vs.\ 0.359), and government revenue nearly equalises the two masses (0.0393 vs.\ 0.0392) despite an initially asymmetric split. 

This reordering demonstrates that the balance of probability mass around a multiplier in multiplicative form, appropriate for gross growth rates data, is itself a function of the horizon. Consequently, a government evaluating a spending programme over several years needs the horizon-by-horizon decomposition of mass, not only the interval at the horizon conventionally reported in the literature.

\begin{table}[htbp]
	\centering
	\caption{Areas between the deterministic multiplier path and the 100\% confidence interval, horizon $H=5$ (0.03 pp shock to non-defence government expenditure)}
	\label{tab:areas_h5}
	\begin{tabular}{lcc}
		\hline
		Variable & Area of the upper CI & Area of the lower CI  \\
		\hline
		General government defence expenditure     & $2.371\times 10^{-3}$ & $1.788\times 10^{-3}$\\
		Non-defence general government expenditure & $0.000\times 10^{0}$  & $0.000\times 10^{0}$   \\
		Private consumption                        & $9.461\times 10^{-4}$ & $3.533\times 10^{-4}$  \\
		Gross capital formation                    & $1.165\times 10^{-3}$ & $3.587\times 10^{-4}$  \\
		Government revenue                         & $1.305\times 10^{-3}$ & $5.856\times 10^{-4}$  \\
		GDP                                         & $9.727\times 10^{-4}$ & $3.578\times 10^{-4}$ \\
		\hline
	\end{tabular}
\end{table}

\begin{table}[htbp]
	\centering
	\caption{Areas between the deterministic multiplier path and the 100\% confidence interval, horizon $H=20$ (0.03 pp shock to non-defence government expenditure)}
	\label{tab:areas_h20}
	\begin{tabular}{lcc}
		\hline
		Variable & Area of the upper CI & Area of the lower CI \\
		\hline
		General government defence expenditure     & $5.915\times 10^{-2}$ & $8.693\times 10^{-2}$ \\
		Non-defence general government expenditure & $0.000\times 10^{0}$  & $0.000\times 10^{0}$  \\
		Private consumption                        & $1.054\times 10^{-1}$ & $2.529\times 10^{-2}$ \\
		Gross capital formation                    & $3.995\times 10^{-1}$ & $5.966\times 10^{-2}$ \\
		Government revenue                         & $3.934\times 10^{-2}$ & $3.924\times 10^{-2}$ \\
		GDP                                         & $1.121\times 10^{-1}$ & $2.754\times 10^{-2}$ \\
		\hline
	\end{tabular}
\end{table}
\begin{figure}[htbp]
	\centering
	\includegraphics[width=\textwidth]{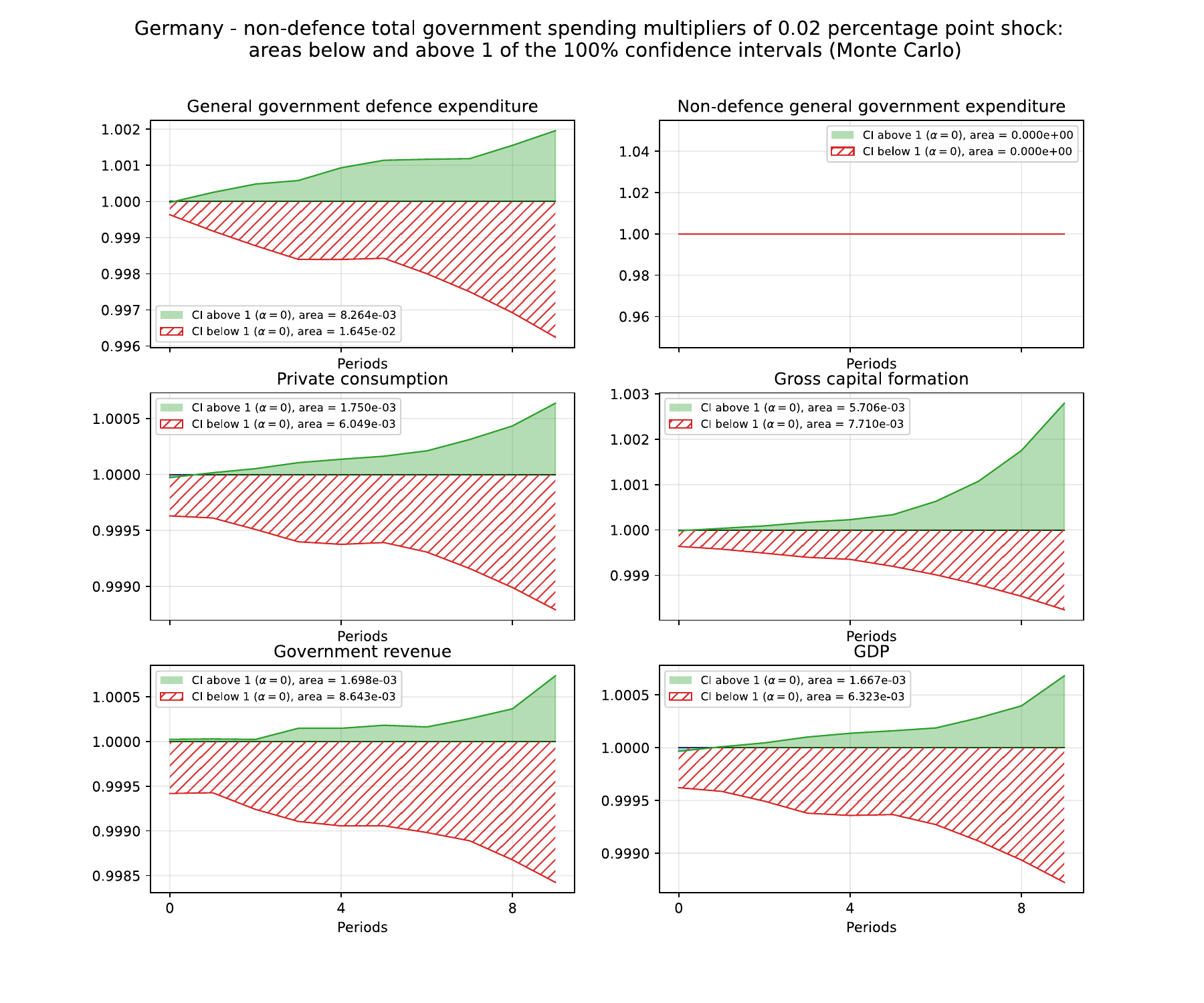}
	\caption{Germany -- non-defence total government spending multipliers of a 0.03 percentage point shock: deterministic path and 100\% confidence intervals (Monte Carlo).}
	\label{fig:mnznk_nondef_PU_PRWD}
\end{figure}
Nonetheless, the area above and below the multiplier's path does not fully convey the probability of the multiplier being positive for the economy, i.e., gross value of the multiplier being larger than 1.0. Thus, there is rationale for computing the mass of the confidence interval below and above the neutral, $1.0$ value of the multiplier, to more formally assess how likely a fiscal shock will increase macroeconomic aggregates, according to the model. Figure \ref{fig:mnznk_nondef_PU_PRWD} shows the results of such an analysis, confirming that focusing on the deterministic path may often be better than just evaluating statistical difference from a given level ($1.0$ in this case). That is, the probability of achieving positive results is lower than the probability of negative outcomes.

\section{Discussion}\label{s5}

The multiplier proposed in this paper is consistent with data in the form of gross, or net, growth rates. It is nonlinear in the size of the shock, which is a new result and one that is relevant for policy: the effect of a given unit of additional expenditure on output depends on how large the expenditure programme is, and this dependence cannot be recovered from the additive multipliers commonly used in the literature.

The multipliers differ across countries, and they differ with the size of the irregular, discretionary shocks to non-defence government expenditure. They are smaller than one, which is a consequence of the narrative approach adopted here: once the policy and economic regimes are controlled for, the identified discretionary shocks are smaller than the ones obtained from SVARs which do not control for regime changes, and the resulting multipliers are correspondingly smaller. This closes the gap between the values obtained from SVAR and from DSGE models.

It is worth setting the multiplier \ref{eq:multiplier} against the objects that the literature computes. The policy-relevant multipliers introduced by \textcite{mountford_what_2009} are the present discounted value of the output response over time divided by the present discounted value of the government spending response over time, and, as \textcite{ramey_macroeconomic_2016} observed, in most applications the choice of the interest rate used to discount, including a zero rate, gives nearly identical values, because the timing of the two responses is very similar. The multiplier proposed here cumulates the same two paths, but the cumulation is a product rather than a sum, because the variables of the model are gross growth rates and their levels are recovered by multiplication. Since no discounting enters, the questions of which interest rate to use, and of whether the deflator should be the GDP deflator or an index tailored to the sectoral composition of government expenditure, do not arise at all. What the resulting quantity measures is nominal, and it is directly comparable with the nominal budget with which a government operates under unknown future inflation. The difference is not merely presentational: as argued in Section \ref{s2}, a discrepancy of even a fraction of a percentage point translates into many millions of a country's currency in budget planning, and the estimates reported in Section \ref{s4} show discrepancies of that order between multipliers computed for impulses of different size.

The same comparison applies to the conversion of elasticities into multipliers. Most empirical work estimates the elasticity of output with respect to government spending or tax revenue from a specification in logarithms, and converts it by post-multiplying by the inverse of an average share of output, as in \textcite{blanchard_empirical_2002} and \textcite{mountford_what_2009}. \textcite{ramey_government_2018} argue that this makes the output multiplier artificially high in recessions, because the government spending share of output is countercyclical, and \textcite{owyang_are_2013} report that $Y/G$ varied from 2 to 24 with a mean of 8 in their historical sample. \textcite{sims_output_2018,sims_state-dependent_2018} show the quantitative relevance of this criticism. The multiplier \ref{eq:multiplier} needs no conversion factor whatsoever. Its numerator and its denominator are cumulated responses of the same kind of quantity, so the ratio is formed directly, no sample average enters it, and no state dependence can be manufactured by the conversion. This matters for the interpretation of the results of Section \ref{s4.1}: the dependence of the multiplier on the size of the impulse found here originates in the algebra of \ref{eq:18}, that is, in the data themselves, and not in a share used to rescale an elasticity. It also avoids the step of replacing the logarithms of government spending, of GDP and of taxes by their ratios to a polynomial trend estimate of potential GDP, which yields variables that are not structural and shocks whose meaning is correspondingly unclear.

The central methodological result is that the approach decomposes the impact of government policy into three parts, which the existing literature does not separate. Recalling \ref{eq:6}, the impact of the regime is measured by $\Psi{d}_{t}$, the impact of the standard policy rule by ${c}+\Phi_{1}{y}_{t-1}$, and the impact of discretionary policy by the structural shocks to the government equations of the SVAR, that is, by the first, the second and the fifth element of ${P}\boldsymbol{\varepsilon}_{t}$. Each of the three has a distinct interpretation. The regime part is nondiscretionary and non-random: it is the shift of the level of the fiscal variables that follows from the tax code and from the wider framework of government policy in force in a given subperiod, and it is dated narratively rather than extracted from the autocorrelation of the series. The policy rule part is the systematic reaction of expenditure and revenue to the state of the economy in the preceding year. The discretionary part is what remains once the first two have been accounted for, and it is this part, and only this part, that enters the calculation of a fiscal multiplier. In an SVAR without intervention variables the first and the third of these components are indistinguishable, and the multiplier is computed from their sum.

This decomposition speaks to several of the identification problems that led the literature away from structural vector autoregressions. The standard SVAR approach treats as exogenous the part of government spending not forecast by lagged values of spending, GDP and taxes, and the identification of exogenous movements in taxes has relied on external estimates of the elasticity of tax revenue to income, as in \textcite{blanchard_empirical_2002}; the estimates were then found to be very sensitive to the value of that external elasticity, as shown by \textcite{mertens_fiscal_2014} and \textcite{caldara_analytics_2017}. The model estimated here requires no external elasticity at all, so that source of sensitivity is absent, and the part of fiscal policy that is neither forecast by the autoregression nor genuinely discretionary is assigned to the regime term rather than being left in the residual. The same holds for fiscal foresight. Where \textcite{house_phased-tax_2006} and \textcite{mertens_empirical_2012} found that phased-in tax cuts depress output during the phase-in period, because firms and consumers delay their activity until rates are lower, the present approach represents a phased-in change directly, as a shift of regime with a known date, so that the delayed activity is part of the regime term and not a shock that agents somehow foresee. \textcite{ramey_ten_2019} claimed that cumulative multipliers do not fully reflect the consequences for the government budget, since a rise in GDP raises tax revenues and the deficit therefore rises by less than the amount spent; here government revenue is one of the endogenous variables and its regime is controlled alongside that of expenditure, so the revenue response is part of the estimated system rather than an omitted offset. The results of Section \ref{s4.3} give this an empirical content: wherever the coefficients on the regime variable in the revenue equation are large, as in Spain, Italy and Poland, a model without the regime-representing intervention variables would have carried that variation into the identified shocks.

The estimates of the coefficients on the intervention variables measure the regimes themselves, and they show how differently the same kind of policy change is transmitted in different economies. In Spain, in Italy and in Poland the large values appear in the equations for both categories of government expenditure and in the equation for government revenue: changes in tax and economic conditions had a strong impact on these economies and a large to very large impact on public spending, and the autoregressive and discretionary parts of expenditure and revenue are correspondingly volatile. In the Polish case the estimates in the equation for gross capital formation are large as well, so that investment participates in the same volatility. In the Netherlands the estimates are small in every equation, the largest of them, in the equations for defence expenditure and for government revenue, remaining below four tenths and below one eighth respectively in absolute value; the Dutch regimes shifted the level of the fiscal variables only slightly and that of the private variables less still, so that the Dutch case belongs with the German one rather than with the Spanish or the Italian, and what distinguishes the Netherlands is not the size of $\hat{\Psi}$ but the size of the autoregressive part discussed in Section \ref{s4.4}. In France the large values are confined to the equation for defence expenditure: the regimes had a strong impact on the French economy, while the fiscal policy regime itself was stable, as the small estimates in the second equation show. In Germany the regimes altered the economy only to a small extent, and the large values in the first equation are concentrated in the last three years of the sample, which identifies a single large regime change at the end of the period against an otherwise quiet background. The matrix $\hat{\Psi}$ is therefore a direct measure of how much of the movement of the fiscal variables is regime-driven, and it separates economies in which public finances are turbulent from those in which they are not.

The estimates of the autoregressive matrices describe the standard policy rule, and their magnitudes measure the volatility of the series that the rule links. The lag order of one is a structural assumption and not a statistical convenience. Budgets are prepared and executed annually, and the systematic component of the fiscal rule relates this year's expenditure and revenue to the outcome of the immediately preceding year; a specification with lags of higher order would relate the budget to years that do not enter the budgetary decision, and would therefore no longer be structural. Read within this structure, the counts of the orders of magnitude reported in Section \ref{s4.4} are informative in their own right. The regular part of defence expenditure is very volatile in Germany, Italy, Spain, the Netherlands and Poland, and volatile in France, whereas the regular part of non-defence expenditure is very volatile in France, volatile in Germany, in Italy and in the Netherlands, and not very volatile in Spain and in Poland. The coefficients larger than one in absolute value on public spending and on government revenue in the equation for the gross growth rate of GDP reflect the composite effect of regular social transfers, subsidies, paid compensation of employees and pensions on consumption and on new debt-taking of the government, and their magnitude measures how strongly these flows move output from one year to the next. The same holds for the coefficients on the gross growth of investment, which show how large the annual swings of capital formation are relative to those of output. In the Italian case the negative coefficient on the gross growth rate of government revenue in the GDP equation is indicative of the negative impact of tax collections and debt-taking, and the negative impact of past consumption on present GDP can be indicative of spending on imports and of too little saving by households.

The estimates of the constants complete the picture of the policy rule. They differ by orders of magnitude between countries and between equations, and where they are of order $10^{0}$ and $10^{1}$ the regular, rule-based component of the corresponding variable is large relative to the regime and to the discretionary components; this is the case, in particular, for defence expenditure. In Italy, Spain and Poland, where the intervention variables form an exhaustive partition of the sample into mutually exclusive periods, and in France, where they do so once the nested regime is set aside, what the model identifies in each subperiod is the sum of the constant and of the corresponding column of the matrix of regime coefficients (plus the column of the nested regime in the year in which it is in force), and it is this sum that measures the level around which the variable fluctuates under a given regime. In Germany and the Netherlands, whose designs have full rank, the constant and the regime coefficients are identified separately, but the level in a given year is likewise the constant plus the coefficients of all the regimes in force in that year.

Under the interpretation of the model, the constant is the approximation of the exponential growth trend underlying an economy, while the regimes are its changes due to changes in economic, policy and legal conditions.

The impulse response functions and the multiplier paths reveal a further property of fiscal policy that is not usually reported. In several of the economies examined, the responses alternate in sign from one horizon to the next, and the multipliers inherit this alternation as a wave-like shape which damps out only slowly. This is a cyclicality of government spending, and it appears alongside the two-annual cyclical pattern of investment which is visible in the large negative coefficients on the gross growth rate of investment in the equations for GDP in Germany, in Spain and in the Netherlands. A single autoregressive equation of order one cannot produce such a pattern from its own dynamics; the cyclicality observed here is generated by the interaction of the equations of the system, which is to say that it is a property of the economy and not of the lag structure imposed on it. The structural AR(1) form is thus sufficient to expose cycles in public spending, and the cumulated, multiplicative form \ref{eq:multiplier} is what makes them legible, since it aggregates the entire path and yields values that are stable across all the horizons considered, whereas a multiplier evaluated at a single horizon may fall at either extreme of the oscillation. That the cyclicality of spending is visible at all is a consequence of the data used: seasonally adjusted and detrended series are spectrally distorted, and the alternation reported here is of exactly the kind that filtering removes.

The estimates for the Netherlands and for Poland show, finally, how much information is carried by movements that a filtered specification would discard. In both countries the coefficient on defence expenditure is large in absolute value, and in both cases this follows from the last few observations, which show abrupt growth; the shift is visible in the regression alongside the policy-economic regime variables. In the Dutch case the shift in the composition of general government expenditure, and the increasing growth of spending on a sector that had previously been very small, entail more than proportional gains from regular expenses in the initial stage of the development of the sector, as well as army restructuring and expansion; since the Dutch coefficients on the intervention variables are small, this movement is carried almost entirely by the autoregressive part and not by the regime term. In the Polish case the increasing growth of spending on the defence sector during the Russian aggression on Ukraine entails large gains of that sector and its expansion, which are transferred into the large coefficient from regular expenses. Not all nonstationary behaviour should therefore be dismissed: here it is the signal, and it measures the speed at which these economies are rearming.

Identification from natural experiments draws its credibility from a particular episode, and its results are correspondingly difficult to generalise to the ordinary conduct of fiscal policy, which is what a finance ministry has to plan. The narrative element used here operates differently: every year of the sample belongs to some regime, the dating comes from the legislative record rather than from an exceptional event, and the identification therefore covers the whole sample rather than a single episode. The estimates obtained are consequently about routine fiscal policy in six economies, and are usable for budget planning in the way that episode-based estimates are not. A second alternative is the one-step instrumental variables estimation of cumulative multipliers by local projections proposed in \textcite{ramey_macroeconomic_2016} and \textcite{ramey_government_2018}, in which cumulative GDP up to a horizon is regressed on cumulative government spending up to that horizon with an SVAR shock or a narrative variable as an instrument. Local projections and structural vector autoregressions theoretically produce the same impulse response functions, but the equivalence is one of population objects and not of estimators, and instrumental variables carry a finite sample bias that is severe in macroeconomic series, which are short. With the effective sample of 28 annual observations used here, that consideration carries considerable weight, and the multipliers are therefore computed using the impulse response functions and the new multiplier formula, appropriate for data in the form of gross growth rates.

Finally, the results bear on the persistent inconsistency between the small government spending multipliers and the very large tax change multipliers reported in the empirical literature. Considering the two separately may make little sense unless public debt is assumed to absorb the entire difference, and the very large output contractions after tax increases shown by impulse response functions imply equally large expansions after tax reductions, since impulse response functions are symmetric by construction. The model estimated here contains government expenditure, in two categories, and government revenue within a single system, and controls the tax regime and the wider policy regime jointly, so that the response of output to expenditure and its response to revenue are estimated under the same identification and remain mutually consistent. 

\section{Conclusions}\label{s6}

This paper has proposed a form of the fiscal multiplier consistent with data expressed as gross growth rates, and an SVAR that controls for policy and economic regimes by means of narratively dated intervention variables. The multiplier is multiplicative rather than additive, since the cumulation of responses of growth rates over a horizon is a product; both its numerator and its denominator are polynomials in the size of the impulse, so that the multiplier depends on how large the expenditure programme is. This dependence is a property of fiscal policy that additive multipliers cannot represent, and it is visible in the estimates: the multipliers obtained for an impulse of one percentage point differ from those obtained for an impulse of three percentage points at every horizon, and they differ nonlinearly. The multiplier requires no discounting, no ad hoc conversion factor and no auxiliary estimate of an average expenditure share of output, and it is computed from nominal, non-adjusted series, which are the quantities in which government budgets are expressed.

The intervention variables decompose the impact of government policy into three parts that the literature does not separate: the regime, measured by the coefficients on the narratively dated indicators; the standard policy rule, represented by the autoregressive equations and the constants, with a lag order of one imposed as the structural form of an annually prepared and executed budget; and discretionary policy, represented by the structural shocks to the government equations. Because the regime component is removed from the residuals, the identified discretionary shocks are smaller than those obtained from SVARs which do not control for regime changes, and the multipliers are correspondingly smaller: in most cases below one, and rarely crossing it, which closes the gap between the values obtained from SVAR and from DSGE models. 

The estimated coefficients on the regime-specific intervention variables and on the autoregressive matrices measure how volatile the fiscal variables and the economies themselves are. Public spending and government revenue are strongly regime-driven in Spain, Italy and Poland, with investment participating in the same volatility in the Polish case; in the Netherlands the regimes shift the levels of all six variables only slightly, while the autoregressive part of the fiscal equations is the largest of the six countries, so that Dutch fiscal policy is driven by the state of the economy in the preceding year rather than by the succession of frameworks; in France the economy responds strongly while the fiscal policy regime is itself stable; and in Germany the regimes have much smaller influence, apart from a single large change concentrated in the last three years of the sample. This result, however, demonstrates the relative stability of German economy in the studied sample.

The results also show that fiscal multipliers differ across the six largest economies of the European Union, that different types of government expenditure have unalike multipliers, and that the impulse responses and multiplier paths of several of these economies display a cyclicality of government spending, generated by the interaction of the equations of the system and legible only once the whole path is cumulated in multiplicative form. Movements that a filtered specification would discard, such as the abrupt growth of defence expenditure at the end of the Dutch and Polish samples, carry information about the speed at which these economies are rearming. Taken together, the approach yields multipliers that are nominal, computed from unadjusted data, sensitive to the size of the shock and purged of the regime component of fiscal policy, and it therefore provides a basis for empirical tools with direct relevance for budget planning and for the evaluation of the effects of fiscal policy.

\printbibliography      

@article{ramey_ten_2019,
	title = {Ten Years After the Financial Crisis: What Have We Learned from the Renaissance in Fiscal Research?},
	volume = {33},
	issn = {0895-3309},
	doi = {10.1257/jep.33.2.89},
	shorttitle = {Ten Years After the Financial Crisis},
	pages = {89--114},
	number = {2},
	journaltitle = {Journal of Economic Perspectives},
	shortjournal = {Journal of Economic Perspectives},
	author = {Ramey, Valerie A.},
	date = {2019},
	langid = {english},
}

@article{owyang_are_2013,
	title = {Are Government Spending Multipliers Greater during Periods of Slack? Evidence from Twentieth-Century Historical Data},
	volume = {103},
	issn = {0002-8282},
	doi = {10.1257/aer.103.3.129},
	shorttitle = {Are Government Spending Multipliers Greater during Periods of Slack?},
	pages = {129--134},
	number = {3},
	journaltitle = {American Economic Review},
	author = {Owyang, Michael T. and Ramey, Valerie A. and Zubairy, Sarah},
	urldate = {2026-06-29},
	date = {2013},
	langid = {english},
}

@article{mountford_what_2009,
	title = {What are the effects of fiscal policy shocks?},
	volume = {24},
	issn = {1099-1255},
	doi = {10.1002/jae.1079},
	pages = {960--992},
	number = {6},
	journaltitle = {Journal of Applied Econometrics},
	publisher = {John Wiley \& Sons, Ltd},
	author = {Mountford, Andrew and Uhlig, Harald},
	date = {2009},
	langid = {english},
}

@article{blanchard_empirical_2002,
	title = {An Empirical Characterization of the Dynamic Effects of Changes in Government Spending and Taxes on Output},
	volume = {117},
	doi = {10.1162/003355302320935043},
	pages = {1329--68},
	number = {4},
	journaltitle = {Quarterly Journal of Economics},
	author = {Blanchard, Olivier and Perotti, Roberto},
	date = {2002},
	langid = {english},
}

@article{house_phased-tax_2006,
	title = {Phased-In Tax Cuts and Economic Activity},
	volume = {96},
	pages = {1835--1849},
	number = {5},
	journaltitle = {American Economic Review},
	author = {House, Christopher L and Shapiro, Matthew D},
	date = {2006},
	langid = {english},
}

@article{mertens_dynamic_2013,
	title = {The Dynamic Effects of Personal and Corporate Income Tax Changes in the United States},
	volume = {103},
	issn = {0002-8282},
	doi = {10.1257/aer.103.4.1212},
	pages = {1212--1247},
	number = {4},
	journaltitle = {American Economic Review},
	author = {Mertens, Karel R. S. M. and Ravn, Morten O.},
	date = {2013},
	langid = {english},
}

@incollection{ramey_macroeconomic_2016,
	title = {Macroeconomic Shocks and Their Propagation},
	volume = {2},
	isbn = {978-0-444-59487-7},
	doi = {10.1016/bs.hesmac.2016.03.003},
	pages = {71--162},
	booktitle = {Handbook of Macroeconomics},
	publisher = {Elsevier},
	author = {Ramey, Valerie A.},
	date = {2016},
	langid = {english},
}

@article{sims_state-dependent_2018,
	title = {The state-dependent effects of tax shocks},
	volume = {107},
	issn = {0014-2921},
	doi = {10.1016/j.euroecorev.2018.05.002},
	pages = {57--85},
	journaltitle = {European Economic Review},
	shortjournal = {European Economic Review},
	author = {Sims, Eric and Wolff, Jonathan},
	date = {2018},
}

@article{sims_output_2018,
	title = {The Output and Welfare Effects of Government Spending Shocks Over the Business Cycle},
	volume = {59},
	issn = {1468-2354},
	doi = {10.1111/iere.12308},
	pages = {1403--1435},
	number = {3},
	journaltitle = {International Economic Review},
	author = {Sims, Eric and Wolff, Jonathan},
	date = {2018},
	langid = {english},
}

@article{ramey_government_2018,
	title = {Government Spending Multipliers in Good Times and in Bad: Evidence from {US} Historical Data},
	volume = {126},
	doi = {10.1086/696277},
	shorttitle = {Government Spending Multipliers in Good Times and in Bad},
	pages = {850--901},
	number = {2},
	journaltitle = {Journal of Political Economy},
	publisher = {The University of Chicago Press},
	author = {Ramey, Valerie A. and Zubairy, Sarah},
	date = {2018},
}

@article{pappa_effects_2009,
	title = {The effects of fiscal shocks on employment and the real wage},
	volume = {50},
	issn = {0020-6598, 1468-2354},
	doi = {10.1111/j.1468-2354.2008.00528.x},
	pages = {217--244},
	number = {1},
	journaltitle = {International Economic Review},
	shortjournal = {Int Economic Review},
	author = {Pappa, Evi},
	date = {2009},
	langid = {english},
}

@article{ramey_identifying_2011,
	title = {Identifying Government Spending Shocks: It's all in the Timing*},
	volume = {126},
	issn = {0033-5533, 1531-4650},
	doi = {10.1093/qje/qjq008},
	shorttitle = {Identifying Government Spending Shocks},
	pages = {1--50},
	number = {1},
	journaltitle = {The Quarterly Journal of Economics},
	author = {Ramey, Valerie A.},
	date = {2011},
	langid = {english},
}

@article{ramey_can_2011,
	title = {Can Government Purchases Stimulate the Economy?},
	volume = {49},
	issn = {0022-0515},
	doi = {10.1257/jel.49.3.673},
	pages = {673--685},
	number = {3},
	journaltitle = {Journal of Economic Literature},
	shortjournal = {Journal of Economic Literature},
	author = {Ramey, Valerie A.},
	date = {2011},
	langid = {english},
}

@article{auerbach_measuring_2012,
	title = {Measuring the Output Responses to Fiscal Policy},
	volume = {4},
	issn = {1945-7731, 1945-774X},
	doi = {10.1257/pol.4.2.1},
	pages = {1--27},
	number = {2},
	journaltitle = {American Economic Journal: Economic Policy},
	shortjournal = {American Economic Journal: Economic Policy},
	author = {Auerbach, Alan J and Gorodnichenko, Yuriy},
	date = {2012},
	langid = {english},
}

@article{romer_new_2004,
	title = {A New Measure of Monetary Shocks: Derivation and Implications},
	volume = {94},
	number = {4},
	journaltitle = {American Economic Review},
	author = {Romer, Christina D and Romer, David H},
	date = {2004},
	langid = {english},
}

@article{romer_macroeconomic_2010,
	title = {The Macroeconomic Effects of Tax Changes: Estimates Based on a New Measure of Fiscal Shocks},
	volume = {100},
	issn = {0002-8282},
	doi = {10.1257/aer.100.3.763},
	shorttitle = {The Macroeconomic Effects of Tax Changes},
	pages = {763--801},
	number = {3},
	journaltitle = {American Economic Review},
	shortjournal = {American Economic Review},
	author = {Romer, Christina D and Romer, David H},
	date = {2010},
	langid = {english},
}

@article{mertens_fiscal_2014,
	title = {Fiscal Policy in an Expectations-Driven Liquidity Trap},
	volume = {81},
	issn = {0034-6527, 1467-937X},
	doi = {10.1093/restud/rdu016},
	pages = {1637--1667},
	number = {4},
	journaltitle = {The Review of Economic Studies},
	shortjournal = {The Review of Economic Studies},
	author = {Mertens, Karel R. S. M. and Ravn, Morten O.},
	date = {2014},
	langid = {english},
}

@article{caldara_analytics_2017,
	title = {The Analytics of {SVARs}: A Unified Framework to Measure Fiscal Multipliers},
	volume = {84},
	issn = {0034-6527},
	doi = {10.1093/restud/rdx030},
	shorttitle = {The Analytics of {SVARs}},
	pages = {1015--1040},
	number = {3},
	journaltitle = {The Review of Economic Studies},
	shortjournal = {Rev Econ Stud},
	author = {Caldara, Dario and Kamps, Christophe},
	date = {2017},
}

@article{baumeister_drawing_2020,
	title = {Drawing conclusions from structural vector autoregressions identified on the basis of sign restrictions},
	volume = {109},
	issn = {02615606},
	doi = {10.1016/j.jimonfin.2020.102250},
	pages = {102250},
	journaltitle = {Journal of International Money and Finance},
	shortjournal = {Journal of International Money and Finance},
	author = {Baumeister, Christiane and Hamilton, James D.},
	date = {2020},
	langid = {english},
}

@article{baumeister_sign_2015,
	title = {Sign Restrictions, Structural Vector Autoregressions, and Useful Prior Information},
	volume = {83},
	issn = {0012-9682},
	doi = {10.3982/ECTA12356},
	pages = {1963--1999},
	number = {5},
	journaltitle = {Econometrica},
	shortjournal = {Econometrica},
	author = {Baumeister, Christiane and Hamilton, James D.},
	date = {2015},
	langid = {english},
}

@article{baumeister_inference_2018,
	title = {Inference in structural vector autoregressions when the identifying assumptions are not fully believed: Re-evaluating the role of monetary policy in economic fluctuations},
	volume = {100},
	issn = {03043932},
	doi = {10.1016/j.jmoneco.2018.06.005},
	shorttitle = {Inference in structural vector autoregressions when the identifying assumptions are not fully believed},
	pages = {48--65},
	journaltitle = {Journal of Monetary Economics},
	shortjournal = {Journal of Monetary Economics},
	author = {Baumeister, Christiane and Hamilton, James D.},
	date = {2018},
	langid = {english},
}

@article{mertens_empirical_2012,
	title = {Empirical Evidence on the Aggregate Effects of Anticipated and Unanticipated {US} Tax Policy Shocks},
	volume = {4},
	issn = {1945-7731, 1945-774X},
	doi = {10.1257/pol.4.2.145},
	pages = {145--181},
	number = {2},
	journaltitle = {American Economic Journal: Economic Policy},
	shortjournal = {American Economic Journal: Economic Policy},
	author = {Mertens, Karel R. S. M. and Ravn, Morten O.},
	date = {2012},
	langid = {english},
}

@article{ramey_costly_1998,
	title = {Costly capital reallocation and the effects of government spending},
	volume = {48},
	pages = {145--194},
	journaltitle = {Carnegie-Rochester Conference Series on Public Policy},
	author = {Ramey, Valerie A. and Shapiro, D},
	date = {1998},
	langid = {english},
}

@article{mumtaz_fiscal_2020,
	title = {Fiscal policy shocks and stock prices in the United States},
	volume = {129},
	issn = {0014-2921},
	doi = {10.1016/j.euroecorev.2020.103562},
	pages = {103562},
	journaltitle = {European Economic Review},
	shortjournal = {European Economic Review},
	author = {Mumtaz, Haroon and Theodoridis, Konstantinos},
	date = {2020},
}

@article{leeper_equilibria_1991,
	title = {Equilibria under ‘active’ and ‘passive’ monetary and fiscal policies},
	volume = {27},
	issn = {0304-3932},
	doi = {10.1016/0304-3932(91)90007-B},
	pages = {129--147},
	number = {1},
	journaltitle = {Journal of Monetary Economics},
	shortjournal = {Journal of Monetary Economics},
	author = {Leeper, Eric M.},
	date = {1991},
}

\appendix

\section{Differences of the values of fiscal multipliers for disjoint categories of total general government expenditure} \label{appndx}
\begin{figure}[htbp]
	\centering
	\includegraphics[width=\textwidth]{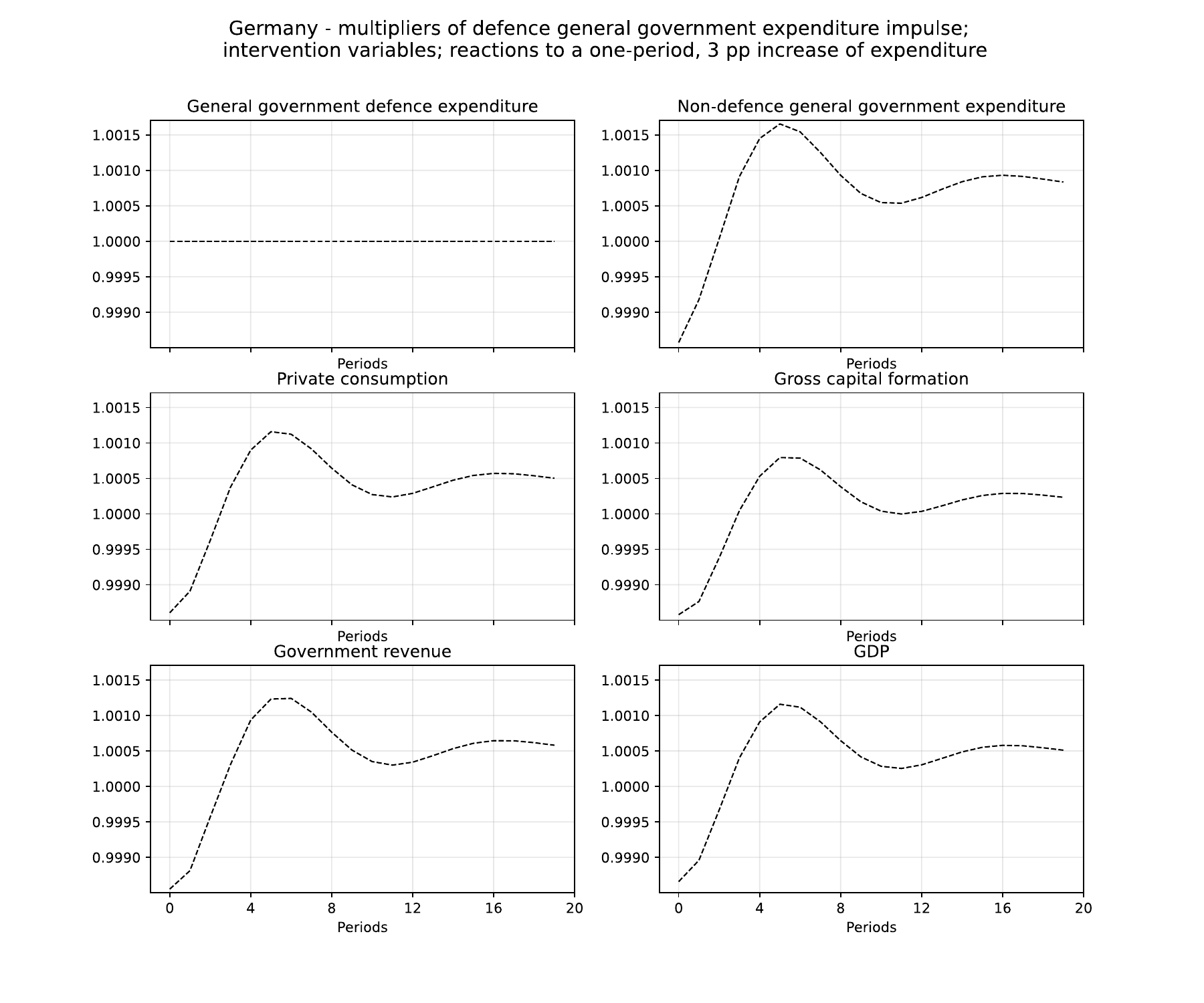}
	\caption{Germany. Multipliers of an impulse of defence general government expenditure, with intervention variables; reaction to a one-period increase of expenditure of 3 percentage points. The units of the impulse are percentage points.}
	\label{fig:de004}
\end{figure}
Multipliers of various categories of total government expenditure differ. This is illustrated by, e.g., figure \ref{fig:de003}, showing the multipliers for the increase of the gross growth rate of non-defence total general government expenditure by $0.03$ percentage points, and figure \ref{fig:de004}, demonstrating the values of multipliers for analogous increase of the gross growth rate of defence total general government expenditure.

\section{Proof of Proposition \ref{prop:minnorm}}\label{app:proof}

\subsection{Notation and assumptions}

Throughout, the model \ref{eq:3} is written as ${Y}={X}{B}+{U}$, with ${Y}$ of dimension $n\times N$, ${X}$ of dimension $n\times K$ and ${B}$ of dimension $K\times N$, where $n:=T-p$ is the number of effective observations, $K=1+Np+M$, and the rows of ${X}$ are
\begin{equation}\label{A:row}
	{x}_{t}^{\prime}=\bigl(1,\;{y}_{t-1}^{\prime},\dots,{y}_{t-p}^{\prime},\;{d}_{t}^{\prime}\bigr),\qquad t=p+1,\dots,T .
\end{equation}
The rows of ${B}$ are ordered conformably with ${x}_{t}^{\prime}$: the first row is ${c}^{\prime}$, rows $2+(l-1)N,\dots,1+lN$ form $\Phi_{l}^{\prime}$ for $l=1,\dots,p$, and row $k_{m}=1+Np+m$ is $\boldsymbol{\psi}_{m}^{\prime}$ for $m=1,\dots,M$. Hence
\begin{equation}\label{A:rows}
	{e}_{1}^{\prime}{B}={c}^{\prime},\qquad {e}_{k_{m}}^{\prime}{B}=\boldsymbol{\psi}_{m}^{\prime},\qquad
	{x}_{t}^{\prime}{e}_{1}=1,\qquad {x}_{t}^{\prime}{e}_{k_{m}}=d_{mt}.
\end{equation}
The index set $\mathcal{A}:=\{2,\dots,1+Np\}$ of the rows carrying the autoregressive coefficients is called the autoregressive block. The null vectors \ref{eq:11} are collected in
\begin{equation}\label{A:V}
	{V}:=\Bigl[\;{v}_{0}\;\Bigm|\;({e}_{k_{m}})_{m\in\mathcal{Z}}\;\Bigr]\in\mathbb{R}^{K\times q},
	\qquad
	{v}_{0}={e}_{1}-\sum_{m\in\mathcal{S}}{e}_{k_{m}},
	\qquad q:=1+\lvert\mathcal{Z}\rvert ,
\end{equation}
and $\widehat{{B}}:=({X}^{\prime}{X})^{+}{X}^{\prime}{Y}$ is the estimator \ref{eq:10}. $\lVert\cdot\rVert_{F}$ is the Frobenius norm, $\lVert {A}\rVert_{F}^{2}=\operatorname{tr}({A}^{\prime}{A})$.

In addition to \ref{eq:7} and $\mathcal{S}\cap\mathcal{Z}=\emptyset$, the following are assumed.
\begin{enumerate}
	\item[(A1)] The columns of ${V}$ span $\mathcal{N}({X})$. This is the hypothesis of the proposition: apart from the dependence created by the partition and by the empty regimes the design contains no other exact linear dependence. It is a property of the data and not a consequence of \ref{eq:7}.
	\item[(A2)] $n>K$, so that the divisor $n-K$ in \ref{eq:4} is positive, and $\widehat{\Sigma}_{u}$ is positive definite, so that the Cholesky factorisation \ref{eq:13} exists.
	\item[(A3)] The threshold $\tau$ in \ref{eq:9} retains exactly the nonzero singular values, so that $r=\operatorname{rank}({X})$.
\end{enumerate}
By (A3), ${X}$ has the compact singular value decomposition
\begin{equation}\label{A:svd}
	{X}={U}_{r}{S}_{r}{V}_{r}^{\prime},\qquad {U}_{r}^{\prime}{U}_{r}={I}_{r},\qquad {V}_{r}^{\prime}{V}_{r}={I}_{r},\qquad {S}_{r}=\operatorname{diag}(s_{1},\dots,s_{r}),\quad s_{j}>0 .
\end{equation}

\subsection{Two properties of the estimator}

\begin{lemma}\label{A:lem1}
	Under (A3):
	\begin{enumerate}
		\item[(a)] $({X}^{\prime}{X})^{+}{X}^{\prime}={V}_{r}{S}_{r}^{-1}{U}_{r}^{\prime}={X}^{+}$, so that $\widehat{{B}}={X}^{+}{Y}={V}_{r}{S}_{r}^{-1}{U}_{r}^{\prime}{Y}$;
		\item[(b)] the residuals $\widehat{{U}}={Y}-{X}\widehat{{B}}$ satisfy the normal equations ${X}^{\prime}\widehat{{U}}={0}$;
		\item[(c)] $\mathcal{N}({X})=\mathcal{N}({V}_{r}^{\prime})$, and ${w}^{\prime}\widehat{{B}}={0}$ for every ${w}\in\mathcal{N}({X})$.
	\end{enumerate}
\end{lemma}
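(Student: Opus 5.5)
The plan is to work entirely from the compact singular value decomposition \ref{A:svd}, which exists by (A3), and to read off each of the three parts from it.

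For (a), I will first compute ${X}^{\prime}{X}={V}_{r}{S}_{r}^{2}{V}_{r}^{\prime}$. Its Moore--Penrose pseudoinverse is ${V}_{r}{S}_{r}^{-2}{V}_{r}^{\prime}$. This can be checked directly against the four Penrose conditions, using ${V}_{r}^{\prime}{V}_{r}={I}_{r}$; the check also confirms symmetry of the two products. Multiplying on the right by ${X}^{\prime}={V}_{r}{S}_{r}{U}_{r}^{\prime}$ gives ${V}_{r}{S}_{r}^{-1}{U}_{r}^{\prime}$. The same four-condition check applied to ${X}$ identifies this matrix as ${X}^{+}$, so $\widehat{{B}}={X}^{+}{Y}$, which coincides with the truncated-SVD formula \ref{eq:10}.

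For (b), I will substitute this expression into the fitted values: ${X}\widehat{{B}}={U}_{r}{S}_{r}{V}_{r}^{\prime}{V}_{r}{S}_{r}^{-1}{U}_{r}^{\prime}{Y}={U}_{r}{U}_{r}^{\prime}{Y}$. This is the orthogonal projection of ${Y}$ onto the column space of ${X}$. Then
\[
{X}^{\prime}\widehat{{U}}={V}_{r}{S}_{r}{U}_{r}^{\prime}\bigl({I}-{U}_{r}{U}_{r}^{\prime}\bigr){Y}={0},
\]
because ${U}_{r}^{\prime}{U}_{r}={I}_{r}$.

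For (c), the inclusion $\mathcal{N}({V}_{r}^{\prime})\subseteq\mathcal{N}({X})$ is immediate from the factorisation. For the converse, ${X}{w}={0}$ gives ${U}_{r}{S}_{r}{V}_{r}^{\prime}{w}={0}$. Premultiplying by ${S}_{r}^{-1}{U}_{r}^{\prime}$ yields ${V}_{r}^{\prime}{w}={0}$, and this step is where invertibility of ${S}_{r}$ is used, that is, the strict positivity of the retained singular values guaranteed by (A3). For any ${w}\in\mathcal{N}({X})$ it then follows that ${w}^{\prime}\widehat{{B}}=({V}_{r}^{\prime}{w})^{\prime}{S}_{r}^{-1}{U}_{r}^{\prime}{Y}={0}$.

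I do not expect a genuine obstacle anywhere in this argument. The only point needing care is (a): $({X}^{\prime}{X})^{+}{X}^{\prime}$ must be identified with ${X}^{+}$ even though ${X}^{\prime}{X}$ is singular, so the usual inverse identity is unavailable and the Penrose-condition verification is what carries the step. A second point to state explicitly is the role of (A3). It ensures that the computed truncation uses exactly the nonzero singular values, so the numerical estimator \ref{eq:10} is the exact pseudoinverse solution and the two expressions in (a) agree.
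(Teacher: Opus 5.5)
Your proposal is correct and follows the paper's proof essentially step for step: (a) verifies the Penrose conditions for $V_{r}S_{r}^{-2}V_{r}^{\prime}=(X^{\prime}X)^{+}$ and for $V_{r}S_{r}^{-1}U_{r}^{\prime}=X^{+}$, and (c) premultiplies by $S_{r}^{-1}U_{r}^{\prime}$, exactly as the paper does. The only cosmetic difference is in (b), where you obtain $X^{\prime}\widehat{U}=0$ from the projection form $X\widehat{B}=U_{r}U_{r}^{\prime}Y$ rather than from the paper's identity $X^{\prime}XX^{+}=X^{\prime}$; this is the same computation arranged differently.
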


\begin{proof}
	(a) By \ref{A:svd}, ${X}^{\prime}{X}={V}_{r}{S}_{r}{U}_{r}^{\prime}{U}_{r}{S}_{r}{V}_{r}^{\prime}={V}_{r}{S}_{r}^{2}{V}_{r}^{\prime}$. Put ${G}:={V}_{r}{S}_{r}^{-2}{V}_{r}^{\prime}$. Using ${V}_{r}^{\prime}{V}_{r}={I}_{r}$,
	\begin{align}
		({X}^{\prime}{X}){G}({X}^{\prime}{X}) &= {V}_{r}{S}_{r}^{2}{S}_{r}^{-2}{S}_{r}^{2}{V}_{r}^{\prime}={X}^{\prime}{X}, &
		{G}({X}^{\prime}{X}){G} &= {V}_{r}{S}_{r}^{-2}{S}_{r}^{2}{S}_{r}^{-2}{V}_{r}^{\prime}={G}, \label{A:pen1}\\
		({X}^{\prime}{X}){G} &= {V}_{r}{V}_{r}^{\prime}=\bigl(({X}^{\prime}{X}){G}\bigr)^{\prime}, &
		{G}({X}^{\prime}{X}) &= {V}_{r}{V}_{r}^{\prime}=\bigl({G}({X}^{\prime}{X})\bigr)^{\prime}. \label{A:pen2}
	\end{align}
	These are the four Moore--Penrose conditions, and since the Moore--Penrose inverse is unique, ${G}=({X}^{\prime}{X})^{+}$. Therefore, with ${X}^{\prime}={V}_{r}{S}_{r}{U}_{r}^{\prime}$,
	\begin{equation}\label{eq:pinv}
		({X}^{\prime}{X})^{+}{X}^{\prime}={V}_{r}{S}_{r}^{-2}{V}_{r}^{\prime}{V}_{r}{S}_{r}{U}_{r}^{\prime}={V}_{r}{S}_{r}^{-1}{U}_{r}^{\prime}.
	\end{equation}
	The same computation with ${H}:={V}_{r}{S}_{r}^{-1}{U}_{r}^{\prime}$ gives ${X}{H}{X}={X}$, ${H}{X}{H}={H}$, ${X}{H}={U}_{r}{U}_{r}^{\prime}$ and ${H}{X}={V}_{r}{V}_{r}^{\prime}$, both symmetric, so ${H}={X}^{+}$.
	
	(b) From (a),
	\begin{equation}\label{A:XtXXp}
		{X}^{\prime}{X}{X}^{+}={V}_{r}{S}_{r}^{2}{V}_{r}^{\prime}{V}_{r}{S}_{r}^{-1}{U}_{r}^{\prime}={V}_{r}{S}_{r}{U}_{r}^{\prime}={X}^{\prime},
	\end{equation}
	and therefore
	\begin{equation}\label{eq:normalequations}
		{X}^{\prime}\widehat{{U}}={X}^{\prime}{Y}-{X}^{\prime}{X}{X}^{+}{Y}={X}^{\prime}{Y}-{X}^{\prime}{Y}={0}.
	\end{equation}
	
	(c) If ${V}_{r}^{\prime}{a}={0}$ then ${X}{a}={U}_{r}{S}_{r}{V}_{r}^{\prime}{a}={0}$. Conversely, if ${X}{a}={0}$ then ${0}={S}_{r}^{-1}{U}_{r}^{\prime}{X}{a}={S}_{r}^{-1}{U}_{r}^{\prime}{U}_{r}{S}_{r}{V}_{r}^{\prime}{a}={V}_{r}^{\prime}{a}$. Hence $\mathcal{N}({X})=\mathcal{N}({V}_{r}^{\prime})$, and for ${w}\in\mathcal{N}({X})$,
	\begin{equation}\label{eq:orthogonal}
		{w}^{\prime}\widehat{{B}}={w}^{\prime}{V}_{r}{S}_{r}^{-1}{U}_{r}^{\prime}{Y}=({V}_{r}^{\prime}{w})^{\prime}{S}_{r}^{-1}{U}_{r}^{\prime}{Y}={0},
	\end{equation}
	that is, every column of $\widehat{{B}}$ is orthogonal to $\mathcal{N}({X})$.
\end{proof}

\subsection{The set of least squares solutions}

\begin{lemma}\label{A:lem2}
	Under \ref{eq:7}, $\mathcal{S}\cap\mathcal{Z}=\emptyset$, (A1) and (A3):
	\begin{enumerate}
		\item[(a)] ${X}{V}={0}$, and
		\begin{equation}\label{A:gram}
			{V}^{\prime}{V}=\operatorname{diag}\bigl(1+\lvert\mathcal{S}\rvert,\;{I}_{\lvert\mathcal{Z}\rvert}\bigr),
		\end{equation}
		so the columns of ${V}$ are mutually orthogonal and linearly independent, and $\operatorname{rank}({X})=K-q=K-1-\lvert\mathcal{Z}\rvert$;
		\item[(b)] the set of least squares solutions is
		\begin{equation}\label{eq:solutionset}
			\mathcal{B}:=\operatorname*{arg\,min}_{{B}}\lVert{Y}-{X}{B}\rVert_{F}^{2}=\bigl\{\widehat{{B}}+{V}{A}:\;{A}\in\mathbb{R}^{q\times N}\bigr\},
		\end{equation}
		and each member of $\mathcal{B}$ corresponds to exactly one ${A}$;
		\item[(c)] ${e}_{j}^{\prime}{V}={0}$ for every $j\in\mathcal{A}$.
	\end{enumerate}
\end{lemma}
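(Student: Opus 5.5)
For part (a), I would verify ${X}{V}={0}$ column by column using \ref{A:rows}. The $t$-th entry of ${X}{v}_{0}$ is ${x}_{t}^{\prime}{e}_{1}-\sum_{m\in\mathcal{S}}{x}_{t}^{\prime}{e}_{k_{m}}=1-\sum_{m\in\mathcal{S}}d_{mt}$, which vanishes for every $t=p+1,\dots,T$ by \ref{eq:7}. For $m\in\mathcal{Z}$, the $t$-th entry of ${X}{e}_{k_{m}}$ is $d_{mt}=0$ by the definition of $\mathcal{Z}$.

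For the Gram matrix I would expand inner products of unit vectors. Since $k_{m}\neq 1$ and the $k_{m}$ are distinct, ${v}_{0}^{\prime}{v}_{0}=1+\lvert\mathcal{S}\rvert$. Next, ${v}_{0}^{\prime}{e}_{k_{m}}=-[m\in\mathcal{S}]=0$ for $m\in\mathcal{Z}$, precisely because $\mathcal{S}\cap\mathcal{Z}=\emptyset$; this is the one place that hypothesis is used. Finally, ${e}_{k_{m}}^{\prime}{e}_{k_{m'}}=\delta_{mm'}$. This gives \ref{A:gram}, which is nonsingular, so the $q$ columns of ${V}$ are linearly independent. Together with (A1) this yields $\dim\mathcal{N}({X})=q$, and rank--nullity then gives $\operatorname{rank}({X})=K-q$.

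For part (b), I would use the standard characterisation: ${B}$ minimises $\lVert{Y}-{X}{B}\rVert_{F}^{2}$ if and only if it satisfies the normal equations ${X}^{\prime}{X}{B}={X}^{\prime}{Y}$. This follows by expanding $\lVert{Y}-{X}\widehat{{B}}-{X}({B}-\widehat{{B}})\rVert_{F}^{2}=\lVert\widehat{{U}}\rVert_{F}^{2}+\lVert{X}({B}-\widehat{{B}})\rVert_{F}^{2}$; the cross term $-2\operatorname{tr}(\widehat{{U}}^{\prime}{X}({B}-\widehat{{B}}))$ is zero by Lemma \ref{A:lem1}(b). Hence ${B}$ is a minimiser if and only if ${X}({B}-\widehat{{B}})={0}$, that is, each column of ${B}-\widehat{{B}}$ lies in $\mathcal{N}({X})$. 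By (A1) and part (a) that space is the column space of ${V}$, so ${B}-\widehat{{B}}={V}{A}$ for some ${A}$. This ${A}$ is unique because ${V}$ has full column rank: explicitly, ${A}=({V}^{\prime}{V})^{-1}{V}^{\prime}({B}-\widehat{{B}})$. Conversely, every ${B}=\widehat{{B}}+{V}{A}$ attains the minimum, since ${X}{V}{A}={0}$.

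For part (c), I would observe that every column of ${V}$ is supported on $\{1\}\cup\{k_{m}:m=1,\dots,M\}$. Since $k_{m}=1+Np+m>1+Np$, this set is disjoint from $\mathcal{A}=\{2,\dots,1+Np\}$, so ${e}_{j}^{\prime}{V}={0}$ for $j\in\mathcal{A}$.

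I do not expect a genuine obstacle here. The only step that needs care is the minimiser characterisation in (b), which rests on Lemma \ref{A:lem1}(b) rather than on the invertibility of ${X}^{\prime}{X}$. The rest is bookkeeping of indices, where the main risk is overlooking that the orthogonality in \ref{A:gram} requires $\mathcal{S}\cap\mathcal{Z}=\emptyset$.
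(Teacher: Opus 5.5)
Your proposal is correct and follows the paper's proof essentially step for step. It uses the same row-wise check that ${X}{V}={0}$, the same Gram-matrix entries (with $\mathcal{S}\cap\mathcal{Z}=\emptyset$ needed only for ${v}_{0}^{\prime}{e}_{k_{m}}=0$), the same decomposition $\lVert{Y}-{X}{B}\rVert_{F}^{2}=\lVert\widehat{{U}}\rVert_{F}^{2}+\lVert{X}({B}-\widehat{{B}})\rVert_{F}^{2}$ with the cross term removed by Lemma \ref{A:lem1}(b), and the same support argument for (c), while your explicit formula ${A}=({V}^{\prime}{V})^{-1}{V}^{\prime}({B}-\widehat{{B}})$ simply spells out the uniqueness the paper deduces from the linear independence of the columns of ${V}$.
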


\begin{proof}
	(a) By \ref{A:rows} and \ref{eq:7}, for every $t=p+1,\dots,T$,
	\begin{equation}\label{A:v0null}
		{x}_{t}^{\prime}{v}_{0}={x}_{t}^{\prime}{e}_{1}-\sum_{m\in\mathcal{S}}{x}_{t}^{\prime}{e}_{k_{m}}=1-\sum_{m\in\mathcal{S}}d_{mt}=0,
	\end{equation}
	and for $m\in\mathcal{Z}$, ${x}_{t}^{\prime}{e}_{k_{m}}=d_{mt}=0$ by the definition of $\mathcal{Z}$. Hence ${X}{V}={0}$. Since ${e}_{1}$ and the ${e}_{k_{m}}$ are distinct unit vectors ($k_{m}\geq 2+Np>1$),
	\begin{align}
		{v}_{0}^{\prime}{v}_{0}&={e}_{1}^{\prime}{e}_{1}+\sum_{m\in\mathcal{S}}\sum_{m^{\prime}\in\mathcal{S}}{e}_{k_{m}}^{\prime}{e}_{k_{m^{\prime}}}=1+\lvert\mathcal{S}\rvert, \label{A:g1}\\
		{v}_{0}^{\prime}{e}_{k_{m}}&={e}_{1}^{\prime}{e}_{k_{m}}-\sum_{m^{\prime}\in\mathcal{S}}{e}_{k_{m^{\prime}}}^{\prime}{e}_{k_{m}}=0-\mathbf{1}\{m\in\mathcal{S}\}=0,\qquad m\in\mathcal{Z}, \label{A:g2}\\
		{e}_{k_{m}}^{\prime}{e}_{k_{m^{\prime}}}&=\mathbf{1}\{m=m^{\prime}\},\qquad m,m^{\prime}\in\mathcal{Z}, \label{A:g3}
	\end{align}
	where \ref{A:g2} uses $\mathcal{S}\cap\mathcal{Z}=\emptyset$; this is the only place where that hypothesis is used. Together these give \ref{A:gram}. As ${V}^{\prime}{V}$ is diagonal with positive entries it is nonsingular, and ${V}{a}={0}$ implies ${V}^{\prime}{V}{a}={0}$ and so ${a}={0}$: the $q$ columns are independent. By (A1) they form a basis of $\mathcal{N}({X})$, so $\dim\mathcal{N}({X})=q$ and, by the rank--nullity theorem, $\operatorname{rank}({X})=K-q=K-1-\lvert\mathcal{Z}\rvert$.
	
	(b) For any ${B}$ put ${\Delta}:={B}-\widehat{{B}}$, so that ${Y}-{X}{B}=\widehat{{U}}-{X}{\Delta}$ and
	\begin{align}
		\lVert{Y}-{X}{B}\rVert_{F}^{2}
		&=\operatorname{tr}\bigl[(\widehat{{U}}-{X}{\Delta})^{\prime}(\widehat{{U}}-{X}{\Delta})\bigr]
		=\lVert\widehat{{U}}\rVert_{F}^{2}-2\operatorname{tr}\bigl({\Delta}^{\prime}{X}^{\prime}\widehat{{U}}\bigr)+\lVert{X}{\Delta}\rVert_{F}^{2} \nonumber\\
		&=\lVert\widehat{{U}}\rVert_{F}^{2}+\lVert{X}{\Delta}\rVert_{F}^{2}, \label{A:cross}
	\end{align}
	the cross term vanishing by \ref{eq:normalequations}. The sum of squares is therefore minimised exactly when ${X}{\Delta}={0}$, that is, when every column of ${\Delta}$ lies in $\mathcal{N}({X})$. By (a) and (A1) this holds if and only if ${\Delta}={V}{A}$ for some ${A}\in\mathbb{R}^{q\times N}$, which is unique because the columns of ${V}$ are independent. In particular $\widehat{{B}}\in\mathcal{B}$ (take ${A}={0}$).
	
	(c) The support of ${v}_{0}$ is $\{1\}\cup\{k_{m}:m\in\mathcal{S}\}$ and that of ${e}_{k_{m}}$ is $\{k_{m}\}$. Since $1<2$ and $k_{m}=1+Np+m>1+Np$, none of these indices belongs to $\mathcal{A}=\{2,\dots,1+Np\}$.
\end{proof}

\subsection{Part 1}

Let $\widetilde{{B}}=\widehat{{B}}+{V}{A}$ be an arbitrary least squares solution. Since ${X}{V}={0}$,
\begin{equation}\label{A:fit}
	{X}\widetilde{{B}}={X}\widehat{{B}}+{X}{V}{A}={X}\widehat{{B}}={X}{X}^{+}{Y}={U}_{r}{U}_{r}^{\prime}{Y},
	\qquad
	\widetilde{{U}}:={Y}-{X}\widetilde{{B}}={Y}-{X}\widehat{{B}}=\widehat{{U}},
\end{equation}
so the fitted values, which are the orthogonal projection of ${Y}$ on the column space of ${X}$, and the residuals are the same for every solution. The covariance estimator of \ref{eq:4} depends on the residuals only,
\begin{equation}\label{A:sigma}
	\frac{\widetilde{{U}}^{\prime}\widetilde{{U}}}{n-K}=\frac{\widehat{{U}}^{\prime}\widehat{{U}}}{n-K}=\widehat{\Sigma}_{u},
\end{equation}
and is invariant. Since the first column of ${X}$ is the vector of ones $\boldsymbol{\iota}_{n}$, the first row of \ref{eq:normalequations} reads
\begin{equation}\label{A:mean}
	\boldsymbol{\iota}_{n}^{\prime}\widehat{{U}}={0}^{\prime},\qquad\text{that is}\qquad \overline{\widehat{{u}}}=\frac{1}{n}\sum_{t=p+1}^{T}\widehat{{u}}_{t}={0},
\end{equation}
so the residuals have mean zero exactly.

\emph{Uniqueness of the Cholesky factor.} Suppose $\widehat{\Sigma}_{u}={P}_{1}{P}_{1}^{\prime}={P}_{2}{P}_{2}^{\prime}$ with ${P}_{1},{P}_{2}$ lower triangular with strictly positive diagonals, hence invertible. Multiplying ${P}_{1}{P}_{1}^{\prime}={P}_{2}{P}_{2}^{\prime}$ on the left by ${P}_{2}^{-1}$ and on the right by $({P}_{1}^{\prime})^{-1}$,
\begin{equation}\label{A:chol1}
	{D}:={P}_{2}^{-1}{P}_{1}={P}_{2}^{\prime}({P}_{1}^{\prime})^{-1}.
\end{equation}
The inverse of a lower triangular matrix is lower triangular and the product of lower triangular matrices is lower triangular, so the left-hand side is lower triangular; by the same argument for upper triangular matrices the right-hand side is upper triangular. Hence ${D}$ is diagonal, with entries $D_{ii}=[{P}_{1}]_{ii}/[{P}_{2}]_{ii}>0$. Moreover
\begin{equation}\label{A:chol2}
	{D}{D}^{\prime}={P}_{2}^{-1}{P}_{1}{P}_{1}^{\prime}({P}_{2}^{\prime})^{-1}={P}_{2}^{-1}\widehat{\Sigma}_{u}({P}_{2}^{\prime})^{-1}={P}_{2}^{-1}{P}_{2}{P}_{2}^{\prime}({P}_{2}^{\prime})^{-1}={I}_{N},
\end{equation}
so $D_{ii}^{2}=1$ and $D_{ii}>0$ give ${D}={I}_{N}$ and ${P}_{1}={P}_{2}$. The factor ${P}$ of \ref{eq:13} is thus a function of $\widehat{\Sigma}_{u}$ alone and is invariant, and so are the structural shocks $\widehat{\boldsymbol{\varepsilon}}_{t}={P}^{-1}\widehat{{u}}_{t}$. By \ref{A:mean} their sample mean is zero, and their sample second moment, with the divisor of \ref{eq:4}, is
\begin{equation}\label{A:epsmoment}
	\frac{1}{n-K}\sum_{t=p+1}^{T}\widehat{\boldsymbol{\varepsilon}}_{t}\widehat{\boldsymbol{\varepsilon}}_{t}^{\prime}
	={P}^{-1}\frac{\widehat{{U}}^{\prime}\widehat{{U}}}{n-K}({P}^{\prime})^{-1}
	={P}^{-1}{P}{P}^{\prime}({P}^{\prime})^{-1}={I}_{N},
\end{equation}
which is the identity reported after \ref{eq:13}; in particular $\sigma_{j}=1$ in \ref{eq:16}.

\emph{Autoregressive matrices and impulse responses.} By Lemma \ref{A:lem2}(c), for every $j\in\mathcal{A}$,
\begin{equation}\label{A:ar}
	{e}_{j}^{\prime}\widetilde{{B}}={e}_{j}^{\prime}\widehat{{B}}+({e}_{j}^{\prime}{V}){A}={e}_{j}^{\prime}\widehat{{B}},
	\qquad\text{hence}\qquad
	\widetilde{\Phi}_{l}=\widehat{\Phi}_{l},\quad l=1,\dots,p .
\end{equation}
The recursion \ref{eq:14} generates $\Xi_{h}$ from $\Xi_{0}={I}_{N}$ and $\Phi_{1},\dots,\Phi_{p}$ only. By induction on $h$: $\Xi_{0}$ does not depend on the solution; if $\Xi_{0},\dots,\Xi_{h-1}$ do not, then neither does $\Xi_{h}=\sum_{l=1}^{\min(h,p)}\Phi_{l}\Xi_{h-l}$, by \ref{A:ar}. By \ref{eq:15}, $\Theta_{h}=\Xi_{h}{P}$ is invariant for every $h=0,1,\dots,H$. This proves part 1.

\begin{remark}\label{A:rem1}
	The invariance of the fitted values, residuals, $\widehat{\Sigma}_{u}$, ${P}$ and $\widehat{\boldsymbol{\varepsilon}}_{t}$ follows from \ref{A:cross} alone and holds for any two least squares solutions whatever $\mathcal{N}({X})$ is. The invariance of $\Phi_{l}$, and hence of $\Theta_{h}$, is where (A1) is needed: a null direction with a nonzero entry in $\mathcal{A}$ would make $\Phi_{l}$ indeterminate. (A1) secures that the dynamics are point identified and that only the split of the deterministic level is not.
\end{remark}

\subsection{Part 2}

A linear function ${\lambda}^{\prime}{B}\in\mathbb{R}^{1\times N}$, ${\lambda}\in\mathbb{R}^{K}$, is called identified by the data if ${\lambda}^{\prime}{B}_{1}={\lambda}^{\prime}{B}_{2}$ whenever ${X}{B}_{1}={X}{B}_{2}$, that is, if it is determined by the fitted values.

\emph{Observational equivalence.} Two parameter matrices with ${X}{B}_{1}={X}{B}_{2}$ are indistinguishable, and this holds not only on the observed sample. By Lemma \ref{A:lem2}(c) the rows of ${V}{A}$ in $\mathcal{A}$ are zero, so, writing ${a}_{0}^{\prime}$ for the first row of ${A}$ and ${a}_{m}^{\prime}$ for the row that multiplies ${e}_{k_{m}}$, for every $t=p+1,\dots,T$ and \emph{every} value of the lagged endogenous variables,
\begin{equation}\label{A:obsequiv}
	{x}_{t}^{\prime}{V}{A}=\Bigl(1-\sum_{m\in\mathcal{S}}d_{mt}\Bigr){a}_{0}^{\prime}+\sum_{m\in\mathcal{Z}}d_{mt}\,{a}_{m}^{\prime}={0}^{\prime}.
\end{equation}
Hence ${B}$ and ${B}+{V}{A}$ assign the same conditional mean ${x}_{t}^{\prime}{B}$ to ${y}_{t}$ given its past in every period and on every path, and, with the same $\Sigma_{u}$, generate the same distribution of $({y}_{p+1},\dots,{y}_{T})$ given the initial values and the intervention variables, whatever the distribution of ${u}_{t}$.

\emph{Criterion.} By Lemma \ref{A:lem2}, ${X}{B}_{1}={X}{B}_{2}$ if and only if ${B}_{1}-{B}_{2}={V}{A}$ for some ${A}$. If ${\lambda}^{\prime}{V}={0}$, then ${\lambda}^{\prime}({B}+{V}{A})={\lambda}^{\prime}{B}$ for every ${A}$. If ${\lambda}^{\prime}{V}\neq{0}$, let ${v}$ be a column of ${V}$ with ${\lambda}^{\prime}{v}\neq 0$ and take ${A}$ with an arbitrary row ${a}^{\prime}$ in the position of ${v}$ and zeros elsewhere; then
\begin{equation}\label{A:notid}
	{\lambda}^{\prime}({B}+{V}{A})-{\lambda}^{\prime}{B}=({\lambda}^{\prime}{v})\,{a}^{\prime},
\end{equation}
which ranges over the whole of $\mathbb{R}^{1\times N}$ as ${a}$ varies. Therefore
\begin{equation}\label{A:crit}
	{\lambda}^{\prime}{B}\ \text{is identified}
	\iff {\lambda}^{\prime}{V}={0}
	\iff {\lambda}\perp\mathcal{N}({X})
	\iff {\lambda}\in\operatorname{col}({X}^{\prime}),
\end{equation}
the second equivalence by (A1) and the third because $\operatorname{col}({X}^{\prime})=\mathcal{N}({X})^{\perp}$. Written out over the columns of ${V}$, using ${\lambda}^{\prime}{v}_{0}=\lambda_{1}-\sum_{m\in\mathcal{S}}\lambda_{k_{m}}$ and ${\lambda}^{\prime}{e}_{k_{m}}=\lambda_{k_{m}}$, the condition is
\begin{equation}\label{eq:estimability}
	\lambda_{1}=\sum_{m\in\mathcal{S}}\lambda_{k_{m}}
	\qquad\text{and}\qquad
	\lambda_{k_{m}}=0\quad\text{for every }m\in\mathcal{Z},
\end{equation}
and it imposes nothing on the coordinates in $\mathcal{A}$.

\emph{Reading off the identified quantities.} Applying \ref{eq:estimability}:
\begin{enumerate}
	\item[(i)] every entry of $\Phi_{1},\dots,\Phi_{p}$ corresponds to ${\lambda}={e}_{j}$, $j\in\mathcal{A}$, with $\lambda_{1}=0=\sum_{m\in\mathcal{S}}\lambda_{k_{m}}$: identified, in agreement with part 1;
	\item[(ii)] for $m\in\mathcal{S}$, ${\lambda}={e}_{1}+{e}_{k_{m}}$ gives $\lambda_{1}=1=\sum_{m^{\prime}\in\mathcal{S}}\lambda_{k_{m^{\prime}}}$: ${c}+\boldsymbol{\psi}_{m}$ is identified;
	\item[(iii)] ${\lambda}={e}_{1}$ gives $1\neq 0$, and ${\lambda}={e}_{k_{m}}$, $m\in\mathcal{S}$, gives $0\neq 1$: neither ${c}$ nor $\boldsymbol{\psi}_{m}$, $m\in\mathcal{S}$, is identified on its own;
	\item[(iv)] for $m,m^{\prime}\in\mathcal{S}$, ${\lambda}={e}_{k_{m}}-{e}_{k_{m^{\prime}}}$ gives $0=1-1$: $\boldsymbol{\psi}_{m}-\boldsymbol{\psi}_{m^{\prime}}$ is identified;
	\item[(v)] for a nested regime $m\notin\mathcal{S}\cup\mathcal{Z}$, ${\lambda}={e}_{k_{m}}$ gives $0=0$, while ${\lambda}={e}_{1}+{e}_{k_{m}}$ gives $1\neq 0$: $\boldsymbol{\psi}_{m}$ is identified on its own and ${c}+\boldsymbol{\psi}_{m}$ is not;
	\item[(vi)] for $m\in\mathcal{Z}$, ${e}_{k_{m}}$ violates the second condition and ${e}_{1}+{e}_{k_{m}}$ the first; more strongly, for any ${\boldsymbol{\psi}}^{\ast}\in\mathbb{R}^{N}$ the matrix $\widehat{{B}}+{e}_{k_{m}}({\boldsymbol{\psi}}^{\ast}-\widehat{\boldsymbol{\psi}}_{m})^{\prime}\in\mathcal{B}$ has $\boldsymbol{\psi}^{\ast}$ in row $k_{m}$, so nothing whatever is identified about $\boldsymbol{\psi}_{m}$;
	\item[(vii)] for every period $t=p+1,\dots,T$ the deterministic level
	\begin{equation}\label{A:deltat}
		\boldsymbol{\delta}_{t}:={c}+\Psi{d}_{t}={c}+\sum_{m=1}^{M}\boldsymbol{\psi}_{m}d_{mt}
	\end{equation}
	corresponds to ${\lambda}={e}_{1}+\sum_{m}d_{mt}{e}_{k_{m}}$, for which $\lambda_{1}=1=\sum_{m\in\mathcal{S}}d_{mt}$ by \ref{eq:7} and $\lambda_{k_{m}}=d_{mt}=0$ for $m\in\mathcal{Z}$: it is identified. In a period in which a regime $m^{\prime}\in\mathcal{S}$ and a regime $m$ nested in it are both in force, $\boldsymbol{\delta}_{t}=({c}+\boldsymbol{\psi}_{m^{\prime}})+\boldsymbol{\psi}_{m}$, the sum of the quantities identified in (ii) and (v).
\end{enumerate}
When $\mathcal{S}\cup\mathcal{Z}=\{1,\dots,M\}$, as for disjoint and exhaustive regimes, (v) is void and the identified deterministic quantities are exactly ${c}+\boldsymbol{\psi}_{m}$, $m\notin\mathcal{Z}$. When one indicator has to be left out of $\mathcal{S}$, as in the French dating, (v) applies to it. This proves part 2.

\emph{Stationary means.} If ${I}_{N}-\sum_{l=1}^{p}\Phi_{l}$ is nonsingular, the implied stationary mean \ref{eq:20} of a regime $m\in\mathcal{S}$,
\begin{equation}\label{A:mu}
	\boldsymbol{\mu}_{m}=\Bigl({I}_{N}-\sum_{l=1}^{p}\Phi_{l}\Bigr)^{-1}({c}+\boldsymbol{\psi}_{m}),
\end{equation}
is a product of a matrix identified by (i) and a vector identified by (ii), and is invariant. For a period in which a nested regime $m$ is in force inside $m^{\prime}\in\mathcal{S}$, the invariant object is $\bigl({I}_{N}-\sum_{l}\Phi_{l}\bigr)^{-1}({c}+\boldsymbol{\psi}_{m^{\prime}}+\boldsymbol{\psi}_{m})$, and in general $\bigl({I}_{N}-\sum_{l}\Phi_{l}\bigr)^{-1}\boldsymbol{\delta}_{t}$ for any $t$, by (vii); the expression \ref{eq:20} evaluated with ${c}+\boldsymbol{\psi}_{m}$ for the nested $m$ alone is not invariant.

\subsection{Part 3}

By Lemma \ref{A:lem1}(c) applied to each column of ${V}\subset\mathcal{N}({X})$,
\begin{equation}\label{eq:Vorthogonality}
	{V}^{\prime}\widehat{{B}}={0}.
\end{equation}
Let $\widetilde{{B}}=\widehat{{B}}+{V}{A}\in\mathcal{B}$. Then
\begin{align}
	\lVert\widetilde{{B}}\rVert_{F}^{2}
	&=\lVert\widehat{{B}}\rVert_{F}^{2}+2\operatorname{tr}\bigl(\widehat{{B}}^{\prime}{V}{A}\bigr)+\lVert{V}{A}\rVert_{F}^{2}
	=\lVert\widehat{{B}}\rVert_{F}^{2}+2\operatorname{tr}\bigl(({V}^{\prime}\widehat{{B}})^{\prime}{A}\bigr)+\operatorname{tr}\bigl({A}^{\prime}{V}^{\prime}{V}{A}\bigr) \nonumber\\
	&=\lVert\widehat{{B}}\rVert_{F}^{2}+(1+\lvert\mathcal{S}\rvert)\lVert{a}_{0}\rVert^{2}+\sum_{m\in\mathcal{Z}}\lVert{a}_{m}\rVert^{2}, \label{A:pyth}
\end{align}
by \ref{eq:Vorthogonality} and \ref{A:gram}, where ${a}_{0}^{\prime},({a}_{m}^{\prime})_{m\in\mathcal{Z}}$ are the rows of ${A}$. The last two terms are nonnegative and vanish only if ${A}={0}$. Hence $\widehat{{B}}$ is the unique least squares solution of smallest Frobenius norm. It is also the unique solution orthogonal to $\mathcal{N}({X})$: if ${V}^{\prime}\widetilde{{B}}={0}$ then, by \ref{eq:Vorthogonality}, ${V}^{\prime}{V}{A}={0}$ and so ${A}={0}$ because ${V}^{\prime}{V}$ is nonsingular.

Writing ${V}^{\prime}\widetilde{{B}}={0}$ row by row, with \ref{A:rows},
\begin{equation}\label{A:rowwise}
	{v}_{0}^{\prime}\widetilde{{B}}=\widetilde{{c}}^{\prime}-\sum_{m\in\mathcal{S}}\widetilde{\boldsymbol{\psi}}_{m}^{\prime}={0}^{\prime},
	\qquad
	{e}_{k_{m}}^{\prime}\widetilde{{B}}=\widetilde{\boldsymbol{\psi}}_{m}^{\prime}={0}^{\prime},\quad m\in\mathcal{Z},
\end{equation}
so orthogonality to $\mathcal{N}({X})$ is equivalent to \ref{eq:12}, and $\widehat{{B}}$ is the unique member of $\mathcal{B}$ satisfying \ref{eq:12}. Because the columns of ${V}$ are mutually orthogonal, the two sets of conditions do not interact: the constant equals the sum of the coefficients of the regimes in the partition alone, and the coefficients of the empty regimes are set to zero rather than absorbed into it.

\emph{Explicit form.} For any $\widetilde{{B}}\in\mathcal{B}$, premultiplying $\widetilde{{B}}=\widehat{{B}}+{V}{A}$ by ${V}^{\prime}$ and using \ref{eq:Vorthogonality} gives ${A}=({V}^{\prime}{V})^{-1}{V}^{\prime}\widetilde{{B}}$, so
\begin{equation}\label{A:proj}
	\widehat{{B}}=\bigl({I}_{K}-{V}({V}^{\prime}{V})^{-1}{V}^{\prime}\bigr)\widetilde{{B}},
\end{equation}
and by \ref{A:gram}, with $\widetilde{\boldsymbol{\ell}}:=\bigl(\widetilde{{c}}-\sum_{m\in\mathcal{S}}\widetilde{\boldsymbol{\psi}}_{m}\bigr)/(1+\lvert\mathcal{S}\rvert)$,
\begin{equation}\label{A:explicit}
	\widehat{{c}}=\widetilde{{c}}-\widetilde{\boldsymbol{\ell}},\qquad
	\widehat{\boldsymbol{\psi}}_{m}=\widetilde{\boldsymbol{\psi}}_{m}+\widetilde{\boldsymbol{\ell}}\ \ (m\in\mathcal{S}),\qquad
	\widehat{\boldsymbol{\psi}}_{m}={0}\ \ (m\in\mathcal{Z}),\qquad
	\widehat{\boldsymbol{\psi}}_{m}=\widetilde{\boldsymbol{\psi}}_{m}\ \ (m\notin\mathcal{S}\cup\mathcal{Z}),
\end{equation}
the autoregressive rows being unchanged. In particular, taking for $\widetilde{{B}}$ the solution with $\widetilde{{c}}={0}$, in which $\widetilde{\boldsymbol{\psi}}_{m}={c}+\boldsymbol{\psi}_{m}=:\boldsymbol{\ell}_{m}$ are the identified regime levels of part 2,
\begin{equation}\label{A:chat}
	\widehat{{c}}=\frac{1}{1+\lvert\mathcal{S}\rvert}\sum_{m\in\mathcal{S}}\boldsymbol{\ell}_{m},
	\qquad
	\widehat{\boldsymbol{\psi}}_{m}=\boldsymbol{\ell}_{m}-\widehat{{c}},\quad m\in\mathcal{S}.
\end{equation}
This completes the proof. \hfill$\square$

\subsection{Consequences}

\emph{The multiplier.} By \ref{A:epsmoment}, $\sigma_{j}=1$, so the scaled responses \ref{eq:16} are $\theta^{(\delta)}_{i,j,h}=\delta[\Theta_{h}]_{ij}$ and are invariant by part 1. The multiplier \ref{eq:multiplier} is a function of $\Theta_{0},\dots,\Theta_{H}$ and $\delta$ only, and is therefore invariant for every variable, horizon and impulse size at which its denominator does not vanish. The nonlinearity in $\delta$ of Section \ref{s3.6} is a property of the estimated system and not of the normalisation.

\emph{The bootstrap.} Each pseudo-sample \ref{eq:19} is generated from $\widehat{{c}}+\widehat{\Psi}{d}_{t}=\widehat{\boldsymbol{\delta}}_{t}$, identified by part 2(vii), from $\widehat{\Phi}_{l}$, identified by part 1, and from resampled residuals, which are invariant; by \ref{A:mean} the centring $\widehat{{u}}^{c}_{t}=\widehat{{u}}_{t}-\overline{\widehat{{u}}}$ leaves them unchanged. Conditionally on the resampling indices the pseudo-samples are therefore the same whichever least squares solution is used to generate them. As the intervention variables are held fixed, \ref{A:v0null} holds for the design ${X}^{(b)}$ of every replication whatever the values of ${y}^{(b)}_{t}$, and so does ${X}^{(b)}{e}_{k_{m}}={0}$ for $m\in\mathcal{Z}$; hence ${V}\subset\mathcal{N}({X}^{(b)})$, every replication is subject to the same rank deficiency and to the same normalisation \ref{eq:12}, and, whenever (A1) holds for ${X}^{(b)}$ (equivalently $\operatorname{rank}{X}^{(b)}=K-1-\lvert\mathcal{Z}\rvert$), part 1 applies to it and the resampled responses are comparable across replications.

\emph{The divisor in \ref{eq:4}.} With a rank-deficient design the residual sum of squares has $n-\operatorname{rank}({X})=n-K+1+\lvert\mathcal{Z}\rvert$ degrees of freedom. Let $\widehat{\Sigma}_{u}^{\ast}:=\widehat{{U}}^{\prime}\widehat{{U}}/(n-K+1+\lvert\mathcal{Z}\rvert)$ and
\begin{equation}\label{A:kappa}
	\kappa:=\Bigl(\frac{n-K}{n-K+1+\lvert\mathcal{Z}\rvert}\Bigr)^{1/2}\in(0,1),
	\qquad\text{so that}\qquad \widehat{\Sigma}_{u}^{\ast}=\kappa^{2}\widehat{\Sigma}_{u}.
\end{equation}
The matrix $\kappa{P}$ is lower triangular with positive diagonal and $(\kappa{P})(\kappa{P})^{\prime}=\kappa^{2}\widehat{\Sigma}_{u}$, so by the uniqueness of the Cholesky factor ${P}^{\ast}=\kappa{P}$ and $\Theta_{h}^{\ast}=\Xi_{h}{P}^{\ast}=\kappa\Theta_{h}$. Both choices are invariant to the normalisation, being functions of the invariant residuals, but they do not give the same multiplier. Writing $m_{i}(H,\delta;\Theta)$ for \ref{eq:multiplier} computed from $\Theta_{0},\dots,\Theta_{H}$,
\begin{equation}\label{A:multscale}
	m_{i}(H,\delta;\kappa\Theta)
	=\frac{\prod_{h=0}^{H}\bigl(1+\delta\kappa[\Theta_{h}]_{ij}\bigr)}{\prod_{h=0}^{H}\bigl(1+\delta\kappa[\Theta_{h}]_{jj}\bigr)}
	=m_{i}(H,\kappa\delta;\Theta),
\end{equation}
so replacing the divisor is equivalent to evaluating the multiplier at the impulse $\kappa\delta$ instead of $\delta$. By \ref{eq:18}, $m_{i}(H,0;\Theta)=1$ and
\begin{equation}\label{A:deriv}
	\frac{\partial m_{i}(H,\delta;\Theta)}{\partial\delta}\bigg|_{\delta=0}=\sum_{h=0}^{H}[\Theta_{h}]_{ij}-\sum_{h=0}^{H}[\Theta_{h}]_{jj},
\end{equation}
which is nonzero in general, so $m_{i}$ is not constant in $\delta$ and the rescaling does not cancel between numerator and denominator. In this paper the divisor $n-K=T-p-K$ is used, which gives the unit sample covariance of the structural shocks in \ref{A:epsmoment}.
\end{document}